\definecolor{webgreen}{rgb}{0,0.4,0}
\definecolor{webbrown}{rgb}{0.6,0,0}
\definecolor{purple}{rgb}{0.5,0,0.25}
\definecolor{darkblue}{rgb}{0,0,0.7}
\definecolor{darkred}{rgb}{0.7,0,0}
\definecolor{darkgreen}{rgb}{0,0.7,0}
\newcommand{\ignore}[1]{}
\newtheorem{lemma}{{\sc Lemma}}
\newtheorem{prop}{{\sc Proposition}}
\newtheorem{cor}{{\sc Corollary}}
\newtheorem{theorem}{{\sc Theorem}}
\newtheorem{defn}{{\sc Definition}}
\newtheorem{fact}{{\sc Fact}}
\newenvironment{proof}{\noindent {\bf \sl Proof\/}:\enspace}
{\hfill $\blacksquare{}$ \vspace{12pt}}
\begin{document}

\title{\textbf{\Large Balanced Ranking Mechanisms}~\thanks{
We are grateful to Dilip Abreu, Attila Ambrus, Claude d'Aspremont, Sushil Bikhchandani,
Rahul Deb, Bhaskar Dutta, Ed Green, Matt Jackson, Eric Maskin, Motty Perrry, Arunava
Sen, Ricard Torres, Levent Ulku, Dan Vincent and seminar participants at
ITAM, ISI Delhi, and Delhi School of Economics for useful comments.}}
\author{Debasis Mishra and Tridib Sharma~\thanks{%
Debasis Mishra: Indian Statistical Institute, Delhi, \texttt{%
dmishra@isid.ac.in; dmishra@gmail.com}; Tridib Sharma: ITAM, Mexico, \texttt{%
sharma@itam.mx}}}
\date{{\small {\today}}}
\maketitle

\begin{abstract}
In the private values single object auction model, we construct a \emph{%
satisfactory} mechanism - a symmetric, dominant strategy incentive
compatible, and budget-balanced mechanism. Our mechanism allocates the
object to the highest valued agent with more than 99\% probability provided
there are at least 14 agents. It is also ex-post individually rational. We
show that our mechanism is optimal in a restricted class of satisfactory 
\emph{ranking} mechanisms. Since achieving efficiency through a dominant
strategy incentive compatible and budget-balanced mechanism is impossible in
this model, our results illustrate the limits of this impossibility. \newline
\newline

\noindent \textsc{Keywords.} budget-balanced mechanisms, Green-Laffont
mechanism, Pareto optimal mechanism. \newline

\noindent \textsc{JEL Keywords.} D82, D71, D02.
\end{abstract}

\newpage

\section{Introduction}

How should a group of agents allocate a unit of resource among themselves?
For instance, consider the problem of allocating a bequest among a group of
potential heirs. Many a times, no will exists. 
Even when a will exists, disputes arise.
Designated estate agents are often employed to resolve bequest related
problems. A Wall Street Journal article quotes an expert suggesting the
following dispute resolution procedure:

\begin{quote}
\textsl{In family disputes, Ms. Olsavsky says, one option is to have all the
items put up for auction. Family members can bid on what they want. The
money goes back to the estate to be divided equally \citep{wsj}.}
\end{quote}

There are a number of other examples: a group of firms sharing time slots on
a jointly owned supercomputer \citep{Guo11}; a
group of municipalities deciding on the location of a stadium \citep{Cramton87}. A key feature
of these problems is that transfers can be used (either as taxes or
subsidies) for resource allocation. However, transfers across agents have to
balance - money raised by auctioning a bequest must be redistributed among
the heirs.~\footnote{Commenting on the recent controversy surrounding
the allocation of soccer World cup venue, Rakesh Vohra in the popular blog \emph{The Leisure of the Theory Class} writes:
\textsl{Instead of running beauty contests to decide where to hold FIFA
events, auction off the right to the highest bidder. This can be done in two
ways. Allow each FIFA official with a vote to auction off their vote to the
highest bidder. Or, do away with the officials altogether and have countries
bid directly for the right to hold FIFA events. Full transparency, no
bribery and FIFA may be richer than before!~\citep{ltc}}
Just like the bequest settlement case, FIFA must redistribute the transfers
collected from the countries among them.}

We design mechanisms for such problems with the aim of achieving efficiency.
Efficiency requires one to allocate the bequest to the highest valued heir
or to allocate the world cup venue to the country which benifits the most
from hosting the event. In the standard private values model, where each
agent has a value for the unit of resource/object and transfers are allowed
with quasilinear utility, the Vickrey auction satisfies three compelling
desiderata of a mechanism: (a) dominant strategy incentive compatibility
(DSIC), (b) (allocative) efficiency - allocating the object to the highest
valuation agent, and (c) ex-post individual rationality. A well-known
criticism of the Vickrey auction is that it is not budget-balanced - it
collects revenue from the agents, which distorts ex-post efficiency. \cite%
{Green79} shows that this criticism applies to every DSIC and efficient
mechanism: no DSIC and efficient mechanism can be budget-balanced. We look
for a second-best solution, where we explore the limits of this
impossibility result:

\begin{quote}
\textsl{How close to efficiency can we get using a DSIC and budget-balanced
mechanism?}
\end{quote}

We require our solution to satisfy symmetry - agents with identical
valuation must get the object with equal probability and pay the same
amount. Symmetry is a compelling fairness property - for instance, in the
bequest allocation problem, an asymmetric mechanism may either be
unacceptable to potential heirs or lead to unpleasant lawsuits later on.

We identify a class of DSIC, budget-balanced, and symmetric mechanisms that
we call \emph{ranking mechanisms}. A ranking mechanism is one that uses a 
\emph{ranking} allocation rule, which is specified (for $n$ agents) by $n$
numbers $(\pi _{1},\ldots ,\pi _{n})$ between $0$ and $1$ such that they add
up to not more than $1$ and $\pi _{j}\geq \pi _{j+1}$ for each $j$. For
every $j$, the number $\pi _{j}$ is the probability with which an agent with
the $j$-th highest value is allocated the object at any generic profile of
values. Our main result is a description of the \emph{r-optimal} mechanism -
a DSIC, budget-balanced, and symmetric ranking mechanism that beats every
such mechanism in terms of the allocation probability to the highest
valuation agent.

At every profile of values, our r-optimal mechanism allocates the object to
the highest valued agent with more than 99\% probability, provided there are
at least 14 agents. It is also ex-post individually rational. The welfare
generated by the r-optimal mechanism converges to efficiency as the number
of agents increase. The nature of convergence is shown in Table \ref{tab:eff}%
, where we report on the probability with which the highest valued agent
gets the object in our mechanism. 
\begin{table}[!htb]
\centering
\begin{tabular}{c|c}
No of agents & Probability to the highest valued agent \\ \hline
$9$ & $92.3\%$ \\ 
$10$ & $95\%$ \\ 
$11$ & $96.2\%$ \\ 
$12$ & $98.1\%$ \\ 
$13$ & $98.9\%$ \\ 
$14$ & $99.4\%$ \\ 
$15$ & $99.6\%$ \\ 
$16$ & $99.8\%$ \\ 
$17$ & $99.9\%$%
\end{tabular}%
\caption{Convergence in our mechanism}
\label{tab:eff}
\end{table}

The r-optimal mechanism we identify satisfies ex-post individual rationality.
Ex-post individual rationality is a desired property of mechanisms.
Consider politicians across municipalities or countries, involved in
procuring a public facility or negotiating the venue of an international
sporting event. Failure to get the facility or the event would result in
criticisms. The criticism would be compounded if in addition to a failure,
payments also have to be made. Political opponents could well allege
corruption.

Ranking mechanisms contain two familiar DSIC, budget-balanced, and symmetric
mechanisms: (i) the mechanism that allocates the object to each agent with
equal probability without using any transfers and (ii) the \emph{residual
claimant} mechanism in \citet{Green79}. The residual claimant mechanism is
defined by choosing an agent uniformly at random as a residual claimant and
conducting a Vickrey auction among the other agents. The revenue generated
from the auction is then given to the residual claimant. We refer to this
mechanism as the Green-Laffont (GL) mechanism, and note that at profiles of
distinct values, it allocates the object to the highest valued agent with
probability $1-1/n$ and to the second highest valued agent with probability $%
1/n$.~\footnote{%
This mechanism (and its variants) were discussed in the context of
public-good provision problem in \citet{Green79}. Later, \citet{Gary00}
formally define this mechanism and study its statistical and strategic
properties.} Our r-optimal mechanism coincides with the GL mechanism if the
number of agents is no more than $8$ but differs from it significantly for
more than $8$ agents.

Our analysis is prior-free. We use DSIC as our solution concept. As we
discuss later in Section \ref{sec:lit}, \citet{Cramton87} show that Bayesian
incentive compatible, efficient, and budget-balanced mechanisms satisfying
a form of individual rationality exists in our model. While the mechanism they
propose require information about beliefs of agents (with common prior assumption), our result shows the level of efficiency that can be achieved using DSIC and budget-balanced mechanisms, thus showing
the limits of such a prior-free and robust approach in this problem.
Inspired by the seminal work of \citet{Bergemann05}, recent literature in
mechanism design has been investigating such questions in other models~%
\citep{Chung07,Carroll15}.~\footnote{%
There are two recent papers which also provide foundational results of DSIC
mechanisms in the private values single object auction environment. %
\citet{Manelli10} show that in such models, for every Bayesian incentive
compatible mechanism, there is an \textquotedblleft equivalent" DSIC
mechanism - this equivalence is in terms of interim expected utility of
agents. This result is extended to other settings in \citet{Gershkov13}.
Unlike our work, these papers do not impose budget-balance as a constraint -
indeed, these equivalence results do not hold if budget-balance constraint
is imposed.}

In view of the Green and Laffont impossibility result, comparing efficiency
levels of two DSIC, budget-balanced, and symmetric mechanisms is a natural
question. The notion we use here compares ranking mechanisms by the
probability with which the highest valued agent gets the object. Formally,
we show that this notion coincides with a \emph{worst-case} measure of
efficiency: the worst-case ratio of welfare generated by a ranking mechanism
and efficient level of welfare. In a prior-free environment, such worst-case
measures give a very robust method of comparing mechanisms. These measures
are widely used to compare algorithms in the computer science literature,
and in the algorithmic game theory literature~\citep{Cavallo06,Guo09}. They
are also becoming popular in the mechanism design literature~%
\citep{Chung07,Moulin09,Carroll15,Masso15}.

From a technical point, our paper extends the Myersonian approach. Recall that \citet{Myerson81} provides
necessary and sufficient conditions for a mechanism to be DSIC.~\footnote{His
characterization is for Bayesian incentive compatible mechanisms, but
can be straightforwardly adapted to DSIC mechanisms.} We extend his characterization
to give necessary and sufficient conditions for a mechanism to be DSIC, budget-balanced,
and symmetric. One of the surprising corollaries of this characterization is that if there is a DSIC, budget-balanced,
and symmetric mechanism using an allocation rule, then it is the only such mechanism
using this allocation rule. A consequence of this result is that the search 
over the domain of DSIC, budget-balanced, and symmetric mechanisms can be confined to the domain
of allocation rules satisfying our necessary and sufficient conditions - we do not have
to worry about payments since they are identified uniquely. 
Our characterization
reveals a rich but complex class of such mechanisms. The ranking mechanisms
that we consider in this paper are much simpler to describe. 
The separation of payment and allocation decisions gives
us a lot of tractability in the class of ranking allocation rules, where we derive our
mechanism and show its constrained optimality. Though we do
not know if we can improve upon our r-optimal mechanism, by considering more
complex mechanisms, the overwhelming speed of convergence of our mechanism
(as shown in Table \ref{tab:eff}) implies that we may not be losing out much
by restricting attention to ranking mechanisms.

The rest of the paper is organized as follows. We present our model in
Section \ref{sec:mod}. We introduce ranking mechanisms and discuss our main
results in Section \ref{sec:rank}. We give a technical characterization of
DSIC, budget-balanced, and symmetric mechanisms in Section \ref{sec:proof}.
We relate our results to the literature in Section \ref{sec:lit} and
conclude in Section \ref{sec:conc}. All the omitted proofs are relegated to
an Appendix at the end. To keep the proofs of our results lucid, we present
them in a different sequence than the sequence in which corresponding
results appear in the main text. Hence, we recommend that the proofs be read
after reading the main text.

\section{The Model}

\label{sec:mod}

We consider the standard single object independent private values model with 
$N=\{1,\ldots,n\}$ as the set of agents. Throughout, we assume that $n \ge 3$
- the $n=1$ case is trivial and the $n=2$ case is discussed later. Each
agent $i \in N$ has a valuation $v_i$ for the object. If he is given $%
\alpha_i \in [0,1]$ of the object, or given the object with probability $%
\alpha_i$, and he pays $p_i$ for it, then his net utility is $\alpha_i v_i -
p_i$. The set of all valuations for any agent is given by $V \equiv
[0,\beta] $, where $\beta \in \mathbb{R}$. A valuation profile will be
denoted by $\mathbf{v} \equiv (v_1,\ldots,v_n)$.

An \textbf{allocation rule} is a map $f: V^n \rightarrow [0,1]^n$, where we
denote by $f_i(\mathbf{v})$ the probability of agent $i$ getting allocated
the object at valuation profile $\mathbf{v}$. We assume that at all $\mathbf{%
v} \in V^n$, $\sum_{i \in N}f_i(\mathbf{v}) \le 1$.

A \textbf{payment rule} of agent $i$ is a map $p_i:V^n \rightarrow \mathbb{R}
$. A collection of payment rules of all the agents will be denoted by $%
\mathbf{p} \equiv (p_1,\ldots,p_n)$. A \textbf{mechanism} is a pair $(f,%
\mathbf{p})$. We require our mechanism to satisfy the following three
properties:

\begin{itemize}
\item A mechanism $(f,\mathbf{p})$ is \textbf{dominant strategy incentive
compatible (DSIC)} if for every $i \in N$, for every $v_{-i} \in V^n$, and
for every $v_i,v^{\prime }_i \in V$, we have 
\begin{equation*}
v_if_i(v_i,v_{-i}) - p_i(v_i,v_{-i}) \ge v_if_i(v^{\prime }_i,v_{-i}) -
p_i(v^{\prime }_i,v_{-i}).
\end{equation*}

\item A mechanism $(f,\mathbf{p})$ is \textbf{budget-balanced (BB)} if for
every $\mathbf{v} \in V^n$, we have 
\begin{equation*}
\sum_{i \in N}p_i(\mathbf{v}) = 0.
\end{equation*}

\item A mechanism $(f,\mathbf{p})$ is \textbf{symmetric} if for every $%
\mathbf{v} \in V^n$ and for every $i, j \in N$ with $v_i = v_j$, we have 
\begin{equation*}
f_i(\mathbf{v})=f_j(\mathbf{v}),~~~~p_i(\mathbf{v})=p_j(\mathbf{v}).
\end{equation*}
\end{itemize}

We call a mechanism \textbf{satisfactory} if it is DSIC, BB, and symmetric.~\footnote{\citet{Green77} use the terminology {\em satisfactory} mechanism to
mean something different. Among other things, 
their satisfactory mechanisms are DSIC and {\em efficient} but need not be BB and symmetric.
We apologize if this creates a confusion.
}
Symmetry allows us to consider a mild notion of fairness in our mechanism.
It also explicitly rules out \emph{dictatorial} mechanisms, where a dictator
agent is given the object for free at all valuation profiles.~\footnote{%
A weaker version of symmetry would be to consider \emph{anonymity} of the
mechanism with respect to net utilities of the agents - see %
\citet{Sprumont13} for a formal definition. We will require our stronger
version of symmetry for our main result.}

An allocation rule $f$ is \textbf{satisfactorily implementable} if there
exists a $\mathbf{p}$ such that $(f,\mathbf{p})$ is a satisfactory
mechanism. We are interested in finding satisfactory mechanisms that are
almost efficient in the following sense.

At any valuation profile $\mathbf{v}$, denote by $\mathbf{v}[k]$ the set of
agents who have the $k$-th highest valuation at $\mathbf{v}$. More formally, 
\begin{equation*}
\mathbf{v}[1]:=\{i \in N: v_i \ge v_j~\forall~j \in N\}.
\end{equation*}
Having defined $\mathbf{v}[k-1]$, we recursively define $\mathbf{v}[k]$ as 
\begin{equation*}
\mathbf{v}[k]:=\{i \in N \setminus (\cup_{k^{\prime }=1}^{k-1}\mathbf{v}%
[k^{\prime }]): v_i \ge v_j~\forall~j \in N \setminus (\cup_{k^{\prime
}=1}^{k-1}\mathbf{v}[k^{\prime }])\}.
\end{equation*}

\begin{defn}
An allocation rule $f$ is \textbf{efficient} at $\mathbf{v}$ if 
\begin{equation*}
\sum_{i \in \mathbf{v}[1]}f_i(\mathbf{v})=1.
\end{equation*}
An allocation rule $f$ is efficient if it is efficient at all $\mathbf{v}
\in V^n$. A mechanism $(f,\mathbf{p})$ is efficient if $f$ is efficient.
\end{defn}

The efficiency of a BB mechanism is equivalent to maximizing the total
welfare of agents at every profile of valuations. To see this, note that the
total welfare of agents at a valuation profile $\mathbf{v}$ from a mechanism 
$(f,\mathbf{p})$ is 
\begin{equation*}
\sum_{i \in N}\Big[ v_i f_i(\mathbf{v}) - p_i(\mathbf{v}) \Big] = \sum_{i
\in N}v_if_i(\mathbf{v}),
\end{equation*}
where the second equality followed from BB. This is clearly maximized by
assigning the object to the highest valued agents.

\citet{Green79} show that no DSIC and budget-balanced mechanism can be
efficient. Hence, a satisfactory mechanism cannot be efficient. The precise
question we are interested in is: \emph{what is the ``most" efficient
satisfactory mechanism?}

\subsection{A Prior-Free Notion to Measure Efficiency}

In view of the Green-Laffont result, we adopt one of the well-known notions
to measure efficiency of satisfactory mechanisms. Fix a satisfactory
mechanism $\mathcal{M} \equiv (f,\mathbf{p})$. Note that at any valuation
profile $\mathbf{v}$ with $v_1 \ge \ldots \ge v_n$, the maximum possible
(efficient) social welfare is $v_1$, and the social welfare achieved by $%
\mathcal{M}$ is 
\begin{equation*}
\sum_{i \in N}v_i f_i(\mathbf{v}).
\end{equation*}
The ratio of these two numbers is a good measure of efficiency at the
valuation profile $\mathbf{v}$. More precisely, the number 
\begin{equation*}
f_1(\mathbf{v}) + \frac{1}{v_1}\Big( \sum_{i \ne 1} v_i f_i(\mathbf{v}))\Big)%
,
\end{equation*}
is a measure of efficiency at the valuation profile $\mathbf{v}$. Here, as
in the rest of the paper, we assume $\frac{0}{0}=1$. Note that such a
measure \emph{only} depends on $f$ and not on $\mathbf{p}$ because $(f,%
\mathbf{p})$ is a budget-balanced mechanism. Now, the \emph{worst-case} of
this ratio happens when we minimize this over all $\mathbf{v}$. In
particular, for a satisfactory mechanism $\mathcal{M} \equiv (f,\mathbf{p})$%
, the worst-case efficiency is given by

\begin{equation*}
\mu^{\mathcal{M}} = \inf_{\mathbf{v}} \Bigg[ f_1(\mathbf{v}) + \frac{1}{v_1}%
\Big( \sum_{i \ne 1} v_i f_i(\mathbf{v}))\Big) \Bigg].
\end{equation*}

A natural objective is to find a satisfactory mechanism that maximizes this
worst-case efficiency. As discussed in the introduction, this is a robust
method of comparing efficiency of mechanisms. We apply this notion of
comparing efficiency levels of mechanisms in a restricted class of
mechanisms that we describe next.

\section{Ranking Mechanisms}

\label{sec:rank}

In most of the paper, we focus attention on the following class of simple
allocation rules and the corresponding satisfactory mechanisms that can be constructed using
such allocation rules. We call them {\em ranking} allocation rules.
Each ranking allocation rule is defined by $n$ numbers
$(\pi_1,\ldots,\pi_n)$ with each $\pi_i \in [0,1]$ and $\sum_{i \in N}\pi_i \le 1$.
Informally, at a generic valuation profile $v_1 > v_2 > \ldots > v_n$, for every $k$, $\pi_k$ reflects the probability with which agent $k$ (which has rank $k$ at this profile)
gets the object.
Notice that this probability does not change across valuation profiles as long as the rank 
of the agent does not change.
This feature makes the ranking allocation rules simple, both from the point of view
of practical implementation and analysis.
Also, we require every ranking
allocation rule to be symmetric, and this means that it allocates the object in a
particular way when there are ties in valuations. We clarify this tie-breaking by formally defining
the ranking allocation rule first.
\begin{defn}
\label{def:rank} An allocation rule $f$ is a ranking allocation rule if it
is symmetric and there exists numbers $\pi_i \in [0,1]$ for all $i \in N$
with $\pi_1 \ge \ldots \ge \pi_n$ and $\sum_{i \in N}\pi_i \le 1$ such that
at every valuation profile $\mathbf{v}$ and every $k \in N$, we have 
\begin{equation*}
\sum_{i \in \cup_{j=1}^kv[j]}f_i(\mathbf{v}) = \sum_{i \in
\cup_{j=1}^kv[j]}\pi_i.
\end{equation*}
A mechanism $(f,\mathbf{p})$ is a ranking mechanism if $f$ is a ranking
allocation rule.
\end{defn}
To illustrate the tie-breaking, suppose there are seven
agents: $N=\{1,\ldots,7\}$ and consider a valuation profile $\mathbf{v}$
such that $v_1=v_2 > v_3=v_4=v_5 > v_6 > v_7$. Consider a ranking allocation
rule $(\pi_1,\ldots,\pi_7)$. According to the definition, agents $1$ and $2$
will equally share (due to symmetry) the allocation probabilities $%
(\pi_1+\pi_2)$, i.e., each agent gets the good with probability $\frac{%
\pi_1+\pi_2}{2}$. Then, agents $3,4,$ and $5$ will equally share the
allocation probabilities $(\pi_3+\pi_4+\pi_5)$. Finally, agents $6$ and $7$
get allocation probabilities $\pi_6$ and $\pi_7$ respectively.

Note that breaking ties in this manner in a ranking allocation rule
maintains continuity of total welfare in terms of valuations of agents. For
instance, consider the valuation profile discussed in the above example.
Consider any arbitrarily close \emph{generic} (with distinct valuations for
agents) valuation profile to this valuation profile. The total expected
value of agents $1$ and $2$ in this profile is arbitrarily close to $%
v_1\pi_1+v_2\pi_2=v_1(\pi_1+\pi_2)$, where the equality follows from the
fact that $v_1=v_2$. Hence, we can maintain continuity of total welfare by
giving a total of $(\pi_1+\pi_2)$ probability to agents $1$ and $2$.
Finally, using symmetry, we equally divide this probability among these two
agents. This explains the tie-breaking in the ranking allocation rule.

Even though the ranking allocation rule is a simple class of allocation
rules, there is a rich subclass of ranking allocation rules that are
satisfactorily implementable. Our focus on this class is purely driven by
their tractability and simplicity.

Two well-known ranking allocation rules are satisfactorily implementable.
The equal-sharing allocation rule, where each agent gets the object with
probability $\frac{1}{n}$ is satisfactorily implementable - no transfers are
required for this. The other allocation rule comes from a mechanism proposed
by Green and Laffont. Pick an agent $i$ uniformly at random. Run a Vickrey
auction among the remaining $N \setminus \{i\}$ agents. Give the revenue
from the Vickrey auction to agent $i$. Since agents are treated
symmetrically, the Vickrey auction is DSIC, and by construction, the
mechanism is budget-balanced.~\footnote{\citet{Green79} discuss an even
larger class of satisfactory mechanisms where they take out a coalition of
``residual claimant" agents with some probability, run the Vickrey auction
on the remaining agents, and allocate the revenue of the Vickrey auction to
the residual claimants equally. These mechanisms are also ranking mechanisms.%
}

A closer look at the Green-Laffont mechanism reveals the following. For
valuation profiles with a distinct highest valued agent, it allocates the
object to him with probability $(1-1/n)$ and shares the remaining
probability $1/n$ among the second highest valued agents. For valuation
profiles with more than one highest valued agents, it allocates the entire
object equally among the highest valued agents. Therefore, given Definition %
\ref{def:rank}, the allocation rule used in the Green-Laffont mechanism is a
ranking allocation rule, where 
\begin{equation*}
\pi_1=1-1/n, \pi_2=1/n, \pi_3=\ldots=\pi_n=0.
\end{equation*}
To be precise, this is the allocation rule corresponding to the direct
mechanism of the Green-Laffont mechanism.

We now characterize the ranking allocation rules that can be satisfactorily
implemented. \newline

\noindent \textbf{Notation:} For any two non-negative numbers $K$ and $%
K^{\prime }$ with $K \ge K^{\prime }$, we denote by $C(K,K^{\prime })$ the
number of ways we can choose $K^{\prime }$ agents from a set of $K$ agents.

\begin{prop}
\label{prop:rank} A ranking allocation rule with probabilities $%
(\pi_1,\ldots,\pi_n)$ is satisfactorily implementable if and only if 
\begin{equation*}
\sum_{k=1}^n (-1)^k C(n-1,k-1)\pi_k = 0.
\end{equation*}
\end{prop}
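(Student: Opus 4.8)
The plan is to characterize satisfactory implementability by combining the Myersonian characterization of DSIC (extended to budget-balance and symmetry, as promised in the introduction) with the combinatorial structure forced by symmetry. Since the paper states that if an allocation rule is satisfactorily implementable then the payments are uniquely determined, the core of the argument reduces to a single scalar constraint on the allocation rule. My strategy is to write down the payment rule forced on each agent by DSIC via the standard Myerson payment formula, then impose budget-balance, namely that the sum of payments over all agents vanishes at every profile, and extract the resulting algebraic condition on $(\pi_1,\ldots,\pi_n)$.

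\medskip

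\noindent First I would recall that for a DSIC mechanism the interim (here, ex-post, since we are in dominant strategies) allocation to agent $i$ must be nondecreasing in $v_i$ holding $v_{-i}$ fixed, and the payment is pinned down up to a constant by
\begin{equation*}
p_i(v_i,v_{-i}) = v_i f_i(v_i,v_{-i}) - \int_0^{v_i} f_i(s,v_{-i})\,ds + p_i(0,v_{-i}).
\end{equation*}
For a ranking allocation rule, $f_i(s,v_{-i})$ as a function of $s$ is a step function: as $s$ increases past the values in $v_{-i}$, agent $i$'s rank improves, and $f_i$ jumps through the values $\pi_n,\pi_{n-1},\ldots$ up to $\pi_1$. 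The key simplification is that the integral term and the boundary payment can be written explicitly in terms of the $\pi_k$'s and the order statistics of $v_{-i}$. Then I would sum the payment expressions over all $i\in N$ and set the total equal to zero, as required by BB, at an arbitrary generic profile. Symmetry ensures the boundary terms $p_i(0,v_{-i})$ are handled consistently across agents.

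\medskip

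\noindent The heart of the computation is tracking how, at a fixed generic profile $\mathbf{v}$, each agent contributes to the budget-balance sum, and showing that the net contribution collapses to the alternating binomial sum $\sum_{k=1}^n(-1)^k C(n-1,k-1)\pi_k$. I expect the alternating signs and binomial coefficients to arise from an inclusion-exclusion over which agents lie above a given threshold: when one integrates $f_i(s,v_{-i})$ and sums over agents and over the nested subsets $\cup_{j=1}^k v[j]$ appearing in Definition~\ref{def:rank}, the combinatorics of choosing $k-1$ competitors out of the remaining $n-1$ agents produces the coefficient $C(n-1,k-1)$, while the sign bookkeeping of the telescoping integral produces $(-1)^k$. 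Establishing necessity (BB forces the sum to vanish) and sufficiency (if the sum vanishes, the uniquely-determined payments are genuinely BB and the monotonicity $\pi_1\ge\cdots\ge\pi_n$ guarantees DSIC) are two directions of the same identity.

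\medskip

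\noindent The main obstacle I anticipate is the bookkeeping in the combinatorial sum: one must argue that the condition derived by equating the total payment to zero at \emph{one} suitably chosen family of profiles is in fact equivalent to BB at \emph{every} profile, and that no further constraints beyond the single alternating-sum equation appear. The cleanest route is probably to evaluate the budget-balance requirement in the limit as valuations become arbitrarily close (or at a specific symmetric-type configuration) so that the order-statistic-dependent terms cancel and only the coefficient structure survives, then invoke the uniqueness-of-payments result to conclude that satisfying this one equation is both necessary and sufficient.
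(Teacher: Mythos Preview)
Your framework is correct---Myerson's payment identity plus budget balance plus symmetry is exactly the right starting point---but the plan has a genuine gap at the step you yourself flag as the main obstacle. When you sum the Myerson payments and impose $\sum_i p_i(\mathbf{v})=0$, you obtain $\sum_i p_i(0,v_{-i}) = -R^f(\mathbf{v})$, an equation in which the left side still contains the undetermined boundary constants $p_i(0,v_{-i})$. Saying that ``symmetry ensures the boundary terms are handled consistently'' is not enough, and the proposed fix of taking a limit where valuations coalesce does not eliminate these unknowns: the constants depend on $v_{-i}$, not on $v_i$, so pushing valuations together does not make them cancel. What is actually needed is a \emph{recursion} on the number of agents with valuation zero: at $(0,\ldots,0)$ symmetry plus BB forces every payment to zero; at a profile with one nonzero value the payment to each zero-valued agent is pinned down; and so on. When you reach a profile with no zeros, the recursion has already fixed every $p_i(0,v_{-i})$, and BB at that profile becomes a pure consistency condition on the allocation rule alone.

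The paper carries this out in Theorem~\ref{theo:char}, which shows that the consistency condition is \emph{residual balancedness}: $\sum_{T\subseteq N}(-1)^{|T|}R^f(0_T,v_{-T})=0$ for every $\mathbf v$ with no zero coordinate. This is where your anticipated inclusion--exclusion lives, but it is an alternating sum over the $2^n$ subsets of agents set to zero, not over competitors above a threshold. For a ranking rule the paper then computes $R^f$ explicitly at $0$-generic profiles (Lemma~\ref{lem:rcomp}) and collapses the $2^n$-term sum by pairing each $T\not\ni n$ with $T\cup\{n\}$: the difference $R^f(0_T,v_{-T})-R^f(0_{T\cup\{n\}},v_{-(T\cup\{n\})})$ turns out to depend only on $|T|$ and $v_n$, so the whole sum factors as $v_n$ times a single alternating binomial expression in the $\pi_k$'s. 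That pairing trick, not a limiting argument, is what reduces the a~priori infinite family of BB constraints to the one equation in the proposition.
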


Later, in Theorem \ref{theo:char}, we give necessary and sufficient
conditions for a general allocation rule $f$ to be satisfactorily
implementable. Those necessary and sufficient conditions are complicated -
they involve verifying an infinite system of equations. On the other hand,
the necessary and sufficient condition for satisfactorily implementing a
ranking allocation rule is a single equation given by Proposition \ref%
{prop:rank}. This hints that it may be tractable to search over the space of
ranking allocation rules.

Now, we adapt our notion of efficiency measure by restricting the class of
mechanisms to ranking mechanisms.

\begin{defn}
A ranking allocation rule $(\pi_1,\ldots,\pi_n)$ is \textbf{r-optimal} if it
satisfactorily implementable and for any other satisfactorily implementable
ranking allocation rule $(\pi_1^{\prime },\ldots,\pi^{\prime }_n)$, we have 
\begin{equation*}
\pi_1 \ge \pi^{\prime }_1.
\end{equation*}

A ranking mechanism $(f,\mathbf{p})$ is r-optimal if (i) $(f,\mathbf{p})$ is
a satisfactory mechanism and (ii) $f$ is r-optimal.
\end{defn}

The notion of r-optimality is an indirect way of requiring a mechanism to
maximize the value of worst-case efficiency in the class of satisfactory
ranking mechanisms. To see this, fix a ranking mechanism $\mathcal{M}\equiv
(f,\mathbf{p})$ with allocation probabilities $(\pi _{1},\ldots ,\pi _{n})$.
Note that 
\begin{equation*}
\mu ^{\mathcal{M}}=\inf_{\mathbf{v}}\Big[\pi _{1}+\frac{1}{v_{1}}\big(%
\sum_{j\neq 1}\pi _{j}v_{j}\big)\Big]=\pi _{1}+\inf_{\mathbf{v}}\frac{1}{%
v_{1}}\big(\sum_{j\neq 1}\pi _{j}v_{j}\big)=\pi _{1},
\end{equation*}%
where we used the fact that infimum of the above expression occurs when each
agent $j\neq 1$ has zero valuation.

Later, in Theorem \ref{theo:char}, we shall establish the fact that if $f$
is satisfactorily implementable, then there is a unique $\mathbf{p}$ such
that $(f,\mathbf{p})$ is a satisfactory mechanism. As a result, we shall
only talk about the r-optimality of an allocation rule - the corresponding
r-optimal mechanisms are uniquely defined.

\subsection{The Main Result}

In this section, we provide our main result, which identifies an r-optimal
allocation rule. To do so, we first propose a general class of ranking
allocation rules. In this generalization, at a generic valuation profile,
the top ranked agent is given the object with some probability $\pi_1$ and
agents ranked $2$ to $\ell$ are given the object with equal probability $%
\pi_2$, where $\pi_1+(\ell-1)\pi_2=1$. Formally, a two-step allocation rule
is defined as follows.

\begin{defn}
A \textbf{two-step ranking} allocation rule is a ranking allocation rule
with probabilities

\begin{equation*}
(\pi_1, \underbrace{\pi_2,\ldots,\pi_2}_{\ell-1},0,\ldots,0),
\end{equation*}
where $\pi_1 > \pi_2 > 0$ and $\pi_1 + (\ell-1)\pi_2=1$.
\end{defn}

Hence, a two-step allocation rule is uniquely defined by $(\pi_1,\ell)$ - $%
\ell$ is the number of agents receiving positive probability. The GL
allocation rule is a two-step ranking allocation rule with $\pi_1=1-1/n$ and 
$\ell=2$. In Proposition \ref{prop:2rank} (see Appendix), we characterize
the class of two-step ranking allocation rules that can be satisfactorily
implemented - this class requires $\ell$ to be even and $\pi_1$ is
determined uniquely given an even value of $\ell$.

We are now ready to state the main result of the paper. It shows that there
is a two-step ranking allocation rule that is r-optimal, which has excellent
convergence to efficiency.

\begin{theorem}
\label{theo:nrgl} There is a two-step ranking allocation rule that is
r-optimal. Its allocation probabilities $(\pi_1^*,\ldots,\pi_n^*)$ are
defined as follows: 
\begin{equation*}
\pi^*_i = \left\{ 
\begin{array}{ll}
1 - \frac{\ell-1}{C(n-2,\ell-1) + \ell } & \text{if $i=1$} \\ 
\frac{1}{C(n-2,\ell-1) + \ell } & \text{if $i \in \{2,\ldots,\ell\}$} \\ 
0 & \text{otherwise,}%
\end{array}
\right.
\end{equation*}
where 
\begin{equation*}
\ell \in \arg \min_{2 \le i \le (n-1),~i~\text{even}} \frac{(i-1)}{\Big(%
C(n-2,i-1) + i \Big)}.
\end{equation*}
Moreover, if $n \ne 8$, there is a unique r-optimal allocation rule.
\end{theorem}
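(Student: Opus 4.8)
The plan is to treat r-optimality as a linear program and solve it in closed form. By definition we must maximize $\pi_1$ over all vectors $(\pi_1,\ldots,\pi_n)$ satisfying $\pi_1\ge\cdots\ge\pi_n\ge 0$, $\sum_i\pi_i\le 1$, and the implementability equation of Proposition~\ref{prop:rank}. Isolating the $k=1$ term of that equation (whose coefficient is $-C(n-1,0)=-1$) eliminates $\pi_1$:
\[
\pi_1=\sum_{k=2}^{n}(-1)^k C(n-1,k-1)\,\pi_k .
\]
I would then pass to the ``layer-cake'' variables $\delta_j:=\pi_j-\pi_{j+1}\ge 0$ (with $\pi_{n+1}:=0$) for $j=2,\ldots,n$, which automatically encode the monotonicity $\pi_2\ge\cdots\ge\pi_n\ge 0$, and substitute $\pi_k=\sum_{j\ge k}\delta_j$.

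After swapping the order of summation the objective becomes $\pi_1=\sum_{j=2}^n \delta_j S_j$ with $S_j:=\sum_{k=2}^j(-1)^kC(n-1,k-1)$, while the budget, using $\sum_{k\ge 2}\pi_k=\sum_{j=2}^n(j-1)\delta_j$, turns $\sum_i\pi_i\le 1$ into $\sum_{j=2}^n \delta_j\,(S_j+j-1)\le 1$. The partial sums collapse via the alternating identity $\sum_{i=0}^{m}(-1)^iC(M,i)=(-1)^mC(M-1,m)$ to $S_j=1-(-1)^{j-1}C(n-2,j-1)$; in particular $S_j=1+C(n-2,j-1)$ for even $j$ and $S_j=1-C(n-2,j-1)$ for odd $j$. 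Writing $w_j:=S_j+j-1$, one checks $w_j=C(n-2,j-1)+j$ for even $j$ — exactly the denominator appearing in the theorem. Thus the problem is the fractional-knapsack LP: maximize $\sum_j\delta_jS_j$ subject to $\sum_j\delta_j w_j\le 1$ and $\delta_j\ge 0$.

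I would solve this by the standard ratio argument. Set $\rho:=\max_j S_j/w_j$; since $S_2/w_2=(n-1)/n>0$ while every odd $j\ge 3$ has $S_j\le 0$, the maximizing index is even and $\rho=\max_{j\ \mathrm{even}}\frac{S_j}{w_j}=1-\min_{j\ \mathrm{even}}\frac{j-1}{C(n-2,j-1)+j}$. The upper bound $\pi_1\le\rho$ follows term-by-term: for even $j$, $S_j\le\rho w_j$ by definition of $\rho$; for odd $j$, write $w_j=S_j+(j-1)$, and since $0<\rho<1$ and $S_j\le 0$ we get $S_j(1-\rho)\le 0<\rho(j-1)$, which rearranges to $S_j\le\rho w_j$. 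Hence $\sum_j\delta_jS_j\le\rho\sum_j\delta_jw_j\le\rho$. Concentrating all mass on $\delta_{\ell}=1/w_{\ell}$, with $\ell$ an even minimizer, attains this bound and yields precisely the two-step rule: $\pi_2=\cdots=\pi_\ell=1/(C(n-2,\ell-1)+\ell)$, $\pi_{\ell+1}=\cdots=\pi_n=0$, and $\pi_1=S_\ell\,\delta_\ell=1-(\ell-1)/(C(n-2,\ell-1)+\ell)$, matching the stated formulas. Feasibility is routine, the only nontrivial check being $\pi_1\ge\pi_2$, which reduces to $S_\ell\ge 1$ and holds because $\ell$ is even; this dovetails with Proposition~\ref{prop:2rank}, which forces $\ell$ even.

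For uniqueness I would note that any optimal $\delta$ must vanish at every index with $S_j<\rho w_j$ strictly, so non-uniqueness can arise only from ties in $\arg\min_{j\ \mathrm{even}}\frac{j-1}{C(n-2,j-1)+j}$: two distinct minimizers $\ell_1<\ell_2$ would let me split mass between $\delta_{\ell_1},\delta_{\ell_2}>0$ to build a feasible non--two-step optimum. The claim ``unique unless $n=8$'' then reduces to the combinatorial statement that $g(\ell):=\frac{\ell-1}{C(n-2,\ell-1)+\ell}$ has a unique minimizer over even $\ell\in[2,n-1]$ for every $n\ne 8$, whereas for $n=8$ both $\ell=2$ and $\ell=4$ give $g=\tfrac18$. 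I expect this discrete-minimization step to be the main obstacle: the knapsack reformulation is mechanical once the closed form for $S_j$ is in hand, but pinning down exactly when the integer minimizer is unique — and isolating $n=8$ as the only coincidence — requires a careful monotonicity (unimodality) analysis of $g$ along even $\ell$, equivalently of how $C(n-2,\ell-1)$ grows relative to $\ell$.
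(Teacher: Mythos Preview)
Your approach is correct and takes a genuinely different route from the paper. The paper formulates the same linear program but attacks it via LP duality: it writes down the dual of \textbf{(LP-RANK)}, constructs an explicit dual-feasible solution with objective $z^*$, exhibits the candidate two-step rule as a primal-feasible solution with the same objective, and invokes strong duality; uniqueness for $n\neq 8$ is obtained by perturbing the objective by an arbitrary small vector $\epsilon$ and appealing to Mangasarian's criterion that an LP optimum is unique iff it survives all sufficiently small objective perturbations. Your layer-cake substitution $\delta_j=\pi_j-\pi_{j+1}$ instead collapses the problem to a single-constraint fractional knapsack, which you solve by the elementary best-ratio argument --- no dual, no perturbation lemma --- and uniqueness falls out directly from strictness of the term-by-term bound $S_j\le\rho w_j$ at every non-optimal index. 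This is cleaner and more self-contained; the price is that you must verify $S_j\le\rho w_j$ by hand for odd $j$, whereas the duality route absorbs that into dual feasibility. Both approaches leave the same combinatorial residue --- uniqueness of the even minimizer of $g(\ell)=(\ell-1)/(C(n-2,\ell-1)+\ell)$ for $n\neq 8$ --- which the paper isolates as Corollary~\ref{cor:comp}. One small repair to your sketch: the assertion that every odd $j\ge 3$ has $S_j\le 0$ fails at $j=n$ when $n$ is odd, since $C(n-2,n-1)=0$ gives $S_n=1$; but there $S_n/w_n=1/n<(n-1)/n\le\rho$, so $S_n\le\rho w_n$ still holds and the rest of the argument is unaffected.
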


\noindent \textbf{Remark 1.} Though Theorem \ref{theo:nrgl} requires at least three agents, we can easily identify the
r-optimal mechanism in the two-agent case. Proposition %
\ref{prop:rank} continues to hold even if $n=2$. As a result, the only
ranking allocation rule that can be satisfactorily implemented are those
where both the agents get the object with equal probability. Hence, the
unique r-optimal allocation rule is the \emph{equal
sharing} allocation rule where both the agents get the object with
probability $1/2$ - transfers are not needed to make this allocation rule
satisfactorily implementable. \newline

\noindent \textbf{Remark 2.} All our optimality results rely on the fact
that the valuation space $V$ of each agent is \emph{rich} - an interval with
zero as the lowest valuation. We do not know how to extend these results to
a setting where $V$ is an arbitrary interval. However, we stress here that
the mechanism we derive in Theorem \ref{theo:nrgl} remains valid for any
arbitrary interval $V$. To see this, consider $V:=[L,H]$, where $0 \le L < H$%
. Note that our results along with the mechanism in Theorem \ref{theo:nrgl}
hold true if valuation space is $[0,H]$. Now, consider the restriction of
this mechanism to the valuation space $[L,H]$ - such a restriction is
well-defined and satisfactory. Thus, our mechanism will have the same
efficiency properties when $V:=[L,H]$. Of course, this mechanism need not
satisfy the optimality property claimed in Theorem \ref{theo:nrgl} - though,
we have no counter-examples to show this. In fact, we conjecture that
our mechanism will remain optimal even in such type spaces.

\subsection{Computations}

\label{sec:comp}

Besides the optimality of the two-step allocation rule identified in Theorem %
\ref{theo:nrgl}, we want to stress the speed with which it converges to
efficiency. Because of combinatorial terms in the denominator of the
expression for $\pi^*_1$, its convergence to $1$ is exponential. We spell
out the exact nature of this convergence below.

The exact form of the r-optimal allocation rule will depend on the value of $%
n$. Note that the value of $\ell$ is determined by minimizing the following
expression over all even $i \le (n-1)$: 
\begin{equation*}
\min_{2 \le i \le (n-1),~i~\text{even}} \frac{(i-1)}{\Big(C(n-2,i-1) + i %
\Big)}.
\end{equation*}
Routine calculations show that the minimum of this expression occurs when $%
i=2$ for $n < 8$. Hence, for $n < 8$, the GL allocation rule is the unique
r-optimal allocation rule.

If $n=8$, the minimum of this expression occurs at $i=2$ or $i=4$. If $n \ge
9$, the maximum value of $C(n-2,i-1)$ over all even $i$ determines the
minimum of this expression - it is possible that two values of $i$ maximizes 
$C(n-2,i-1)$, in which case we choose the smaller one to minimize $\frac{i-1%
}{C(n-2,i-1)+i}$.

Hence, the choice of $\ell$ in Theorem \ref{theo:nrgl} is unique for all
values of $n \ne 8$. In the proof of Theorem \ref{theo:nrgl}, we show that
as long as we can choose $\ell$ uniquely, the r-optimal allocation rule is
unique.

We now consider the case $n \ge 9$ and give an explicit formula for $\ell$
in this case. Denote by $\lfloor x \rfloor_e$ and $\lfloor x \rfloor_o$
respectively the largest even number smaller than $x$ and the largest odd
number smaller than $x$. We now consider two cases. \newline

\noindent \textsc{Case 1.} If $n$ is odd, then $n-2$ is odd. So, $C(n-2,i-1)$
is maximized at two values of $i-1$: at $\frac{n-2+1}{2}$ or $\frac{n-2-1}{2}
$, out of which one of them is odd. So, we can say $C(n-2,i-1)$ is maximum
when $i-1=\lfloor \frac{n-1}{2} \rfloor_o$ or $i = \lfloor \frac{n+1}{2}
\rfloor_e$. \newline

\noindent \textsc{Case 2.} If $n$ is even, then $C(n-2,i-1)$ is maximum when 
$i-1=\frac{n-2}{2}$. Since we require $(i-1)$ to be odd, we can say that $%
i-1=\lfloor \frac{n-2}{2} \rfloor_o$ or $i=\lfloor \frac{n}{2} \rfloor_e$.
Since $n$ is even, we can equivalently write this as $i=\lfloor \frac{n+1}{2}
\rfloor_e$. \newline

Hence, when $n \ge 9$, we conclude that $\ell$ in Theorem \ref{theo:nrgl} is 
$\lfloor \frac{n+1}{2} \rfloor_e$. We document this as a corollary.

\begin{cor}
\label{cor:comp} The two-step ranking r-optimal allocation rule identified
in Theorem \ref{theo:nrgl} satisfies 
\begin{align*}
\ell = 2~\text{if}~n < 8, \\
\ell \in \{2,4\}~\text{if}~n=8, \\
\ell = \lfloor \frac{n+1}{2} \rfloor_e~\text{if}~n \ge 9.
\end{align*}
Hence, for $n < 8$, the GL allocation rule is the unique r-optimal
allocation rule.
\end{cor}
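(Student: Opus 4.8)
The plan is to study the scalar minimization that defines $\ell$ in Theorem~\ref{theo:nrgl}. Write $g(i):=\frac{i-1}{C(n-2,i-1)+i}$, so that $\ell$ is a minimizer of $g$ over the even integers in $[2,n-1]$ and, by Theorem~\ref{theo:nrgl}, the r-optimal two-step rule is the one attached to this value of $\ell$. The three cases of the corollary correspond to three regimes of this minimization, and I would treat them in turn.

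For $n<8$ and for $n=8$ the set of admissible even indices is at most $\{2,4,6\}$, so I would simply tabulate $g$ at each admissible $i$ and compare the resulting fractions directly. This finite check is meant to yield $\ell=2$ for $n<8$ and the tie $g(2)=g(4)<g(6)$ at $n=8$, giving $\ell\in\{2,4\}$. Since the two-step rule with $\ell=2$ is exactly the GL allocation rule, the case $n<8$ simultaneously establishes the uniqueness assertion for $n<8$.

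The substantive regime is $n\ge 9$, where the first task is the reduction claimed in the text: that a minimizer of $g$ over even $i$ sits at an even index maximizing $C(n-2,i-1)$. I expect this reduction to be the main obstacle, and it is not automatic, because as $i$ grows the combinatorial denominator and the linear numerator $i-1$ move in the same direction, so their competition must be controlled quantitatively. Concretely, to compare two admissible even indices $i<i'$ one clears denominators and must verify $(i-1)\big(C(n-2,i'-1)+i'\big)\ge(i'-1)\big(C(n-2,i-1)+i\big)$; ignoring the lower-order additive $+i$ terms, the decisive inequality is $\frac{C(n-2,i'-1)}{C(n-2,i-1)}\ge\frac{i'-1}{i-1}$, i.e.\ the multiplicative jump of the binomial coefficients as the odd index $i-1$ moves toward the mode of $C(n-2,\cdot)$ must outweigh the modest numerator factor $\frac{i'-1}{i-1}$. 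I would extract the needed bound from the standard ratio of consecutive binomial coefficients $\frac{C(n-2,k+1)}{C(n-2,k)}=\frac{n-2-k}{k+1}$, and I expect this to be where the threshold $n\ge 9$ originates and where the analysis is genuinely delicate; the argument may well require distinguishing residues of $n$, and I would scrutinize exactly this step most carefully.

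Granting the reduction, the remaining step is to locate the even index maximizing $C(n-2,i-1)$, i.e.\ the odd value of $i-1$ closest to the mode of $C(n-2,\cdot)$, which I would carry out by parity of $n$ exactly as in the two cases preceding the corollary. When $n$ is odd, $n-2$ is odd and the mode is attained at the two consecutive integers $\tfrac{n-3}{2},\tfrac{n-1}{2}$, precisely one of which is odd, giving $i-1=\lfloor\tfrac{n-1}{2}\rfloor_o$ and hence $i=\lfloor\tfrac{n+1}{2}\rfloor_e$; when $n$ is even, $n-2$ is even with unique mode $\tfrac{n-2}{2}$, and selecting the nearest admissible odd index again yields $i=\lfloor\tfrac{n+1}{2}\rfloor_e$. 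In both cases the maximizer is unique for $n\ne 8$, and through the reduction this uniqueness transfers to $\ell$ and, by Theorem~\ref{theo:nrgl}, to the r-optimal rule, completing the argument.
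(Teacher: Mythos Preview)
Your plan tracks the paper's own argument step for step: direct tabulation for $n\le 8$, and for $n\ge 9$ the reduction ``the minimizer of $g$ over even $i$ sits at an even $i$ maximizing $C(n-2,i-1)$,'' followed by a parity analysis of the binomial mode. You are right that the reduction is the crux, and right to be suspicious of it. In fact it is \emph{false}. Take $n=11$: over even $i$ the maximum of $C(9,i-1)$ is $C(9,5)=126$ at $i=6$, yet
\[
g(4)=\frac{3}{C(9,3)+4}=\frac{3}{88}=\frac{9}{264}\ <\ \frac{10}{264}=\frac{5}{132}=\frac{5}{C(9,5)+6}=g(6),
\]
so the minimizer is $i=4$, not $i=6$. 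Your proposed ratio test detects the obstruction exactly here: $C(9,5)/C(9,3)=126/84=3/2$ does \emph{not} dominate the numerator ratio $(i'-1)/(i-1)=5/3$. The same failure recurs at $n=15$ (where $g(6)=5/1293<7/1724=g(8)$) and more generally whenever $n\equiv 3\pmod 4$: the odd index at the binomial mode is $\tfrac{n-1}{2}$, but stepping down two to $\tfrac{n-5}{2}$ already lowers $g$.

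This is not a defect peculiar to your plan. The paper only asserts the reduction and never proves it, and its own computations inherit the error: the Table~\ref{tab:eff} entry $96.2\%$ at $n=11$ corresponds to $\ell=6$, whereas $\ell=4$ yields $\pi_1^*=85/88\approx 96.6\%$. So the formula $\ell=\lfloor\tfrac{n+1}{2}\rfloor_e$ you are trying to establish is incorrect as written, and no argument along the reduction route can succeed. A repaired statement would have to come from analyzing $g$ directly; your cleared-denominator comparison $(i-1)\bigl(C(n-2,i'-1)+i'\bigr)\gtrless(i'-1)\bigl(C(n-2,i-1)+i\bigr)$ is the right starting point for that, but it will not collapse to a pure comparison of binomial coefficients.
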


Corollary \ref{cor:comp} shows that for $n=8$, there are many r-optimal
allocation rules. For $\ell=2$ and $\ell=4$, we have two two-step ranking
allocation rules that are r-optimal. Any convex combination of these two
allocation rules will also be r-optimal. Note that ranking rules generated
by such convex combinations need not be two-step ranking allocation rules.
In conclusion, for $n \ne 8$, we have a unique r-optimal allocation rule
defined by Theorem \ref{theo:nrgl}. But for $n=8$, the uniqueness is lost
and there exists r-optimal allocation rules that are not two-step ranking
allocation rule.

Corollary \ref{cor:comp} allows us to compute the allocation probabilities
of the highest valuation agent using the Pascal triangle in Figure \ref%
{fig:ropt}. Each row (starting with the second row) represents a particular
value of $n$, starting with $n=3$ in the second row. By Corollary \ref%
{cor:comp}, $\ell=2$ if $n < 8$, $\ell \in \{2,4\}$ if $n=8$, and $%
\ell=\lfloor \frac{n+1}{2} \rfloor_e$ if $n > 9$. In each row of the Pascal
triangle, the entries are $C(n-2,0), C(n-2,1), \ldots, C(n-2,n-2)$. Now, the
value $C(n-2,\ell-1)$ is highlighted in the orange (lighter shaded) cell of
each row.~\footnote{%
The values in the brown (darker shaded) cells correspond to the entries of
the Green-Laffont allocation rule.} The probability of the highest valuation
agent is then easily computed from this and the value of $\ell$ as: $\frac{%
C(n-2,\ell-1)+1}{C(n-2,\ell-1)+\ell}$, which is shown to the right of the
Pascal triangle.

Note that for $n \ge 14$, the object is allocated to the highest valuation
agent with at least 99\% probability. The Green-Laffont allocation rule will
require at least 100 agents to achieve such probability for the highest
valuation agent.

\begin{figure}[!hbt]
\centering
\includegraphics[width=6in]{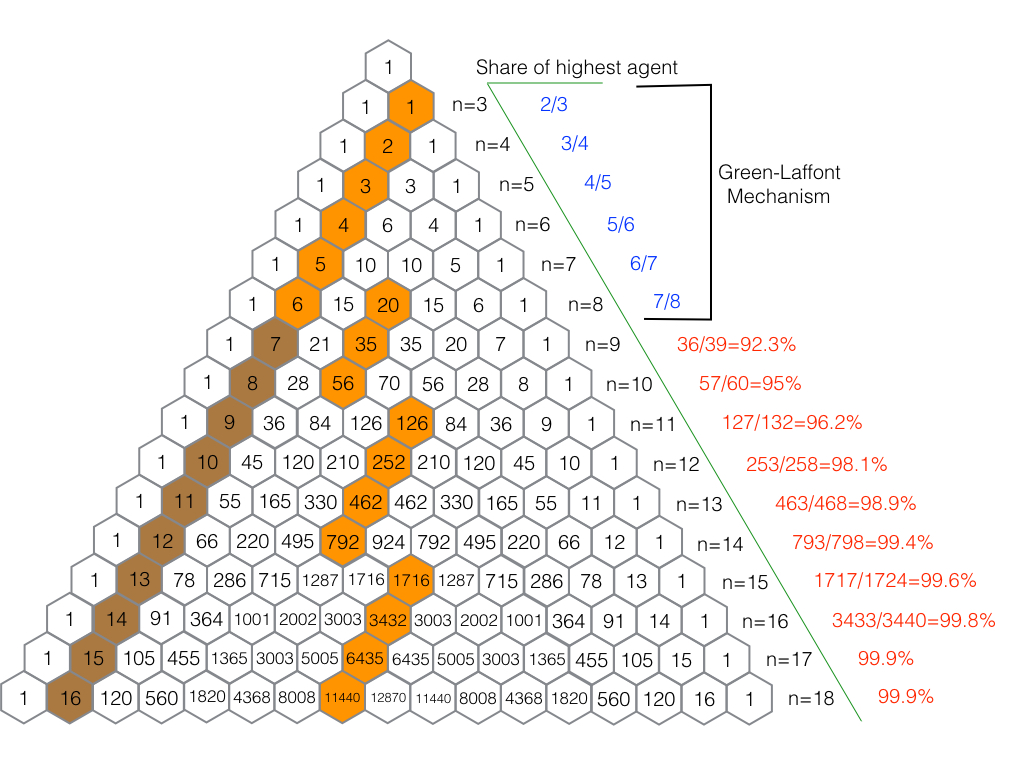}
\caption{The r-optimal allocation rule}
\label{fig:ropt}
\end{figure}

\subsection{Participation Constraints}

We now show that a strong form of participation constraint is satisfied by a
class of ranking mechanisms, including the r-optimal mechanism in Theorem %
\ref{theo:nrgl}.

\begin{defn}
A mechanism $(f,\mathbf{p})$ is \textbf{ex-post individually rational} if
for every $i \in N$ and for every $\mathbf{v}$, we have 
\begin{equation*}
v_i f_i(\mathbf{v}) - p_i(\mathbf{v}) \ge 0.
\end{equation*}
\end{defn}

The ex-post notion of participation constraint is appropriate in our
prior-free model. Notice that, unlike the model in \citet{Cramton87}, our
model does not have any property rights defined for the agents.~\footnote{We discuss
the results in \citet{Cramton87} in details in Section \ref{sec:lit}.} Hence, we
assume that the outside option of each agent is zero. In that sense, even
though our participation constraints are ex-post, they only ensure
non-negative payoff from participation. On the other hand, the participation
constraints in \citet{Cramton87} is interim but because of the property
rights structure, they ensure larger interim payoffs to agents.

We prove below that a \emph{class} of mechanisms using two-step ranking
allocation rules satisfy ex-post individual rationality. For $n \ge 8$, the two extremes of
this class are the Green-Laffont mechanism and our r-optimal mechanism in
Theorem \ref{theo:nrgl}.

\begin{theorem}
\label{theo:ir} Suppose $f$ is a two-step ranking allocation rule defined by 
$(\pi_1,\ell)$, where $2\ell \le n+1$. If $(f,\mathbf{p})$ is a satisfactory
mechanism, then it is ex-post individually rational.
\end{theorem}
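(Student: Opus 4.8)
The plan is to reduce ex-post individual rationality to a single sign condition on the payment of a zero-valued agent, and then to verify that condition combinatorially.

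\textbf{Reduction to the payment at zero.} Since $(f,\mathbf{p})$ is DSIC, the Myerson envelope formula gives, for every $i$ and $v_{-i}$,
\[
v_i f_i(v_i,v_{-i}) - p_i(v_i,v_{-i}) = -\,p_i(0,v_{-i}) + \int_0^{v_i} f_i(t,v_{-i})\,dt .
\]
Because $f_i\ge 0$, the right-hand side is nondecreasing in $v_i$, so agent $i$'s net utility is minimized at $v_i=0$, where it equals $-\,p_i(0,v_{-i})$. Hence the mechanism is ex-post individually rational if and only if $p_i(0,v_{-i})\le 0$ for every $i$ and every $v_{-i}$. By continuity of the ranking allocation rule and of its (by Theorem~\ref{theo:char}) unique payments, it suffices to check this at generic profiles; fix $i$ and write $w_1\ge \cdots\ge w_{n-1}$ for the sorted entries of $v_{-i}$.

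\textbf{The payment of the bottom agent.} Writing each payment in Myerson form $p_j(\mathbf{v}) = v_j f_j(\mathbf{v}) - \int_0^{v_j} f_j(t,v_{-j})\,dt + c(v_{-j})$, where by symmetry $c(v_{-j})=p_j(0,v_{-j})$ is one symmetric function, budget balance $\sum_j p_j(\mathbf{v})=0$ becomes a functional equation for $c$. Solving it with the ansatz that $c$ is a linear combination of the order statistics of its argument, and matching coefficients, yields $\gamma_1=0$ and the recursion
\[
(n-m)\gamma_m + (m-1)\gamma_{m-1} = (m-1)(\pi_m-\pi_{m-1}), \qquad m=2,\dots,n-1,
\]
whose $m=n$ instance reproduces the implementability condition of Proposition~\ref{prop:rank}. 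Thus $p_i(0,v_{-i})=\sum_{r=2}^{n-1}\gamma_r w_r$, and since the $w_r$ are ordered and nonnegative, Abel summation shows $p_i(0,v_{-i})\le 0$ for all admissible $v_{-i}$ if and only if every partial sum $G_t:=\sum_{r=2}^{t}\gamma_r$ satisfies $G_t\le 0$ for $t=2,\dots,n-1$.

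\textbf{Solving for the partial sums.} Summing the recursion telescopes into a first-order recursion,
\[
(n-t)G_t + t\,G_{t-1} = R_t, \qquad R_t = (t-1)\pi_t - \sum_{m=1}^{t-1}\pi_m, \qquad G_1=0,
\]
and for a two-step rule $(\pi_1,\ell)$ the right-hand side is piecewise constant: $R_t=-(\pi_1-\pi)$ for $2\le t\le\ell$ and $R_t=-1$ for $\ell<t\le n-1$. Using the integrating factor $\mu_t=(-1)^{t-1}\binom{n}{t}/n$ the recursion telescopes to
\[
G_t = \frac{\sum_{s=2}^{t}(-1)^{s-1}\binom{n}{s}R_s}{(-1)^{t-1}(n-t)\binom{n}{t}},
\]
so $\operatorname{sign}(G_t)=\operatorname{sign}\!\big(\sum_{s=2}^{t}(-1)^{t-s}\binom{n}{s}R_s\big)$, and the partial alternating binomial sums collapse through the identity $\sum_{s=0}^{t}(-1)^{t-s}\binom{n}{s}=\binom{n-1}{t}$.

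\textbf{The crux and the main obstacle.} For $2\le t\le\ell$ the numerator equals $-(\pi_1-\pi)A_t$ with $A_t=\binom{n-1}{t}+(-1)^t(n-1)$, which is nonnegative because, by unimodality, $\binom{n-1}{t}\ge\binom{n-1}{1}=n-1$ for $1\le t\le n-2$; hence $G_t\le 0$ here with no further hypothesis. The delicate range is $\ell<t\le n-1$, where the two constant pieces of $R_t$ combine and the sign of $G_t$ reduces to an inequality between $\binom{n-1}{t}$ and a multiple of $\binom{n-1}{\ell}$ (with the multiplier fixed by implementability, since $\pi=1/(\binom{n-2}{\ell-1}+\ell)$ forces $\pi_1-\pi=\binom{n-2}{\ell-1}/(\binom{n-2}{\ell-1}+\ell)$). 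It is precisely to keep $\ell$ on the increasing side of the binomial profile — so that the contributions from ranks below $\ell$ cannot overwhelm those above — that the hypothesis $2\ell\le n+1$ is invoked. Establishing these binomial estimates for every $t>\ell$ is the main obstacle; once they hold, $G_t\le 0$ for all $t$, and by the first step the mechanism is ex-post individually rational.
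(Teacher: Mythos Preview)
Your reduction and the recursion $(n-m)\gamma_m+(m-1)\gamma_{m-1}=(m-1)(\pi_m-\pi_{m-1})$ are correct, and your Abel reformulation (IR $\Leftrightarrow$ $G_t\le 0$ for all $t$) is the right invariant. The argument for $2\le t\le\ell$ is also fine: $A_t=\binom{n-1}{t}+(-1)^t(n-1)\ge 0$ needs only $t\le n-2$, which holds because $\ell$ is even and $\ell\le n-1$ force $t\le\ell-1\le n-2$ for the odd $t$.

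The genuine gap is that you stop at the ``main obstacle'' $\ell<t\le n-1$ and never establish the inequality there. In fact you have mislocated the difficulty: that range is trivial, not delicate. Feed the implementability relation $\pi_1-\pi_2=\binom{n-2}{\ell-1}\pi_2$ back into your own recursion to get $\gamma_\ell=-\pi_2$, whence the $m=\ell+1$ instance gives $(n-\ell-1)\gamma_{\ell+1}=-\ell\pi_2-\ell\gamma_\ell=0$, so $\gamma_r=0$ for all $r>\ell$ and $G_t=G_\ell\le 0$ for every $t\ge\ell$. Equivalently, in your closed form the two pieces of $R_s$ combine via the identity $\ell\binom{n-1}{\ell}=(n-1)\binom{n-2}{\ell-1}$, and the numerator collapses to $-\binom{n-1}{t}<0$. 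Either way the ``binomial estimates'' you postpone are a one-line identity, not an obstacle, and the hypothesis $2\ell\le n+1$ is never invoked; your route, once completed, actually proves IR for \emph{every} satisfactory two-step mechanism.

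For comparison, the paper's proof is quite different: it computes $p_i(0,v_{-i})$ explicitly by induction on the number $K$ of nonzero reports (Lemmas~\ref{lem:ir1} and~\ref{lem:ir2}), obtaining an alternating sum in $v_2,\dots,v_{\min(K,\ell)}$, and then pairs consecutive terms to bound $(n-j-1)v_j-jv_{j+1}\ge(n-2j-1)v_{j+1}$. That crude pairwise bound is where $2\ell\le n+1$ enters; it is sufficient but, as your approach shows when completed, not necessary.
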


The r-optimal allocation rule in Theorem \ref{theo:nrgl} satisfies the
sufficient condition identified in Theorem \ref{theo:ir}.

\begin{cor}
\label{cor:optir} Suppose $f$ is the r-optimal allocation rule identified in
Theorem \ref{theo:nrgl}. If $(f,\mathbf{p})$ is a satisfactory mechanism,
then it is ex-post individually rational.
\end{cor}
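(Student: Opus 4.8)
The plan is to verify that the r-optimal allocation rule from Theorem \ref{theo:nrgl} satisfies the sufficient condition of Theorem \ref{theo:ir}, namely that it is a two-step ranking allocation rule defined by some $(\pi_1,\ell)$ with $2\ell \le n+1$. Since Theorem \ref{theo:nrgl} already tells us the r-optimal rule is a two-step ranking allocation rule with the specific value of $\ell$ identified there, the entire task reduces to checking the single inequality $2\ell \le n+1$, after which Corollary \ref{cor:optir} follows immediately from Theorem \ref{theo:ir}.

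First I would invoke Corollary \ref{cor:comp}, which gives the explicit value of $\ell$ in all cases. For $n < 8$ we have $\ell = 2$, so $2\ell = 4 \le n+1$ holds precisely when $n \ge 3$, which is guaranteed by our standing assumption. For $n = 8$, the relevant value is $\ell = 2$ (one may also take $\ell = 4$, but $\ell = 2$ suffices and gives $2\ell = 4 \le 9$); in fact for $n=8$ both choices satisfy $2\ell \le n+1 = 9$. The substantive case is $n \ge 9$, where Corollary \ref{cor:comp} gives $\ell = \lfloor \frac{n+1}{2}\rfloor_e$, the largest even number smaller than $\frac{n+1}{2}$.

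For the case $n \ge 9$, I would argue directly from the definition of $\lfloor \cdot \rfloor_e$. Since $\ell$ is the largest even integer strictly smaller than $\frac{n+1}{2}$, we certainly have $\ell < \frac{n+1}{2}$, which rearranges to $2\ell < n+1$, hence $2\ell \le n+1$. This is the crux of the verification: the floor-to-even operation $\lfloor \frac{n+1}{2}\rfloor_e$ produces a value bounded above by $\frac{n+1}{2}$, and multiplying by two recovers the bound in Theorem \ref{theo:ir} with room to spare. One minor point to handle carefully is the strict-versus-weak inequality in the definition of $\lfloor x \rfloor_e$ (largest even number \emph{smaller} than $x$), but since this only strengthens the conclusion to a strict inequality, it causes no difficulty.

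I do not anticipate a genuine obstacle here, since the corollary is essentially a bookkeeping consequence of two results already established. The only care needed is to confirm the boundary cases ($n < 8$ and $n = 8$) separately from the generic case $n \ge 9$, and to make sure that whichever value of $\ell$ is selected in the $n = 8$ tie satisfies the hypothesis of Theorem \ref{theo:ir}. Once the inequality $2\ell \le n+1$ is confirmed in every case, the r-optimal rule falls within the class covered by Theorem \ref{theo:ir}, and its unique satisfactory implementation is therefore ex-post individually rational, completing the proof.
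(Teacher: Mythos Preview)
Your proposal is correct and follows exactly the paper's approach: invoke Corollary~\ref{cor:comp} to verify that the r-optimal rule satisfies $2\ell \le n+1$, then apply Theorem~\ref{theo:ir}. The paper's proof is a terse two-line version of precisely this argument, and your more detailed case analysis (including the careful handling of the strict-versus-weak reading of $\lfloor \cdot \rfloor_e$, which in either interpretation yields $2\ell \le n+1$) simply fleshes out what the paper leaves implicit.
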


\begin{proof}
By Corollary \ref{cor:comp}, the r-optimal allocation rule in Theorem \ref%
{theo:nrgl} satisfies $2\ell \le n+1$. By Theorem \ref{theo:ir}, the claim
follows.
\end{proof}

We compute the payments in the mechanisms discussed in Theorem \ref{theo:ir}%
. While the general payment formula for a satisfactory mechanism is quite
complicated (see Theorem \ref{theo:char}), the payment formula for the
mechanisms in Theorem \ref{theo:ir} is easier to express. \\

\noindent {\bf Notation.} For any pair of positive integers, $K, K'$ with
$K \ge K'$, $$\psi(K',K):= K' \times (K'+1) \times \ldots \times K$$ \\

\begin{prop}
\label{prop:2steppay} Suppose $(f,\mathbf{p})$ is a satisfactory mechanism,
where $f$ is a two-step ranking allocation rule defined by $(\pi_1,\ell)$
with $\pi_1+(\ell-1)\pi_2=1$. For any valuation profile $\mathbf{v}$ with $%
v_1 > v_2 > \ldots > v_n >0$, we have

\begin{itemize}
\item if $i=1$, then 
\begin{equation*}
p_i(\mathbf{v})= -\frac{\pi_2}{(\ell-1)!} \Big[ \sum_{k=1}^{\ell-1}(-1)^k
(k-1)! \psi(n-\ell,n-k-1)v_{k+1} \Big].
\end{equation*}

\item if $i \in \{2,\ldots,\ell\}$, then 
\begin{equation*}
p_i(\mathbf{v})= - \frac{\pi_2}{(\ell-1)!} \Big[ \sum_{k=2}^{i-1}(-1)^k
(k-1)! \psi(n-\ell,n-k-1)v_{k} + \sum_{k=i}^{\ell-1}(-1)^k (k-1)!
\psi(n-\ell,n-k-1) v_{k+1} \Big].
\end{equation*}

\item if $i > \ell$, then 
\begin{equation*}
p_i(\mathbf{v})=-\frac{\pi_2}{(\ell-1)!} \Big[ \sum_{k=2}^{\ell-1}(-1)^k
(k-1)! \psi(n-\ell,n-k-1)v_{k} + (-1)^{\ell} (\ell-1)!v_{\ell}\Big].
\end{equation*}
\end{itemize}
\end{prop}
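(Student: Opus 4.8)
The plan is to lean on two facts already available: the Myersonian characterization of dominant-strategy incentive compatibility (the dominant-strategy adaptation of \citet{Myerson81}) and the uniqueness of satisfactory payments recorded in Theorem \ref{theo:char}. The uniqueness is decisive. Since the hypothesis posits a satisfactory $(f,\mathbf{p})$, Theorem \ref{theo:char} guarantees that its payments are the \emph{only} ones compatible with the two-step ranking rule $f$. Hence it suffices to exhibit \emph{some} payment rule that is DSIC-compatible with $f$, budget-balanced, and symmetric; the three displayed formulas must then coincide with it. Concretely, I would verify that the stated formulas (i) satisfy the Myerson payment identity for $f$, (ii) sum to zero at every profile, and (iii) are symmetric, and then invoke uniqueness to conclude.

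First I would record the Myerson identity: for a monotone allocation rule, DSIC is equivalent to $f_i(\cdot,v_{-i})$ being nondecreasing together with
\[
p_i(v_i,v_{-i}) = p_i(0,v_{-i}) + v_i f_i(v_i,v_{-i}) - \int_0^{v_i} f_i(t,v_{-i})\,dt .
\]
A two-step ranking rule is trivially monotone, since as $v_i$ rises agent $i$'s rank improves and its probability moves weakly up through $0,\pi_2,\pi_1$. Fixing a generic profile $v_1>\cdots>v_n$ and letting $t$ sweep $[0,v_i]$, the integrand $f_i(t,v_{-i})$ is a step function equal to $0$ while $i$ has rank above $\ell$, to $\pi_2$ while $i$ has rank in $\{2,\dots,\ell\}$, and to $\pi_1$ once $i$ is top-ranked. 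Integrating this step function expresses the $v_i$-dependent part of $p_i$ as an explicit $\pi_2$-weighted combination of order statistics, and it shows at once that the payment is constant in $i$'s own value within each rank interval (the Vickrey feature visible in the absence of $v_1$ from the $i=1$ formula, and of $v_i$ from the middle-case formula). This reduces the problem to pinning down the base payment $p_i(0,v_{-i})$.

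The crux is the base payment, and here budget balance and symmetry do the work. By symmetry $p_i(0,v_{-i})$ is a single common symmetric function $h$ of the other $n-1$ valuations, and budget balance forces $\sum_i h(v_{-i})$ to equal the negative of the total Myerson (Vickrey-type) revenue at every profile. I would solve this by collecting, for each order statistic $v_j$, its total coefficient across all $n$ agents' payments and showing it vanishes; the alternating signs $(-1)^k$, the factorials $(k-1)!$, and the products $\psi(n-\ell,n-k-1)$ are precisely what make these coefficient sums telescope to zero. The combinatorial heart is an identity weighing the $C(n-\ell,\cdot)$ ways of placing the rank-above-$\ell$ agents against the factorial ordering terms; I would verify it by an inclusion--exclusion (M\"obius-type) inversion over the leave-one-out structure of $v_{-i}$, or, more cleanly, by induction on $n$ exploiting the telescoping relation $\psi(n-\ell,n-k-1)=(n-k-1)\,\psi(n-\ell,n-k-2)$ to drive the inductive step.

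I expect the budget-balance identity to be the main obstacle: verifying DSIC and symmetry is essentially inspection once the Myerson identity is in hand, but showing that the coefficient of each $v_j$ sums to zero across the three cases requires the precise combinatorial cancellation encoded in the $\psi$ products. A useful sanity check en route is the special case $\ell=2$, where the formulas must reduce to the Green--Laffont payments; this both fixes the normalization $\pi_1+(\ell-1)\pi_2=1$ and confirms the sign conventions before tackling general even $\ell$.
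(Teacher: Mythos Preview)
Your approach differs from the paper's, and the uniqueness shortcut does not close as cleanly as you suggest.

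The paper does not verify the displayed formulas; it \emph{derives} them. The key input is Lemma~\ref{lem:ir2} (together with Lemma~\ref{lem:ir1}), proved earlier by induction on the number of agents with nonzero valuation: starting from the all-zero profile, where payments vanish by symmetry and budget balance, and successively activating one positive valuation at a time, the paper recursively computes the base payment $p_i(0,v_{-i})$ for a zero-valued agent $i$ from budget balance and Myerson at the intermediate profiles. With $p_i(0,v_{-i})$ already in hand, Proposition~\ref{prop:2steppay} follows by one application of the Myerson identity at the generic profile, together with the relation $\pi_1-\pi_2=C(n-2,\ell-1)\,\pi_2$ (equivalently $\tfrac{\pi_1-\pi_2}{\psi(n-\ell,n-2)}=\tfrac{\pi_2}{(\ell-1)!}$), which comes from Proposition~\ref{prop:2rank}. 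That relation is what converts the prefactor of Lemma~\ref{lem:ir2} into the $\pi_2/(\ell-1)!$ in the statement; you do not mention it, and without it the constants will not match.

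The gap in your plan is the appeal to uniqueness. The three displayed formulas specify payments only on the open cone of strictly decreasing, strictly positive profiles. Checking Myerson-compatibility and $\sum_i p_i(\mathbf v)=0$ there does not exhibit a satisfactory mechanism on all of $V^n$, so Theorem~\ref{theo:char}'s uniqueness cannot yet be invoked. More concretely, at a single generic $\mathbf v$ budget balance is one linear relation among the $n$ base payments $h(v_{-i})$; what pins them down is the cascade of budget-balance constraints at profiles with zero-valued agents---exactly the recursion the paper packages into Lemmas~\ref{lem:ir1} and~\ref{lem:ir2}. Your M\"obius-inversion remark gestures at this, but carrying it out is essentially reproving those lemmas. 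A pure verification route would require you to extend the formulas across all orderings by symmetry, take limits at ties and at zero valuations, and then check DSIC and budget balance for the extension; that is substantially more work than the paper's forward derivation, which simply reads off $p_i(0,v_{-i})$ from Lemma~\ref{lem:ir2} and adds $R^f_i(\mathbf v)$.
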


In any two step ranking allocation rule $(\pi_1,\ell)$, at a valuation
profile $\mathbf{v}$ with $v_1 > v_2 > \ldots > v_n > 0$, an agent $i$ with $%
i > \ell$ gets the object with zero probability - call such agents \emph{%
losing} agents. According to the payment formula computed in Proposition \ref%
{prop:2steppay}, losing agents receive some payments. Theorem \ref{theo:ir}
shows that losing agents receive non-negative payment if $2\ell \le n+1$.
Hence, participation constraints are satisfied for losing agents in such
class of mechanisms. For two step ranking allocation rules, where $2\ell >
n+1$, it is possible that losing agents may be asked to pay, violating their
participation constraint.

\subsection{Pareto Optimal Ranking Mechanisms}

We now discuss an alternate prior-free notion of comparing mechanisms, where
we compare mechanisms at \emph{every} valuation profile in term of total
social welfare. Informally, a satisfactory mechanism $\mathcal{M}$ dominates
another satisfactory mechanism $\mathcal{M}^{\prime }$ if $\mathcal{M}$
generates as much total welfare as $\mathcal{M}^{\prime }$ in every profile
of valuations and strictly higher in some profile of valuations. A
satisfactory mechanism is \emph{Pareto optimal} if it is not dominated by
any other satisfactory mechanism.

It is a relatively weak notion to compare mechanisms - for instance, it may
be that a Pareto optimal mechanism is dominated by another satisfactory
mechanism at a positive measure of valuation profiles. Two satisfactory
mechanisms may not even be comparable using this notion.

We adapt the notion of Pareto optimality to the class of ranking mechanisms.

\begin{defn}
A ranking allocation rule $f$ is \textbf{r-Pareto optimal} if (i) $f$ is
satisfactorily implementable and (ii) there does not exist another ranking
allocation rule $f^{\prime }$ such that $f^{\prime }$ is satisfactorily
implementable and at every valuation profile $\mathbf{v}$, we have 
\begin{equation*}
\sum_{i \in N}v_if^{\prime }_i(\mathbf{v}) \ge \sum_{i \in N}v_if_i(\mathbf{v%
}),
\end{equation*}
with strict inequality holding at some $\mathbf{v}$.

A ranking mechanism $(f,\mathbf{p})$ is r-Pareto optimal if (i) $(f,\mathbf{p%
})$ is a satisfactory mechanism and (ii) $f$ is r-Pareto optimal.
\end{defn}

We first show that the GL allocation rule is an r-Pareto optimal allocation
rule.

\begin{theorem}
\label{theo:rgl} The GL allocation rule is an r-Pareto optimal allocation
rule. Moreover, it is the unique r-Pareto optimal allocation rule satisfying 
$\pi_3=\ldots=\pi_n=0$. \newline
\end{theorem}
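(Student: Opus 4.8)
\noindent The plan is to reduce the welfare-domination relation between two ranking rules to a simple condition on the partial sums of their probability vectors, and then to combine that condition with the implementability equation from Proposition \ref{prop:rank}. First I would record that for a ranking allocation rule with probabilities $(\pi_1,\ldots,\pi_n)$, the total welfare at any profile $\mathbf{v}$ with order statistics $v_1 \ge v_2 \ge \cdots \ge v_n \ge 0$ equals $\sum_{k=1}^n \pi_k v_k$; this holds even at ties, because the symmetric tie-breaking of Definition \ref{def:rank} preserves this sum, exactly as in the continuity discussion following that definition. Hence a satisfactorily implementable rule $f'=(\pi'_1,\ldots,\pi'_n)$ weakly dominates $f=(\pi_1,\ldots,\pi_n)$ in welfare at every profile if and only if $\sum_{k=1}^n(\pi'_k-\pi_k)v_k \ge 0$ for every $v_1 \ge \cdots \ge v_n \ge 0$.

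\noindent The key lemma I would then establish is an Abel-summation characterization of this inequality. Writing $\delta_k = \pi'_k - \pi_k$ and partial sums $D_j = \sum_{k=1}^j \delta_k$, summation by parts gives $\sum_{k=1}^n \delta_k v_k = \sum_{j=1}^{n-1} D_j\,(v_j - v_{j+1}) + D_n v_n$. Since the increments $v_j - v_{j+1} \ge 0$ and $v_n \ge 0$ range over all non-negative reals independently, this expression is non-negative for all ordered $\mathbf{v}$ if and only if $D_j \ge 0$ for every $j$, i.e. $\sum_{k=1}^j \pi'_k \ge \sum_{k=1}^j \pi_k$ for all $j \in N$, with strict domination at some profile corresponding to $D_j > 0$ for at least one $j$. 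This is the main technical step and the place to be careful: I must verify that the increments can be chosen freely and independently, so that each partial-sum inequality is separately necessary as well as sufficient.

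\noindent With this lemma, r-Pareto optimality of the GL rule is short. Suppose some satisfactorily implementable $f'$ dominated GL, whose partial sums are $S_1 = 1-\tfrac{1}{n}$ and $S_j = 1$ for $j \ge 2$. Domination forces $\sum_{k=1}^j \pi'_k \ge 1$ for every $j \ge 2$; but $\sum_{k=1}^j \pi'_k \le \sum_{k=1}^n \pi'_k \le 1$, so $\sum_{k=1}^j \pi'_k = 1$ for all $j \ge 2$, whence $\pi'_3 = \cdots = \pi'_n = 0$ and $\pi'_1 + \pi'_2 = 1$. Substituting $\pi'_3 = \cdots = \pi'_n = 0$ into the implementability equation of Proposition \ref{prop:rank} gives $-\pi'_1 + (n-1)\pi'_2 = 0$, which together with $\pi'_1 + \pi'_2 = 1$ forces $\pi'_1 = 1-\tfrac{1}{n}$, $\pi'_2 = \tfrac{1}{n}$. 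Thus $f'$ is the GL rule itself, so no strict improvement exists and GL is r-Pareto optimal.

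\noindent For uniqueness among rules with $\pi_3 = \cdots = \pi_n = 0$, the same implementability equation shows that every satisfactorily implementable such rule has the form $(\,(n-1)t,\ t,\ 0,\ldots,0\,)$ with $0 \le t \le \tfrac{1}{n}$, the GL rule being $t = \tfrac{1}{n}$. Comparing partial sums, for any $t < \tfrac{1}{n}$ one has $S_1 = (n-1)t < 1-\tfrac{1}{n}$ and $S_j = nt < 1$ for $j \ge 2$, so the GL rule strictly dominates it by the lemma; hence no such rule other than GL can be r-Pareto optimal. Combined with the first part, this shows GL is the unique r-Pareto optimal allocation rule satisfying $\pi_3 = \cdots = \pi_n = 0$, completing the proof.
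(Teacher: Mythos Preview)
Your proof is correct and follows essentially the same route as the paper. The paper isolates your Abel-summation step as a separate lemma (Lemma~\ref{lem:fosd}, preceded by Lemma~\ref{lem:add1}) stating that r-Pareto optimality is equivalent to non-domination in the first-order stochastic sense, and then runs exactly your contradiction argument using Proposition~\ref{prop:rank}; you simply prove that FOSD characterization inline and give slightly more detail in the uniqueness part by parametrizing the one-dimensional family $((n-1)t,t,0,\ldots,0)$.
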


Theorem \ref{theo:rgl} gives a foundation for the GL mechanism. Among all
ranking mechanisms that only allocate the object to top-two agents, the GL
mechanism is the unique r-Pareto optimal mechanism. As we show in the next
result, if $n \le 8$, the GL mechanism is the unique r-Pareto optimal
mechanism, but there are other r-Pareto optimal mechanisms if the number of
agents is greater than $8$. In particular, our r-optimal mechanism is always
r-Pareto optimal.

\begin{prop}
\label{prop:rpar} For $n \le 8$, the GL allocation rule is the unique
r-Pareto optimal allocation rule. For $n > 8$, the unique r-optimal
allocation rule identified in Theorem \ref{theo:nrgl} is also r-Pareto
optimal. Further, for any arbitrary r-Pareto optimal allocation rule $%
(\pi_1,\ldots,\pi_n)$, we have 
\begin{equation*}
1-1/n \le \pi_1 \le \pi_1^*,
\end{equation*}
where $\pi_1^*$ is as defined in Theorem \ref{theo:nrgl}.
\end{prop}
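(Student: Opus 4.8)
The plan is to recast welfare-domination between two ranking rules as a comparison of partial sums, after which all three assertions fall out of Theorems \ref{theo:nrgl} and \ref{theo:rgl}. Throughout, for a ranking rule with probabilities $(\pi_1,\dots,\pi_n)$ I write $P_k:=\sum_{j=1}^k\pi_j$. First I would record that, by the tie-breaking in Definition \ref{def:rank} (which the text already notes keeps total welfare continuous), the welfare of a ranking rule at a profile with order statistics $v_{(1)}\ge\cdots\ge v_{(n)}\ge 0$ equals $\sum_{j=1}^n\pi_j v_{(j)}$. Hence $f'$ yields at least as much welfare as $f$ at every profile iff $\sum_{j=1}^n(\pi'_j-\pi_j)v_{(j)}\ge 0$ for all $v_{(1)}\ge\cdots\ge v_{(n)}\ge 0$. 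Abel summation turns the left side into $\sum_{k=1}^n(P'_k-P_k)\big(v_{(k)}-v_{(k+1)}\big)$ with $v_{(n+1)}:=0$, and since the increments $v_{(k)}-v_{(k+1)}$ range over all nonnegative values, this holds for every profile iff $P'_k\ge P_k$ for all $k$. Taking $\mathbf v$ with $k$ coordinates equal to $1$ and the rest $0$ shows the comparison is strict somewhere iff $\pi'\ne\pi$. So $f'$ dominates $f$ iff $P'_k\ge P_k$ for all $k$ and $\pi'\ne\pi$; this is the only lemma I need.

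Next I would prove the two bounds. Since an r-Pareto optimal rule is satisfactorily implementable, the inequality $\pi_1\le\pi_1^*$ is immediate from the r-optimality of $f^*$ (Theorem \ref{theo:nrgl}). For the lower bound, suppose $f$ is r-Pareto optimal with $\pi_1<1-1/n$. The GL rule is satisfactorily implementable and has $P^{GL}_1=1-1/n$ and $P^{GL}_k=1$ for $k\ge 2$; as every feasible rule obeys $P_k\le\sum_j\pi_j\le 1$, we get $P^{GL}_k\ge P_k$ for all $k$ with strict inequality at $k=1$, and the two rules differ. By the lemma GL dominates $f$, contradicting r-Pareto optimality. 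Hence $1-1/n\le\pi_1\le\pi_1^*$, the claimed bound.

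For $n>8$ I would show $f^*$ is r-Pareto optimal directly. If a satisfactorily implementable ranking rule $f'$ dominated $f^*$, then $P'_1\ge P_1^*$ (equivalently, comparing welfare at $(1,0,\dots,0)$) gives $\pi'_1\ge\pi_1^*$; r-optimality of $f^*$ forces $\pi'_1\le\pi_1^*$, so $\pi'_1=\pi_1^*$ and $f'$ is itself r-optimal. Because $n\ne 8$, Theorem \ref{theo:nrgl} makes the r-optimal rule unique, so $f'=f^*$; but a dominating rule must differ from $f^*$, a contradiction. Thus $f^*$ is r-Pareto optimal.

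For $n\le 8$, the formula of Theorem \ref{theo:nrgl} evaluated at the minimizer(s) of Corollary \ref{cor:comp} (namely $\ell=2$, and also $\ell=4$ when $n=8$) gives $\pi_1^*=1-1/n$; with the two bounds this pins every r-Pareto optimal rule to $\pi_1=1-1/n=P^{GL}_1$. If such a rule had $P_2<1$, then since $P^{GL}_k=1\ge P_k$ for $k\ge2$ with strict inequality at $k=2$, the lemma would make GL dominate it — impossible. Hence $P_2=1$, which forces $\pi_2=1/n$ and $\pi_3=\cdots=\pi_n=0$, i.e. the rule is GL; and GL is r-Pareto optimal by Theorem \ref{theo:rgl}. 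I expect the two delicate points to be the justification that the Definition \ref{def:rank} tie-breaking makes welfare exactly $\sum_j\pi_jv_{(j)}$ at every profile (so the Abel criterion is valid even at the boundary profiles used for strictness), and the bookkeeping that $\pi_1^*=1-1/n$ holds across all $n\le8$, in particular that at $n=8$ the genuinely different r-optimal rule with $\ell=4$ is correctly discarded by the $P_2<1$ step.
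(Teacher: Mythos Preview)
Your proof is correct and follows essentially the same route as the paper: the paper isolates the partial-sum (FOSD) criterion as a separate Lemma~\ref{lem:fosd} (together with Lemma~\ref{lem:add1} on $\sum_i\pi_i=1$), then applies it exactly as you do---GL dominates anything with $\pi_1<1-1/n$, and for $n>8$ any dominator of $f^*$ would have $\pi'_1=\pi_1^*$ and hence equal $f^*$ by uniqueness. Your Abel-summation derivation of the FOSD criterion is the same content as Lemma~\ref{lem:fosd}, and your $P_2<1$ step for $n\le 8$ is precisely how the paper concludes uniqueness there (the paper phrases it as ``$\pi_1+\pi_2=1$ in the GL rule implies GL FOSD-dominates every other satisfactorily implementable ranking rule'').
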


\section{Satisfactory Implementability}

\label{sec:proof}

In this section, we provide a characterization that drives all our main
results. In particular, we provide a complete characterization of allocation
rules which can be satisfactorily implemented. Besides the technical aspect,
there are other reasons why such a characterization is useful: (1) it
provides a recipe for carrying out such analysis of satisfactory mechanisms
in other models and (2) it showcases the rich but complex class of
non-ranking mechanisms that are satisfactory, thus, highlighting the
salience of ranking mechanisms.

Before stating the characterization, we remind the reader about the
following characterization of DSIC mechanisms by Myerson.~\footnote{%
The characterization in Myerson is for Bayesian incentive compatible
mechanisms. However, it is straightforward to extend it to DSIC mechanisms.}

\begin{lemma}[\citet{Myerson81}]
\label{lem:my} A mechanism $(f,\mathbf{p})$ is DSIC if and only if

\begin{itemize}
\item \textbf{Montonicity of} $f$. for every $i \in N$, for every $v_{-i}
\in V^{n-1}$, and for every $v_i,v^{\prime }_i \in V$ with $v_i > v^{\prime
}_i$, we have 
\begin{equation*}
f_i(v_i,v_{-i}) \ge f_i(v^{\prime }_i,v_{-i}).
\end{equation*}

\item \textbf{Revenue Equivalence.} for every $i \in N$, for every $v_{-i}
\in V^{n-1}$, and for every $v_i \in V$, we have 
\begin{equation*}
p_i(v_i,v_{-i}) = p_i(0,v_{-i}) + v_if_i(v_i,v_{-i}) -
\int_0^{v_i}f_i(x_i,v_{-i})dx_i.
\end{equation*}
\end{itemize}
\end{lemma}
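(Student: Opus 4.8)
The plan is to fix an agent $i$ and a report profile $v_{-i}$ of the others and exploit the defining feature of dominant-strategy incentive compatibility: agent $i$'s constraints at a fixed $v_{-i}$ involve only his own allocation and payment, so the problem reduces to a one-dimensional screening problem. Writing $g(t):=f_i(t,v_{-i})$ and $q(t):=p_i(t,v_{-i})$, DSIC says exactly that $tg(t)-q(t)\ge tg(s)-q(s)$ for all $t,s\in V$. I would prove the two directions separately.

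For the forward direction, I would first obtain monotonicity by adding the constraint at $(t,s)$ to the one at $(s,t)$; the payment terms cancel and leave $(t-s)\big(g(t)-g(s)\big)\ge 0$, so $g$ is nondecreasing. For revenue equivalence I would introduce the indirect utility $U(t):=tg(t)-q(t)$. The incentive constraints state $U(t)=\sup_{s\in V}\big[g(s)t-q(s)\big]$, a pointwise supremum of affine functions of $t$, so $U$ is convex; moreover the affine map $t\mapsto g(s)t-q(s)$ lies below $U$ and touches it at $t=s$, so $g(s)$ is a subgradient of $U$ at $s$. Since $g=f_i$ takes values in $[0,1]$, these subgradients are uniformly bounded and $U$ is Lipschitz, hence absolutely continuous; a finite convex function is differentiable off a countable set, and there its derivative is its unique subgradient, so $U'(t)=g(t)$ almost everywhere. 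The fundamental theorem of calculus then gives $U(t)=U(0)+\int_0^t g(x)\,dx$, and substituting $U(0)=-q(0)=-p_i(0,v_{-i})$ into $q(t)=tg(t)-U(t)$ recovers the stated payment formula.

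For the converse, I would assume monotonicity and the payment formula and verify the constraints directly. Substituting the formula for $q$, the gain from truthfully reporting $v_i$ rather than misreporting $v'_i$ collapses, after the $p_i(0,v_{-i})$ terms cancel, to $\int_{v'_i}^{v_i}\big[g(x)-g(v'_i)\big]\,dx$. If $v_i>v'_i$, monotonicity makes the integrand nonnegative on $[v'_i,v_i]$, so the gain is nonnegative; if $v_i<v'_i$, flipping the limits rewrites it as $\int_{v_i}^{v'_i}\big[g(v'_i)-g(x)\big]\,dx$, again nonnegative by monotonicity; the case $v_i=v'_i$ is trivial. Hence truthtelling is optimal and the mechanism is DSIC.

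The step I expect to require the most care is the integral representation of $U$ in the forward direction, i.e.\ passing from ``$g(t)$ is a subgradient of the convex function $U$ at every $t$'' to ``$U(t)=U(0)+\int_0^t g(x)\,dx$.'' The cleanest justification uses that the slopes $g(s)$ lie in $[0,1]$, so $U$ is Lipschitz and therefore absolutely continuous, together with the classical fact that a finite convex function is differentiable off a countable set with derivative equal to its subgradient there; the fundamental theorem of calculus then applies with integrand $g$. The remaining manipulations — isolating monotonicity by adding paired constraints, and the sign analysis of $\int_{v'_i}^{v_i}[g(x)-g(v'_i)]\,dx$ — are routine.
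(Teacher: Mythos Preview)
Your proof is correct and is the standard Myerson argument. The paper itself does not prove this lemma at all: it is stated with attribution to \citet{Myerson81} and used as a black box, so there is no ``paper's own proof'' to compare against. Your write-up supplies exactly the classical proof one would expect --- paired constraints for monotonicity, the envelope/convexity argument for the integral formula, and the direct sign check for sufficiency --- and the care you flag about passing from subgradients to the integral representation via Lipschitz continuity of $U$ is the right place to be careful.
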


For any mechanism $M \equiv (f,\mathbf{p})$, we define $\mathcal{U}^M_i(%
\mathbf{v})$ as the net utility of agent $i$ at valuation profile $\mathbf{v}
$: 
\begin{equation*}
\mathcal{U}^M_i(\mathbf{v})= v_if_i(\mathbf{v}) - p_i(\mathbf{v}).
\end{equation*}
A consequence of the Myersonian characterization of DSIC is the following
characterization of DSIC and budget-balanced mechanisms.

\begin{prop}
\label{prop:myerson} A mechanism $M \equiv (f,\mathbf{p})$ is DSIC and
budget-balanced if and only if

\begin{enumerate}
\item for every $i \in N$, for every $v_{-i} \in V^{n-1}$, and for every $%
v_i, v^{\prime }_i \in V$ with $v_i > v^{\prime }_i$ we have 
\begin{align*}
f_i(v_i,v_{-i}) \ge f_i(v^{\prime }_i,v_{-i}).
\end{align*}

\item for every $i \in N$, for every $v_{-i} \in V^{n-1}$, for every $v_i
\in V$, we have 
\begin{align*}
\mathcal{U}_i^M(v_i,v_{-i}) &=\mathcal{U}_i^M(0,v_{-i}) + \int_0^{v_i}
f_i(x_i,v_{-i})dx_i.
\end{align*}

\item for every $\mathbf{v} \equiv (v_1,\ldots,v_n) \in V^n$, 
\begin{align*}
\sum_{i \in N}\mathcal{U}_i^M(0,v_{-i}) &= \sum_{i \in N}\big[v_i f_i(%
\mathbf{v}) - \int_0^{v_i} f_i(x_i,v_{-i})dx_i \big].
\end{align*}
\end{enumerate}
\end{prop}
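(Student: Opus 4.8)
The plan is to translate Myerson's characterization (Lemma~\ref{lem:my}) into the net-utility notation $\mathcal{U}_i^M$ and then fold the budget-balance constraint into it. The only nontrivial observation is that conditions (2) and (3) together repackage ``revenue equivalence plus budget-balance'' so as to refer to allocation probabilities and baseline utilities only, with payments eliminated.

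First I would rewrite the revenue-equivalence equation of Lemma~\ref{lem:my} in terms of net utilities. Since $p_i(\mathbf{v}) = v_i f_i(\mathbf{v}) - \mathcal{U}_i^M(\mathbf{v})$, evaluating at $v_i=0$ gives $p_i(0,v_{-i}) = -\mathcal{U}_i^M(0,v_{-i})$. Substituting both expressions into Myerson's payment formula, the terms $v_i f_i(v_i,v_{-i})$ cancel on the two sides, and after multiplying through by $-1$ one recovers exactly condition (2). This step is reversible, so condition (2) is algebraically equivalent to Myerson's revenue-equivalence equation given the definition of $\mathcal{U}_i^M$. Condition (1) is verbatim the monotonicity requirement of Lemma~\ref{lem:my}. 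Hence conditions (1) and (2) together are equivalent to DSIC.

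It then remains to show that, given DSIC (that is, conditions (1) and (2)), budget-balance is equivalent to condition (3). Summing $p_i(\mathbf{v}) = v_i f_i(\mathbf{v}) - \mathcal{U}_i^M(\mathbf{v})$ over $i$, the requirement $\sum_i p_i(\mathbf{v})=0$ is equivalent to $\sum_i \mathcal{U}_i^M(\mathbf{v}) = \sum_i v_i f_i(\mathbf{v})$. I would then substitute condition (2) for each $\mathcal{U}_i^M(\mathbf{v})$ on the left, which replaces $\sum_i \mathcal{U}_i^M(\mathbf{v})$ by $\sum_i \mathcal{U}_i^M(0,v_{-i}) + \sum_i \int_0^{v_i} f_i(x_i,v_{-i})\,dx_i$; moving the integral sum to the right yields precisely condition (3). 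Every manipulation here is an equivalence, so budget-balance holds if and only if condition (3) holds under conditions (1)--(2).

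Assembling the two directions is routine. For the forward direction, DSIC gives (1) and (2) via Lemma~\ref{lem:my}, and budget-balance then gives (3) by the computation above. For the converse, (1) and (2) yield DSIC by Lemma~\ref{lem:my}, while (2) together with (3) yield $\sum_i \mathcal{U}_i^M(\mathbf{v}) = \sum_i v_i f_i(\mathbf{v})$ and hence budget-balance. There is no genuine obstacle; the one point requiring care is bookkeeping of the free constants $\mathcal{U}_i^M(0,v_{-i})$ (equivalently $-p_i(0,v_{-i})$), since it is exactly the \emph{sum} of these baseline utilities that budget-balance pins down in condition (3), rather than any individual payment.
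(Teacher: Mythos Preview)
Your proof is correct and follows essentially the same route as the paper: first identify conditions (1) and (2) with Myerson's monotonicity and revenue-equivalence (hence DSIC), then, given (2), rewrite budget-balance as $\sum_i \mathcal{U}_i^M(\mathbf{v}) = \sum_i v_i f_i(\mathbf{v})$ and substitute to obtain (3). The paper's argument is terser but otherwise identical.
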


\begin{proof}
From Lemma \ref{lem:my}, (1) and (2) are equivalent to DSIC. For (3), note
that budget-balance of a mechanism $M \equiv (f,\mathbf{p})$ requires that
for all $\mathbf{v} \equiv (v_1,\ldots,v_n) \in V^n$, we must have 
\begin{align*}
\sum_{i \in N}\mathcal{U}^M_i(\mathbf{v}) &= \sum_{i \in N}v_i f_i(\mathbf{v}%
).
\end{align*}
Using (2), we conclude that a DSIC mechanism is budget-balanced if and only
if for all $\mathbf{v} \equiv (v_1,\ldots,v_n) \in V^n$, 
\begin{align*}
\sum_{i \in N}\mathcal{U}_i^M(0,v_{-i}) + \sum_{i \in N}\int_0^{v_i}
f_i(x_i,v_{-i})dx_i &= \sum_{i \in N}v_i f_i(\mathbf{v}).
\end{align*}
Equivalently, a DSIC mechanism $M \equiv (f,\mathbf{p})$ is budget-balanced
if and only if for all $\mathbf{v} \equiv (v_1,\ldots,v_n) \in V^n$, 
\begin{align*}
\sum_{i \in N}\mathcal{U}_i^M(0,v_{-i}) &= \sum_{i \in N}\big[v_i f_i(%
\mathbf{v}) - \int_0^{v_i} f_i(x_i,v_{-i})dx_i \big].
\end{align*}
\end{proof}

Our main characterization, like Myerson's characterization, provides a way
to separate out the allocation rule and the payment rule in a satisfactory
mechanism. While Myerson does not impose budget-balance, our result shows
that this separation continues to hold even if we impose budget-balance.

Fix an allocation rule $f$. If $f$ is monotone (in the sense of Lemma \ref%
{lem:my}), then we can immediately define a payment scheme $\mathbf{p}$ that
makes $(f,\mathbf{p})$ DSIC as follows: for every $i \in N$ and for every $%
\mathbf{v}$, set 
\begin{equation*}
p_i(\mathbf{v})= v_i f_i(\mathbf{v}) - \int_0^{v_i}f_i(x_i,v_{-i})dx_i.
\end{equation*}
Note that $p_i(0,v_{-i})=0$ for all $i$ and for all $v_{-i}$ in this
mechanism. We call a mechanism defined from such a payment scheme as the 
\textbf{elementary mechanism} corresponding to a monotone $f$. It can be
easily verified that if $f$ is the efficient allocation rule, then the
corresponding elementary mechanism is the Vickrey auction.

For every valuation profile $\mathbf{v}$, define for every $i \in N$, the
payment of agent $i$ in the elementary mechanism corresponding to a monotone 
$f$ as: 
\begin{equation*}
R^f_i(\mathbf{v}):= v_i f_i(\mathbf{v}) - \int_0^{v_i}f_i(x_i,v_{-i})dx_i.
\end{equation*}
Then, 
\begin{equation*}
R^f(\mathbf{v}):=\sum_{i \in N}R^f_i(\mathbf{v}),
\end{equation*}
denotes the total revenue collected at valuation profile $\mathbf{v}$ in the
elementary mechanism corresponding to $f$.

We will provide necessary and sufficient conditions on $f$ for it to be
satisfactorily implementable. These conditions are given in terms of revenue
collected from the elementary mechanism corresponding to $f$ at various
valuation profiles.

At any valuation profile $\mathbf{v}$, define $N^0_{\mathbf{v}}:=\{i \in N:
v_i=0\}$. Given any valuation profile $\mathbf{v}$, for any $T \subseteq N$,
we denote by $(0_T,v_{-T})$ the valuation profile where all the agents in $T$
have zero valuation and each agent $i \notin T$ has valuation $v_i$.

\begin{defn}
An allocation rule $f$ is \textbf{residually balanced} if for every $\mathbf{%
v}$ such that $N^0_{\mathbf{v}} = \emptyset$, we have 
\begin{align}  \label{eq:balance}
\sum_{T \subseteq N}(-1)^{|T|}~R^f(0_T,v_{-T}) &= 0.
\end{align}
\end{defn}

Residual balancedness is a technical combinatorial condition on an
allocation rule. We show that for a symmetric and monotone allocation rule
residual balancedness is necessary and sufficient for satisfactory
implementability.

\begin{theorem}
\label{theo:char} A symmetric allocation rule $f$ is satisfactorily
implementable if and only if it is (a) monotone and (b) residually balanced. 
\newline

Further, if $f$ is satisfactorily implementable, then there is a unique $%
\mathbf{p}$ such that $(f,\mathbf{p})$ is a satisfactory mechanism. Such a
unique $\mathbf{p}$ is defined as follows: for all $\mathbf{v} \in V^n$, for
all $i \in N$, 
\begin{align*}
p_i(\mathbf{v}) &= - \frac{1}{|N^0_{\mathbf{v}}|} \sum_{T \subseteq N: N^0_{%
\mathbf{v}} \subseteq T} \frac{(-1)^{|T \setminus N^0_{\mathbf{v}} |}}{%
C(|T|,|N^0_{\mathbf{v}}|)} R^f(0_T,v_{-T}) ~\qquad~\text{if $i \in N^0_{%
\mathbf{v}}$} \\
p_i(\mathbf{v}) &= R^f_i(\mathbf{v}) - \frac{1}{|N^0_{\mathbf{v}}|+1}
\sum_{T \subseteq N: (N^0_{\mathbf{v}} \cup \{i\}) \subseteq T} \frac{%
(-1)^{|T \setminus N^0_{\mathbf{v}} | - 1}}{C(|T|,(|N^0_{\mathbf{v}}|+1))}%
R^f(0_T,v_{-T})~\qquad~\text{if $i \notin N^0_{\mathbf{v}}$}
\end{align*}
\end{theorem}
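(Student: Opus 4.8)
The plan is to reduce the whole statement to a single additive functional equation and then solve it, using Proposition \ref{prop:myerson} as the entry point. Fix a symmetric allocation rule $f$ and, for any mechanism $(f,\mathbf p)$, write $h_i(v_{-i})$ for the net utility $\mathcal U_i(0,v_{-i})$ of agent $i$ when his own value is zero; this quantity depends only on $v_{-i}$. By Proposition \ref{prop:myerson}, $(f,\mathbf p)$ is DSIC and budget-balanced exactly when $f$ is monotone (this is clause (a)), the payments take the form $p_i(\mathbf v)=R^f_i(\mathbf v)-h_i(v_{-i})$, and budget-balance holds; since $\sum_{i}p_i(\mathbf v)=R^f(\mathbf v)-\sum_i h_i(v_{-i})$, budget-balance is equivalent to
\begin{align*}
\sum_{i\in N}h_i(v_{-i})=R^f(\mathbf v)\qquad\text{for all }\mathbf v\in V^n.
\end{align*}
Symmetry of $f$ makes $R^f$ a symmetric function and forces the elementary revenues to treat equals alike, $R^f_i(\mathbf v)=R^f_j(\mathbf v)$ when $v_i=v_j$; so a satisfactory $\mathbf p$ can be sought by taking the $h_i$ equal to a single symmetric potential $h$. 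Everything then hinges on when this decomposition admits such a solution and whether it is unique.

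Necessity of residual balancedness I would get by a clean finite-difference computation. Introduce the commuting operators $\Delta_i g(\mathbf v):=g(\mathbf v)-g(0_i,v_{-i})$, for which $\prod_{i\in N}\Delta_i\,g(\mathbf v)=\sum_{T\subseteq N}(-1)^{|T|}g(0_T,v_{-T})$. Because $h_i$ does not depend on $v_i$, we have $\Delta_i h_i\equiv 0$, hence $\prod_{j\in N}\Delta_j\,h_i\equiv 0$ for every $i$. Applying $\prod_{j\in N}\Delta_j$ to both sides of the displayed decomposition annihilates the left-hand side and yields $\sum_{T\subseteq N}(-1)^{|T|}R^f(0_T,v_{-T})=0$ for all $\mathbf v$; restricting to $\mathbf v$ with $N^0_{\mathbf v}=\emptyset$ is exactly residual balancedness. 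Thus (a) $\wedge$ (b) is necessary.

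Uniqueness of $\mathbf p$ I would prove directly on differences, and this argument also drives existence. If $(f,\mathbf p)$ and $(f,\mathbf p')$ are both satisfactory, set $g_i:=h_i-h_i'$, so $\sum_i g_i(v_{-i})=0$ and, by symmetry of both mechanisms, $g_i(v_{-i})=g_j(v_{-j})$ whenever $v_i=v_j$. I claim $g_i\equiv 0$ by induction on the number $r$ of nonzero coordinates of the argument. Given a target tuple $\mathbf w$ with $r$ nonzeros and a position $i$, build the profile with $v_i=0$ and $v_{-i}=\mathbf w$: the $n-r\ge 1$ zero-valued agents all contribute the same value $g_i(\mathbf w)$ (equal treatment, since dropping a zero coordinate leaves the same multiset), while every nonzero-valued agent drops a positive coordinate and so contributes a $g$-value at a tuple with $r-1$ nonzeros, which vanishes by the inductive hypothesis; the equation then forces $(n-r)g_i(\mathbf w)=0$. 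Starting from the all-zero tuple and increasing $r$ to $n-1$ exhausts every $(n-1)$-tuple, so $g_i\equiv 0$ and $\mathbf p$ is unique. Note that this uses only equal treatment of equals, never any smoothness of $R^f$.

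Sufficiency, together with the explicit payment formula, is where the real work lies. The same recursion shows that profiles with at least one zero coordinate determine a unique symmetric $h$, and then $p_i:=R^f_i-h$ is DSIC (monotone $f$ plus the correct payment form), symmetric (because $R^f_i=R^f_j$ and $h$ is common and symmetric when $v_i=v_j$), and budget-balanced on every profile with $N^0_{\mathbf v}\ne\emptyset$ by construction; the one decomposition equation not yet imposed sits at the all-nonzero profiles, and unwinding the recursion shows it holds precisely when residual balancedness holds, closing the ``if'' direction. The main obstacle I anticipate is purely combinatorial: solving the recursion in closed form to recover the stated $p_i(\mathbf v)$, where the alternating signs and the weights $1/C(|T|,|N^0_{\mathbf v}|)$ and $1/C(|T|,|N^0_{\mathbf v}|+1)$ arise from Möbius inversion over the Boolean lattice of zero-sets combined with the averaging forced by symmetry among equal-valued agents. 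I would verify the closed form by substituting it into the decomposition and checking the resulting binomial identity, treating the cases $i\in N^0_{\mathbf v}$ and $i\notin N^0_{\mathbf v}$ separately.
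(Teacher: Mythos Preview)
Your proposal is correct and shares the paper's backbone---reduce everything to the additive decomposition $\sum_i h_i(v_{-i})=R^f(\mathbf v)$, pin down the $h_i$'s by induction on the number of zero coordinates using symmetry to equate all zero-valued agents, and check that the only remaining constraint (at the all-nonzero profiles) is residual balancedness---but the packaging is genuinely different in two places. First, for necessity the paper runs the recursion forward (its Steps~1--3): it derives the closed-form $\mathcal U^M_i$ by induction on $|N^0_{\mathbf v}|$, reads off the payment formula, and then sums to obtain residual balancedness. Your route through the commuting difference operators $\Delta_i$ is cleaner: applying $\prod_j\Delta_j$ to the decomposition kills every $h_i$ in one stroke and yields residual balancedness directly, without ever writing down the closed form. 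Second, your uniqueness argument is a standalone induction on the number of nonzero coordinates of $v_{-i}$, whereas the paper obtains uniqueness implicitly as a by-product of its inductive derivation of the formula; both work, and yours is arguably more transparent because it separates uniqueness from the explicit expression. The trade-off is that the paper's approach delivers the closed-form payments as a free output, while you still owe that calculation: your plan to verify the stated formula by substitution is fine and is essentially what the paper does in its sufficiency direction (its Cases~1 and~2), but you should expect the combinatorics to be the bulk of the write-up---the inversion weights $1/C(|T|,|N^0_{\mathbf v}|)$ come from the counting identity $\sum_{i\notin K}\sum_{T\supseteq K\cup\{i\}}(\cdot)=\sum_{T\supsetneq K}|T\setminus K|\,(\cdot)$ that the paper uses repeatedly, and you will need the same manipulation. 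One small caution: your assertion that $R^f_i(\mathbf v)=R^f_j(\mathbf v)$ when $v_i=v_j$ requires that $f_i(x,v_{-i})=f_j(x,v_{-j})$ for all $x$, which is a shade stronger than the paper's equal-treatment-of-equals definition taken literally at a single profile; the paper finesses this in its sufficiency direction with the line ``$\mathbf p$ is symmetric since $f$ is symmetric,'' so you are in good company, but be prepared to spell out why the integrands match.
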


The condition in Theorem \ref{theo:char} looks very similar to the cubical
array lemma in \citet{Walker80}. While the cubical array lemma applies to
only efficient allocation rule, our characterization is for \emph{any}
allocation rule. Theorem 2 in \citet{Yenmez15} characterizes ex-post
incentive compatible and budget-balanced mechanisms.~\footnote{%
His solution concept is ex-post incentive compatibility because he looks at
a setting that can potentially allow for interdependent valuations.} His
characterization is a characterization of DSIC and budget-balanced \emph{%
mechanisms}, and hence, still uses transfers. On the other hand, the
advantage of our characterization is that it gives necessary and sufficient
condition on the \emph{allocation rule} to be satisfactorily implementable.
Thus, we are able to separate out allocation rule and payments for analyzing
budget-balanced mechanisms.

The proof of Theorem \ref{theo:char} is in the Appendix. It is notationally
quite complex. Here, we illustrate the idea of the necessity part with an
example of three agents: $N=\{1,2,3\}$. Let $f$ be a symmetric, monotone,
and satisfactorily implementable allocation rule. Then, there is a $\mathbf{p%
}$ such that $(f,\mathbf{p})$ is a satisfactory mechanism. Consider a
valuation profile $\mathbf{v} \equiv (0,0,0)$. By BB and symmetry, we get $%
p_1(\mathbf{v})=p_2(\mathbf{v})=p_3(\mathbf{v})=0$. Now, consider a
valuation profile $\mathbf{v} \equiv (v_1,0,0)$. By Lemma \ref{lem:my}, 
\begin{equation*}
p_1(\mathbf{v})=p_1(0,0,0)+ R_1^f(\mathbf{v})=R_1^f(\mathbf{v}).
\end{equation*}
Note that $R^f(\mathbf{v})=R^f_1(\mathbf{v})$. By symmetry $p_2(\mathbf{v}%
)=p_3(\mathbf{v})$. Hence, by BB and symmetry, 
\begin{equation*}
0=p_1(\mathbf{v})+2p_2(\mathbf{v}) = 2p_2(\mathbf{v}) + R^f(\mathbf{v}).
\end{equation*}
This implies that 
\begin{equation*}
p_2(v_1,0,0)=-\frac{1}{2}R^f(v_1,0,0).
\end{equation*}
Now, consider a valuation profile $\mathbf{v} \equiv (v_1,v_2,0)$. Using BB
and Lemma \ref{lem:my}, and following the above arguments, we get 
\begin{align*}
p_1(\mathbf{v})= p_1(0,v_2,0) + R^f_1(\mathbf{v})=-\frac{1}{2} R^f(0,v_2,0)
+ R^f_1(\mathbf{v}) \\
p_2(\mathbf{v})= p_2(v_1,0,0) + R^f_2(\mathbf{v})=-\frac{1}{2} R^f(v_1,0,0)
+ R^f_2(\mathbf{v}) \\
\end{align*}
Adding these two with $p_3(\mathbf{v})$ and using BB, we get 
\begin{equation*}
p_3(v_1,v_2,0) = \frac{1}{2} \Big( R^f(v_1,0,0) + R^f(0,v_2,0) \Big) -
R^f(v_1,v_2,0).
\end{equation*}
Finally, consider the valuation profile $(v_1,v_2,v_3)$ with $v_1,v_2,v_3 >
0 $. Again, using Lemma \ref{lem:my}, we get 
\begin{align*}
p_1(\mathbf{v})= p_1(0,v_2,v_3) + R^f_1(\mathbf{v})=\frac{1}{2} \Big( %
R^f(0,v_2,0) + R^f(0,0,v_3) \Big) - R^f(0,v_2,v_3) + R^f_1(\mathbf{v}) \\
p_2(\mathbf{v})= p_2(v_1,0,v_3) + R^f_2(\mathbf{v})=\frac{1}{2} \Big( %
R^f(v_1,0,0) + R^f(0,0,v_3) \Big) - R^f(v_1,0,v_3) + R^f_2(\mathbf{v}) \\
p_3(\mathbf{v})= p_3(v_1,v_2,0) + R^f_2(\mathbf{v})=\frac{1}{2} \Big( %
R^f(0,v_2,0) + R^f(v_1,0,0) \Big) - R^f(v_1,v_2,0) + R^f_3(\mathbf{v}) \\
\end{align*}
Adding and using BB, we get 
\begin{equation*}
R^f(v_1,v_2,v_3) - R^f(v_1,v_2,0) - R^f(0,v_2,v_3) - R^f(v_1,0,v_3) +
R^f(v_1,0,0) + R^f(0,v_2,0) + R^f(0,0,v_3) = 0,
\end{equation*}
which is the residual balancedness condition. The sufficiency can be shown
using the explicit form of payment functions hidden in these calculations.
In summary, residual balancedness allows a recursive calculation of payments
at all valuation profiles so that budget-balance holds.

\section{Relation to the Literature}

\label{sec:lit}

The impossibility of achieving efficiency, dominant strategy incentive
compatibility, and budget-balance was first shown by \citet{Green79}, which
also contains a lot of discussions on achieving \emph{second-best} using
non-efficient but DSIC and budget-balanced mechanisms. This includes the
Green-Laffont mechanism that we discuss. Though, they focussed attention on
public good problems and gave sketches of the Green-Laffont mechanism we
discuss, they clearly anticipated the mechanism as well as many
environments beyond the public good problem where the impossibility result
would hold. \citet{Gary00} contains an extensive discussion on this - they
also formally define the Green-Laffont mechanism and study its statistical
and strategic properties in the public good problem.

This impossibility result started a long literature on how to overcome it.
We classify them in several categories and discuss some relevant ones. Most
of the literature we discuss concern with private good allocation among
several buyers. There are parallel literature on bilateral trading and
public good provision that we do not discuss. \newline

\noindent \textsc{Domain identification.} Classic revenue equivalence
results imply that every efficient and DSIC mechanism must be a Groves
mechanism~\citep{Green77,Holmstrom79}. The Green-Laffont impossibility result
essentially implies that no Groves mechanism can balance budget in many
settings - though their focus is mainly of public good problems. In the
public good context, \citet{Laffont80} consider differentiable mechanisms
and show that existence of a DSIC, BB, and efficient mechanism is equivalent
to solving a system of differential equations. In the same model, %
\citet{Walker80} identifies domains (of utility functions of agents) where
impossibilities exist - he restricts attention to continuous mechanisms.
As corollary of their results, they identify form of
utility functions of agents where possibility or impossibility result exists.
\citet{Hurwicz90} extend the Green-Laffont impossibility to pure exchange economies.
These papers are mainly focused on identifying domains where the negative result
of Green and Laffont persists.

But there are settings where DSIC, BB, and efficient mechanisms exist. %
\citet{Suijs96} is a good example of a domain where Groves mechanisms that
balance the budget exists - he discusses a \emph{sequencing problem}. In the
context of multi-object assignment, a recent contribution is \citet{Mitra10}%
. This paper identifies domains of multi-object auctions where the
Green-Laffont impossibility can be overcome. \newline

\noindent \textsc{Bayesian incentive compatibility.} One way to get around
the Green-Laffont impossibility is to consider the weaker solution concept
of Bayesian incentive compatibility. \citet{Arrow79,dAspremont79} construct
Bayesian incentive compatible, efficient, and budget-balanced mechanism, now
known as the dAGV mechanism, that work in a variety of settings. The dAGV
mechanisms fail to be interim individually rational in many settings. In an
unpublished paper, \citet{Fudenberg95} extend this result in the following
sense - for every Bayes-Nash implementable allocation rule, there exists a
Bayesian incentive compatible and budget-balanced mechanism using this
allocation rule. Like in the dAGV mechanism, such budget-balanced mechanisms
need not satisfy interim individual rationality. \citet{Rahman11} gives a
characterization of Bayesian (and ex-post) incentive compatible and
budget-balanced mechanisms in a very general framework.

In a seminal paper, \citet{Cramton87} show that efficient, Bayesian
incentive compatible, budget-balanced mechanisms satisfying interim
individual rationality is possible in a single object allocation problem.~%
\footnote{%
They consider a problem where agents have property rights over the object,
and stronger form of interim individual rationality is satisfied by their
mechanism. However, their results can still be applied to our problem if we
assume equal property rights to all the agents.} The possibility result in
our problem using Bayesian incentive compatibility is in sharp contrast to
the impossibility results known in bilateral trading problems like in ~%
\citet{Myerson83}.

Unlike \citet{Cramton87}, we focus on DSIC mechanisms, and our mechanism is
not efficient. Naturally, the mechanism in \citet{Cramton87} require a lot
of prior information. Our mechanism is prior-free and satisfies
ex-post individually rationality. Thus, we illustrate a prior-free way of approximately achieving the
possibility result in \citet{Cramton87}. \newline

\noindent \textsc{Redistribution mechanisms.} The prior-free approach of
mechanism design using DSIC mechanisms have been popular in algorithmic game
theory literature in computer science. Restricting attention to efficient
mechanisms, which means restricting attention to Groves mechanisms, several
papers relax budget-balance and show how best to redistribute the surplus
revenue. The measure of efficiency of redistribution is \emph{worst-case} in
these papers. One of the earliest papers to do this is \citet{Cavallo06},
who studied this problem in our setting (single object allocation). He
showed that remarkable Groves mechanisms exist that can redistribute large
fraction of Vickrey auction payments using Groves mechanisms. %
\citet{Moulin09} and \citet{Guo09} derive optimal redistribution mechanisms
in the multi-unit allocation setting where agents demand exactly one unit -
their mechanisms are identical and discovered independently.~\footnote{%
Several papers related to this theme have also appeared - see for instance, %
\citet{Apt08} and \citet{Moulin10}.} As the number of agents increase, like
our mechanism, their Groves mechanisms can redistribute large fraction of
Vickrey auction revenue among agents. The main difference from these papers
and ours is budget-balance. Since these papers do not impose budget-balance,
the actual budget imbalance in these mechanisms can be high in various
valuation profiles. On the other hand, like in \citet{Cramton87},
budget-balance is a constraint in our problem. Hence, unlike these papers,
we work with mechanisms outside the Groves class. Our results show that we
can achieve excellent levels of efficiency (99\% with at least 14 agents)
using DSIC and budget-balanced mechanisms. \newline

\noindent \textsc{Beyond Groves mechanisms.} While most of the literature
seems to have either weakened DSIC to Bayesian incentive compatibility or
relaxed the budget-balanced criteria while working with efficient and DSIC
mechanisms (Groves mechanisms), there is very little literature on exploring
the limits of DSIC and budget-balanced mechanisms. We do this for the case
of single object allocation problem. One of the problems with exploring
non-Groves mechanisms is that we search over the space of allocation rules
and payment rules - Groves mechanisms pin down the allocation rule to be the
efficient allocation rule. A non-efficient allocation rule can achieve
better social welfare redistribution is well known - see for instance
examples in \citet{Laffont80} and a more computational analysis in %
\citet{deClippel09}. \citet{Sprumont13} consider Pareto-undominated
mechanisms by considering DSIC and non-efficient mechanisms, though his
mechanisms are not budget-balanced. \citet{Faltings05} and \citet{Guo11}
consider variants of Green-Laffont mechanisms discussed in \citet{Green79}
and show some worst-case results, but they do not consider the general class
of DSIC and budget-balanced mechanisms that we analyze. \citet{Hashimoto15}
discusses a non-ranking satisfactory mechanism and provides several
axiomatization his mechanism.

Another possibility is to consider priors and design the expected welfare
maximizing DSIC and budget-balanced mechanism for allocating an object. This
is similar to the expected revenue maximizing question in \citet{Myerson81},
but significantly more complicated. Restricting attention to the case of two
agents and deterministic mechanisms, \citet{Drexl15} derive the optimal
expected welfare maximizing DSIC and budget-balanced mechanism. %
\citet{Shao13} do the same analysis for two agents but without requiring
budget-balance. These papers illustrate difficulty in extending such
analysis to more than two agents. In that sense, we provide a prior-free
method of measuring welfare of mechanisms which turns out to be tractable
for any number of agents.

\section{Conclusion}

\label{sec:conc}

This paper provides a novel DSIC, budget-balanced, symmetric, and ex-post
individually rational mechanism to allocate a single unit of a resource. The
mechanism converges to efficiency with moderately high number of agents.
Further, the mechanism can be viewed as a generalization of the
Green-Laffont mechanism. From a methodological standpoint, we provide
several key insights on how to analyze DSIC and budget-balanced mechanisms,
and propose a tractable class of mechanisms that we call ranking mechanisms.

While we carry out this analysis for allocating a single unit of resource,
we feel that the ideas in this paper can be pushed in other models of
mechanism design where budget-balance is a constraint. Further, an indirect
implementation of our mechanism will significantly improve the practicality
of our proposed mechanism.

From a broader perspective, our results quantify the impossibility on
designing DSIC, budget-balanced, and efficient mechanisms in the single
object allocation problem. It shows that even though impossibility exists,
it is really thin. Thus, the possibility results with Bayesian incentive
compatibility \citep{Cramton87} or approximate possibility results with
relaxed budget-balanced constraints \citep{Guo09,Moulin09} can also be
approximately achieved with DSIC and budget-balanced mechanisms.

\newpage

\section*{Appendix: Omitted Proofs}

This section contains all the missing proofs. We first prove our workhorse
result - Theorem \ref{theo:char}. Once this result is proved, we use it to
prove Proposition \ref{prop:rank}. Then, we proceed to prove our two main
results - Theorem \ref{theo:rgl} and Theorem \ref{theo:nrgl}. Then, we prove
our individual rationality result - Theorem \ref{theo:ir}. \newline

\noindent \textbf{Notations.} We will need some extra notations. At every
valuation profile $\mathbf{v}$ and for every $k \in N$, we denote by $%
v_{(k)} $ the valuation of every agent in $\mathbf{v}[k]$. Note that for
some $k \in N$, it is possible that $\mathbf{v}[k] = \emptyset$, in which
case $v_{(k)}$ is defined to be $0$. For any $j \in N$, let the cardinality
of the set $\cup_{h=1}^{j}\mathbf{v}[h]$ be $L_j$.

We illustrate these notations with an example. Suppose $N=\{1,\ldots,8\}$.
Consider a valuation profile $\mathbf{v}$ such that $\mathbf{v}[1]=\{1,2\}$, 
$\mathbf{v}[2]=\{3,4,5,6\}$, and $\mathbf{v}[3]=\{7,8\}$. Then, $L_1=2,
L_2=6, L_3=8$. According to a ranking allocation rule with probabilities $%
(\pi_1,\ldots,\pi_8)$, agents $1$ and $2$ share $\pi_1+\pi_2$ equally,
agents $3,4,5,6$ share $\pi_3+\pi_4+\pi_5+\pi_6$ equally and agents $7$ and $%
8$ share $\pi_7+\pi_8$ equally. In other words, for every $j \in N$, agents
in $\mathbf{v}[j]$ share equally the probabilities 
\begin{equation*}
\pi_{L_{j-1}+1} + \ldots + \pi_{L_j},
\end{equation*}
where $L_0 \equiv 0$.

We begin by a lemma, which will be useful to all our proofs.

\begin{lemma}
\label{lem:contin} Suppose $f$ is a ranking allocation rule. Then, $R^f$ is
continuous.
\end{lemma}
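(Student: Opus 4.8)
The plan is to exploit the decomposition $R^f(\mathbf{v}) = W(\mathbf{v}) - I(\mathbf{v})$, where $W(\mathbf{v}) := \sum_{i \in N} v_i f_i(\mathbf{v})$ is the total welfare and $I(\mathbf{v}) := \sum_{i \in N} \int_0^{v_i} f_i(x_i, v_{-i}) dx_i$ is the sum of the Myersonian integral terms, and to show that $W$ and $I$ are \emph{each} continuous. This is the natural route because the allocation $f_i$ is itself discontinuous: when two agents' valuations cross, the probabilities attached to the two ranks, $\pi_k$ and $\pi_{k+1}$, are exchanged, so the individual terms $v_i f_i$ (and hence the individual revenues $R^f_i$) carry genuine jumps. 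The content of the lemma is that these jumps cancel in the aggregate, and the decomposition isolates exactly how: the jumps are absorbed into $W$ through the symmetric tie-breaking, and into $I$ through integration.

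For $W$, I would first record that on the dense open set of profiles with distinct valuations, $W(\mathbf{v}) = \sum_{k=1}^n \pi_k v_{(k)}$, where $v_{(k)}$ is the $k$-th highest valuation. I would then verify that this identity persists at every tie profile: if ranks $k$ and $k+1$ are tied, the two tied agents jointly receive probability $\pi_k + \pi_{k+1}$ and share a common valuation, so their joint welfare contribution is still $\pi_k v_{(k)} + \pi_{k+1} v_{(k+1)}$. This is precisely the continuity-of-welfare property that the tie-breaking was designed to guarantee, as discussed after Definition \ref{def:rank}. Since the sorting map $\mathbf{v} \mapsto (v_{(1)}, \ldots, v_{(n)})$ is continuous and $\sum_k \pi_k v_{(k)}$ is linear in the sorted coordinates, $W$ is continuous on all of $V^n$.

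For $I$, the key observation is that for each $i$ the integrand $f_i(\cdot, v_{-i})$ is a bounded monotone step function of its first coordinate whose value at any tie point is irrelevant to the integral (a single point has measure zero). For $x$ distinct from every coordinate of $v_{-i}$, one has $f_i(x, v_{-i}) = \pi_{1+m}$ with $m = |\{j \ne i : v_j > x\}|$ depending only on a count. Given a sequence $\mathbf{v}^{(t)} \to \mathbf{v}$, I would bound $\big| \int_0^{v_i^{(t)}} f_i(x, v_{-i}^{(t)}) dx - \int_0^{v_i} f_i(x, v_{-i}) dx \big|$ by a boundary term of size $|v_i^{(t)} - v_i|$ plus $\int_0^{v_i} |f_i(x, v_{-i}^{(t)}) - f_i(x, v_{-i})| dx$; the integrand of the latter tends to $0$ for every $x$ that is not a coordinate of $v_{-i}$ (there the count $m$ stabilizes), hence for almost every $x$, and is dominated by $1$, so dominated convergence sends it to $0$. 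Thus each integral term, and therefore the finite sum $I$, is continuous, and $R^f = W - I$ is continuous.

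The step I expect to be most delicate is the continuity of $I$ in the \emph{other} agents' valuations as ranks swap: one must argue that almost-everywhere convergence of the step-function integrands suffices, which is where the dominated-convergence argument does the real work. As a sanity check and an alternate route, I note that evaluating the integrals explicitly on the generic region gives $R^f(\mathbf{v}) = \sum_{k=1}^n \big[ \pi_k v_{(k+1)} - \sum_{j>k} \pi_j (v_{(j)} - v_{(j+1)}) \big]$ with $v_{(n+1)} := 0$, a manifestly continuous linear function of the sorted valuations, consistent with the conclusion.
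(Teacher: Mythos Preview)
Your proposal is correct and follows the same decomposition $R^f = W - I$ as the paper, with the welfare term $W$ handled identically via the sorted-values formula $\sum_k \pi_k v_{(k)}$. Where you diverge is in the treatment of the integral term $I$: the paper computes each $\int_0^{v_i} f_i(x,v_{-i})\,dx$ explicitly as a sum $\sum_{h\ge j}\pi_{L_h}(v_{(h)}-v_{(h+1)})$ and then verifies continuity by a case analysis on how the indices $L_h$ shift when a single coordinate $v_k$ is perturbed across a tie (several sub-cases, depending on whether the perturbed agent moves up or down in rank and whether it is agent $i$ itself). Your dominated-convergence argument sidesteps this bookkeeping entirely: pointwise a.e.\ convergence of the step-function integrands together with the uniform bound $|f_i|\le 1$ is all that is needed. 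Your route is shorter and more conceptual; the paper's route is more elementary (no measure theory invoked) and produces along the way an explicit closed form for the integral, of the kind reused in the proof of Lemma~\ref{lem:rcomp}. Both approaches in fact establish continuity of each individual integral term, which is stronger than what the lemma literally requires.
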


\begin{proof}
For any $\mathbf{v}$, we know that 
\begin{equation*}
R^f(\mathbf{v})= \sum_{i \in N}v_if_i(\mathbf{v}) - \sum_{i \in
N}\int_0^{v_i}f_i(x_i,v_{-i})dx_i.
\end{equation*}

We now do the proof in two steps. Assume that the allocation probabilities
of the ranking allocation rule are $\pi_1 \ge \pi_2 \ge \ldots \ge \pi_n$. 
\newline

\noindent \textsc{Step 1.} In this step, we show that for every $i\in N$,
the expression $\int_{0}^{v_{i}}f_{i}(x_{i},v_{-i})dx_{i}$ is continuous in $%
\mathbf{v}$. Fix a valuation profile $\mathbf{v}$. Consider agent $i$ and
suppose $i\in \mathbf{v}[j]$. Hence, $v_{i}\equiv v_{(j)}$ for some $j$.
Then, using the definition of the ranking allocation rule, we note that 
\begin{align*}
\int_{0}^{v_{i}}f_{i}(x_{i},v_{-i})dx_{i}&
=\int_{0}^{v_{(j)}}f_{i}(x_{i},v_{-i})dx_{i} \\
& =\pi _{L_{j}}(v_{(j)}-v_{(j+1)})+\pi
_{L_{j+1}}(v_{(j+1)}-v_{(j+2)})+\ldots  \\
& =\sum_{h\geq j}\pi _{L_{h}}(v_{(h)}-v_{(h+1)}).
\end{align*}%
To establish continuity, we look at a valuation profile $\mathbf{v}^{\prime }
$ which is arbitrarily close to $\mathbf{v}$, and $\mathbf{v}^{\prime }$ and 
$\mathbf{v}$ differ in valuation of only agent $k$ - it is enough to look at
valuation profiles that differ in one component. Suppose $k\in \mathbf{v}%
[\ell ]$. If $\ell <j$, then there is nothing to prove since the above
summation is unchanged from $\mathbf{v}$ to $\mathbf{v}^{\prime }$. Hence,
assume $\ell \geq j$. Since $\mathbf{v}^{\prime }$ is arbitrarily close to $%
\mathbf{v}$, it must be that $k\in \mathbf{v}^{\prime }[\ell ]$ (this
happens if $v_{k}^{\prime }>v_{k}$) or $k\in \mathbf{v}^{\prime }[\ell +1]$
(this happens if $v_{k}^{\prime }<v_{k}$). Indeed, since $\mathbf{v}^{\prime
}$ is arbitrarily close to $\mathbf{v}$, it must be that $\{k\}=\mathbf{v}%
^{\prime }[\ell +1]$ or $\{k\}=\mathbf{v}^{\prime }[\ell ]$. We consider
both the cases separately. We denote the cardinality of the set $\cup
_{h=1}^{r}\mathbf{v}^{\prime }[h]$ by $L_{r}^{\prime }$ for all $r$. Note
that if $\{k\}=\mathbf{v}[\ell ]$ (i.e., if $k$ is the only element in $%
\mathbf{v}[\ell ]$), then $L_{r}=L_{r}^{\prime }$ for all $r$. As a result, 
\begin{align*}
\int_{0}^{v_{i}^{\prime }}f_{i}(x_{i},v_{-i}^{\prime })dx_{i}& =\sum_{h\geq
j}\pi _{L_{h}^{\prime }}(v_{(h)}^{\prime }-v_{(h+1)}^{\prime }) \\
& \rightarrow \sum_{h\geq j}\pi _{L_{h}}(v_{(h)}-v_{(h+1)}) \\
& =\int_{0}^{v_{i}}f_{i}(x_{i},v_{-i})dx_{i}
\end{align*}%
So, the claim is true. Hence, we assume $|\mathbf{v}[\ell ]|>1$. Now,
consider the following two cases. \newline

\noindent \textsc{Case 1-a.} Suppose $\{k\} = \mathbf{v}^{\prime }[\ell]$.
Since $|\mathbf{v}[\ell]| > 1$, $v^{\prime }_{(\ell)}=v^{\prime }_k
\rightarrow v_k=v_{(\ell)}=v^{\prime }_{(\ell+1)}$. Then, $L^{\prime }_r=L_r$
for all $r < \ell$ and $L^{\prime }_r=L_{r-1}$ for all $r > \ell$. Further, $%
v^{\prime }_{(r)}=v_{(r)}$ for all $r < \ell$ and $v^{\prime }_{(r)}
=v_{(r-1)}$ for all $r > \ell$. As a result, 
\begin{align*}
\int_0^{v^{\prime }_i}f_i(x_i,v^{\prime }_{-i})dx_i &= \sum_{h \ge j}
\pi_{L^{\prime }_h} (v^{\prime }_{(h)} - v^{\prime }_{(h+1)}) \\
&= \sum_{h=j}^{\ell-1}\pi_{L^{\prime }_h}(v^{\prime }_{(h)} - v^{\prime
}_{(h+1)}) + \pi_{L^{\prime }_{\ell}}(v^{\prime }_{(\ell)}-v^{\prime
}_{(\ell+1)}) + \sum_{h \ge \ell+1} \pi_{L^{\prime }_h}(v^{\prime }_{(h)} -
v^{\prime }_{(h+1)}) \\
&\rightarrow \sum_{h=j}^{\ell-1}\pi_{L_h}(v_{(h)} - v_{(h+1)}) + \sum_{h \ge
\ell+1} \pi_{L_{h-1}}(v_{(h-1)} - v_{(h)}) \\
&= \sum_{h=j}^{\ell-1}\pi_{L_h}(v_{(h)} - v_{(h+1)}) + \sum_{h \ge \ell}
\pi_{L_{h}}(v_{(h)} - v_{(h+1)}) \\
&= \int_0^{v_i}f_i(x_i,v_{-i})dx_i.
\end{align*}

This shows that $\int_0^{v_i}f_i(x_i,v^{\prime }_{-i})dx_i \rightarrow
\int_0^{v_i}f_i(x_i,v_{-i})dx_i$ as $v^{\prime }_k \rightarrow v_k$. \newline

\noindent \textsc{Case 1-b.} Suppose $\{k\} = \mathbf{v}^{\prime }[\ell+1]$.
Since $|\mathbf{v}[\ell]| > 1$, we have $v^{\prime }_{(\ell)}=v_{(\ell)}$.
This implies that $v^{\prime }_{(\ell+1)}=v^{\prime }_k \rightarrow
v_k=v_{(\ell)}=v^{\prime }_{(\ell)}$.

Here, we need to worry about the case $k=i$. If $k=i$, then $i \in \mathbf{v}%
^{\prime }[\ell+1]$. Further, for every $r \ge \ell+1$, we have $L^{\prime
}_{r}=L_{r-1}$ and for every $r > \ell+1$, we have $v^{\prime
}_{(r)}=v_{(r-1)}$. 
\begin{align*}
\int_0^{v^{\prime }_i}f_i(x_i,v^{\prime }_{-i})dx_i &= \sum_{h \ge \ell+1}
\pi_{L^{\prime }_h} (v^{\prime }_{(h)} - v^{\prime }_{(h+1)}) \\
&= \pi_{L^{\prime }_{\ell+1}} (v^{\prime }_{(\ell+1)} - v^{\prime
}_{(\ell+2)}) + \sum_{h > \ell+1} \pi_{L^{\prime }_h} (v^{\prime }_{(h)} -
v^{\prime }_{(h+1)}) \\
&\rightarrow \pi_{L_{\ell}}(v_{(\ell)}-v_{(\ell+1)}) + \sum_{h > \ell+1}
\pi_{L_{h-1}} (v_{(h-1)} - v_{(h)}) \\
&= \sum_{h \ge \ell+1} \pi_{L_{h-1}} (v_{(h-1)} - v_{(h)}) \\
&= \sum_{h \ge \ell} \pi_{L_{h}} (v_{(h)} - v_{(h+1)}) \\
&= \int_0^{v_i}f_i(x_i,v_{-i})dx_i.
\end{align*}
This shows that $\int_0^{v_i}f_i(x_i,v^{\prime }_{-i})dx_i \rightarrow
\int_0^{v_i}f_i(x_i,v_{-i})dx_i$ as $v^{\prime }_i \rightarrow v_i$.

A similar proof works if $k \ne i$. Then, $L^{\prime }_r=L_r$ for all $j <
\ell$ and $L^{\prime }_{r}=L_{r-1}$ for all $r > \ell$. Further, $v^{\prime
}_{(r)}=v_{(r)}$ for all $r < \ell$ and $v^{\prime }_{(r)}=v_{(r-1)}$ for
all $r > \ell$. As a result, 
\begin{align*}
\int_0^{v^{\prime }_i}f_i(x_i,v^{\prime }_{-i})dx_i &= \sum_{h \ge j}
\pi_{L^{\prime }_h} (v^{\prime }_{(h)} - v^{\prime }_{(h+1)}) \\
&= \sum_{h=j}^{\ell-1}\pi_{L^{\prime }_h}(v^{\prime }_{(h)} - v^{\prime
}_{(h+1)}) + \pi_{L^{\prime }_{\ell}}(v^{\prime }_{(\ell)}-v^{\prime
}_{(\ell+1)}) + \sum_{h \ge \ell+1} \pi_{L^{\prime }_h}(v^{\prime }_{(h)} -
v^{\prime }_{(h+1)}) \\
&\rightarrow \sum_{h=j}^{\ell-1}\pi_{L_h}(v_{(h)} - v_{(h+1)}) + \sum_{h \ge
\ell+1} \pi_{L_{h-1}}(v_{(h-1)} - v_{(h)}) \\
&= \sum_{h=j}^{\ell-1}\pi_{L_h}(v_{(h)} - v_{(h+1)}) + \sum_{h \ge \ell}
\pi_{L_{h}}(v_{(h)} - v_{(h+1)}) \\
&= \int_0^{v_i}f_i(x_i,v_{-i})dx_i.
\end{align*}

This again shows that $\int_0^{v_i}f_i(x_i,v^{\prime }_{-i})dx_i \rightarrow
\int_0^{v_i}f_i(x_i,v_{-i})dx_i$ as $v^{\prime }_k \rightarrow v_k$. \newline

\noindent \textsc{Step 2.} Now, we argue that the summation $\sum_{i \in
N}v_if_i(\mathbf{v})$ is continuous. Fix a valuation profile $\mathbf{v}$.
Consider all the $j$-th ranked valuation agents, $\mathbf{v}[j]$, for some $%
j $. Note that the total sum of welfare of agents in $\mathbf{v}[j]$ is 
\begin{equation*}
v_{(j)} \big( \pi_{L_{j-1}+1}+\ldots+\pi_{L_j}\big),
\end{equation*}
where $L_0 \equiv 0$. Hence, the total welfare of all agents is 
\begin{equation*}
\sum_{j=1}^n v_{(j)} \big( \pi_{L_{j-1}+1}+\ldots+\pi_{L_j}\big).
\end{equation*}
For any other valuation profile arbitrarily close to $\mathbf{v}$, agents in 
$\mathbf{v}[j]$ will (a) have valuations arbitrarily close to $v_{(j)}$ and
(b) their ranks (in the valuation profile) will be from $L_{j-1}+1$ to $L_j$%
. As a result, their total welfare is arbitrarily close to 
\begin{equation*}
v_{(j)} \big( \pi_{L_{j-1}+1}+\ldots+\pi_{L_j}\big).
\end{equation*}
Applying this argument to every $j$, we get that for any valuation
arbitrarily close to $\mathbf{v}$, the total welfare of agents is
arbitrarily close to 
\begin{equation*}
\sum_{j=1}^n v_{(j)} \big( \pi_{L_{j-1}+1}+\ldots+\pi_{L_j}\big)=\sum_{i \in
N}v_if_i(\mathbf{v}).
\end{equation*}

Steps 1 and 2 show that $R^f$ is continuous in $\mathbf{v}$.
\end{proof}

\subsection*{Proof of Theorem \protect\ref{theo:char}}

\begin{proof}
Suppose $f$ is a symmetric allocation rule which is satisfactorily
implementable. This implies that there exists a symmetric $\mathbf{p}$ such
that the mechanism $M \equiv (f,\mathbf{p})$ is satisfactory. By Proposition %
\ref{prop:myerson}, $f$ is monotone. The remainder of the claims we do in
steps. \newline

\noindent \textsc{Step 1.} In this step, we show that for every $\mathbf{v}
\in V^n$ such that $N^0_{\mathbf{v}} \ne \emptyset$, we have for every $i
\in N^0_{\mathbf{v}}$, 
\begin{align*}
\mathcal{U}^M_i(\mathbf{v}) &= \frac{1}{|N^0_{\mathbf{v}}|} \sum_{T
\subseteq N: N^0_{\mathbf{v}} \subseteq T} \frac{(-1)^{|T \setminus N^0_{%
\mathbf{v}} |}}{C(|T|,|N^0_{\mathbf{v}}|)} R^f(0_T,v_{-T}).
\end{align*}
We show this using induction. If $|N^0_{\mathbf{v}}| = n$, then
budget-balance implies that $\sum_{i \in N}\mathcal{U}^M_i(\mathbf{v})=0$.
Symmetry implies that $\mathcal{U}^M_j(\mathbf{v})=\mathcal{U}^M_k(\mathbf{v}%
)$ for all $j,k \in N$ at this valuation profile. Hence, $\mathcal{U}^M_i(%
\mathbf{v})=0$ for all $i \in N$. Since $\mathbf{v} \equiv 0_N$, we have $%
R^f(\mathbf{v})=0$. Hence, the claim is true for $N^0=N$.

Suppose the claim is true for all valuation profiles $\mathbf{\bar{v}}$ such
that $|N^0_{\mathbf{\bar{v}}}| > |N^0_{\mathbf{v}}|$. Let $K \equiv N^0_{%
\mathbf{v}}$. Since $M$ is DSIC and budget-balanced, by Proposition \ref%
{prop:myerson}, we get 
\begin{align*}
R^f(\mathbf{v}) &= \sum_{i \in N}\mathcal{U}^M_i(0,v_{-i}) = \sum_{i \in K}%
\mathcal{U}^M_i(0,v_{-i}) + \sum_{i \notin K}\mathcal{U}^M_i(0,v_{-i}) \\
&= \sum_{i \in K}\mathcal{U}^M_i(0_K,v_{-K}) + \sum_{i \notin K}\mathcal{U}%
^M_i(0_{K \cup \{i\}},v_{-(K \cup \{i\})}) \\
&= |K| \mathcal{U}^M_j(0_K,v_{-K}) + \sum_{i \notin K}\mathcal{U}^M_i(0_{K
\cup \{i\}},v_{-(K \cup \{i\})}) \qquad~\text{ (where $j$ is some agent in $%
K $)} \\
&= |K| \mathcal{U}^M_j(0_K,v_{-K}) + \frac{1}{|K|+1} \sum_{i \notin
K}\sum_{T \subseteq N: (K \cup \{i\}) \subseteq T} \frac{(-1)^{|T \setminus
K | - 1}}{C(|T|,(|K|+1))}R^f(0_T,v_{-T}),
\end{align*}
where the third equality followed from symmetry and the final equality
followed from the induction hypothesis. The summation in the last line of
the above sequence of expressions can be simplified as follows: 
\begin{align*}
& \sum_{i \notin K}\sum_{T \subseteq N: (K \cup \{i\}) \subseteq T} \frac{%
(-1)^{|T \setminus K | - 1}}{C(|T|,(|K|+1))}R^f(0_T,v_{-T}) \\
&= \sum_{T \subseteq N: K \subsetneq T} \frac{(-1)^{|T \setminus K | - 1}}{%
C(|T|,(|K|+1))}(|T \setminus K|)~R^f(0_T,v_{-T}) \\
&= \sum_{T \subseteq N: K \subsetneq T}\frac{(-1)^{|T \setminus K | - 1}}{%
C(|T|,|K|)}(|K|+1)~R^f(0_T,v_{-T}).
\end{align*}
To understand why the first equality works, note that for every $T \subseteq
N$ such that $K \subseteq T$, the summation will come for all $i \in T
\setminus K$. Hence, it will appear $(|T \setminus K|)$ times.

Using the above equations in the earlier expression, we get that for all $j
\in K$, 
\begin{align*}
\mathcal{U}^M_j(0_K,v_{-K}) &= \frac{1}{|K|} R^f(\mathbf{v}) + \frac{1}{|K|}
\sum_{T \subseteq N: K \subsetneq T}\frac{(-1)^{|T \setminus K |}}{C(|T|,|K|)%
}~R^f(0_T,v_{-T}) \\
&= \frac{1}{|K|} R^f(0_K,v_{-K}) + \frac{1}{|K|} \sum_{T \subseteq N: K
\subsetneq T}\frac{(-1)^{|T \setminus K |}}{C(|T|,|K|)}~R^f(0_T,v_{-T}) \\
&= \frac{1}{|K|}\sum_{T \subseteq N: K \subseteq T}\frac{(-1)^{|T \setminus
K |}}{C(|T|,|K|)}~R^f(0_T,v_{-T})
\end{align*}
This proves the claim. \newline

\noindent \textsc{Step 2.} Now consider any valuation profile $\mathbf{v}$.
By Proposition \ref{prop:myerson}, we see that for every agent $i \in N$, 
\begin{align*}
p_i(\mathbf{v}) &= R^f_i(\mathbf{v}) - \mathcal{U}^M_i(0,v_{-i}).
\end{align*}
Using Step 1 in this equation gives us the desired expression for $p_i(%
\mathbf{v})$. \newline

\noindent \textsc{Step 3.} Finally, we show that $f$ is residually balanced.
Consider any type profile $\mathbf{v}$ such that $N^0_{\mathbf{v}} =
\emptyset$. Then, using Step 2, for every $i \in N$, 
\begin{align*}
p_i(\mathbf{v}) &= R^f_i(\mathbf{v}) - \sum_{T \subseteq N: i \in T}\frac{%
(-1)^{|T|-1}}{|T|}R^f(0_T,v_{-T}).
\end{align*}
Hence, we get 
\begin{align*}
0 = \sum_{i \in N}p_i(\mathbf{v}) &= R^f(\mathbf{v}) + \sum_{i \in N}\sum_{T
\subseteq N: i \in T}\frac{(-1)^{|T|}}{|T|}R^f(0_T,v_{-T}) \\
&= R^f(\mathbf{v}) + \sum_{T \subseteq N: T \ne
\emptyset}(-1)^{|T|}R^f(0_T,v_{-T}) \\
&= \sum_{T \subseteq N}(-1)^{|T|}R^f(0_T,v_{-T}).
\end{align*}
This shows that $f$ is residually balanced. This concludes one direction of
our proof.

For the other direction, suppose $f$ is a symmetric allocation rule that is
monotone and residually balanced. Consider $\mathbf{p}$ defined in the
statement of this theorem. Clearly, $\mathbf{p}$ is symmetric since $f$ is
symmetric. Hence, $M \equiv (f,\mathbf{p})$ is a symmetric mechanism.
Further, for every agent $i \in N$ and every valuation profile $\mathbf{v}$,
we get 
\begin{equation*}
\mathcal{U}^M_i(v_i,v_{-i}) = v_i f_i(v_i,v_{-i}) - R^f_i(v_i,v_{-i}) + 
\mathcal{U}^M_i(0,v_{-i}),
\end{equation*}
where we have used the expression for $p_i(\mathbf{v})$ to substitute it
with $R^f_i(\mathbf{v}) - \mathcal{U}^M_i(0,v_{-i})$ in the above
expression. This gives us 
\begin{equation*}
\mathcal{U}^M_i(v_i,v_{-i}) = \mathcal{U}^M_i(0,v_{-i}) +
\int_0^{v_i}f_i(x_i,v_{-i})dx_i.
\end{equation*}
This along with the monotonicity of $f$ implies $M$ is DSIC (Proposition \ref%
{prop:myerson}).

Finally, we show that $M$ is budget-balanced. To do so, consider a valuation
profile $\mathbf{v}$. We consider two cases. \newline

\noindent \textsc{Case 1.} $N^0_{\mathbf{v}} \ne \emptyset$. Let $K \equiv
N^0_{\mathbf{v}}$. Now, 
\begin{align*}
\sum_{i \in N}p_i(\mathbf{v}) &= \sum_{i \in K}p_i(\mathbf{v}) + \sum_{i
\notin K}p_i(\mathbf{v}) \\
&= \sum_{i \in K}\bigg[ R^f_i(\mathbf{v}) - \frac{1}{|K|} \sum_{T \subseteq
N: K \subseteq T} \frac{(-1)^{|T \setminus K|}}{C(|T|,|K|)} R^f(0_T,v_{-T}) %
\bigg] \\
&+ \sum_{i \notin K}\bigg[ R^f_i(\mathbf{v}) - \frac{1}{|K|+1} \sum_{T
\subseteq N: (K \cup \{i\}) \subseteq T} \frac{(-1)^{|T \setminus K| - 1}}{%
C(|T|,(|K|+1))}R^f(0_T,v_{-T})\bigg] \\
&= R^f(\mathbf{v}) - \sum_{T \subseteq N: K \subseteq T} \frac{(-1)^{|T
\setminus K|}}{C(|T|,|K|)} R^f(0_T,v_{-T}) \\
&- \sum_{i \notin K}\bigg[\frac{1}{|K|+1} \sum_{T \subseteq N: (K \cup
\{i\}) \subseteq T} \frac{(-1)^{|T \setminus K| - 1}}{C(|T|,(|K|+1))}%
R^f(0_T,v_{-T})\bigg] \\
&= R^f(\mathbf{v}) - \sum_{T \subseteq N: K \subseteq T} \frac{(-1)^{|T
\setminus K|}}{C(|T|,|K|)} R^f(0_T,v_{-T}) \\
&+ \bigg[\sum_{T \subseteq N: K \subsetneq T} \frac{(-1)^{|T \setminus K|}}{%
C(|T|,|K|)}R^f(0_T,v_{-T})\bigg] \\
&= R^f(\mathbf{v}) - R^f(0_K,v_{-K}) \\
&= 0.
\end{align*}

Note that budget-balance followed without any extra conditions in this case.

\noindent \textsc{Case 2.} $N^0_{\mathbf{v}} = \emptyset$. In that case, 
\begin{align*}
\sum_{i \in N}p_i(\mathbf{v}) &= R^f(\mathbf{v}) + \sum_{i \in N}\sum_{T
\subseteq N: i \in T}\frac{(-1)^{|T|}}{|T|}R^f(0_T,v_{-T}) \\
&= R^f(\mathbf{v}) + \sum_{T \subseteq N: T \ne
\emptyset}(-1)^{|T|}~R^f(0_T,v_{-T}) \\
&= \sum_{T \subseteq N}(-1)^{|T|}~R^f(0_T,v_{-T}) \\
&= 0,
\end{align*}
where the last equality follows from the fact that $f$ is residually
balanced. \newline

\noindent This completes the proof.
\end{proof}

\subsection*{Proof of Proposition \protect\ref{prop:rank}}

In this section, we give a proof of Proposition \ref{prop:rank}. We
extensively use Theorem \ref{theo:char} to prove our result. Before starting
our proofs, we explicitly compute the $R^f$ values for any ranking
allocation rule $f$. A valuation profile $\mathbf{v}$ is called \textbf{%
0-generic} if for all $i \ne j$ with $v_i=v_j$ we have $v_i=v_j=0$. \newline

We start off with the following claim.

\begin{lemma}
\label{lem:rcomp} Suppose $f$ is a ranking allocation rule with allocation
probabilities $\pi \equiv (\pi_1,\ldots,\pi_n)$. Then, for every 0-generic
valuation profile $\mathbf{v}$, we have 
\begin{align*}
R^f(\mathbf{v}) &= \sum_{j=1}^{n-1} j v_{(j+1)} (\pi_j - \pi_{j+1}),
\end{align*}
where $v_{(k)}=0$ if $\mathbf{v}[k]=\emptyset$ for any $k$.
\end{lemma}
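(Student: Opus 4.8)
The plan is to compute $R^f(\mathbf{v})$ for a ranking allocation rule $f$ directly from its definition as the sum of elementary-mechanism payments. Recall that
\[
R^f(\mathbf{v}) = \sum_{i \in N} v_i f_i(\mathbf{v}) - \sum_{i \in N} \int_0^{v_i} f_i(x_i, v_{-i})\,dx_i.
\]
Since $\mathbf{v}$ is 0-generic, the agents receiving positive valuation have distinct values, so I may relabel them as $v_{(1)} > v_{(2)} > \ldots$ with the agent ranked $j$ getting probability $\pi_j$ (and ties only at value $0$, handled by symmetry). The strategy is to evaluate each of the two sums separately and then combine them, expecting the result to telescope into the stated form $\sum_{j=1}^{n-1} j\, v_{(j+1)} (\pi_j - \pi_{j+1})$.

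\textbf{First} I would compute the welfare term. For a 0-generic profile the agent with the $j$-th highest value receives the object with probability $\pi_j$, so $\sum_{i \in N} v_i f_i(\mathbf{v}) = \sum_{j=1}^n \pi_j v_{(j)}$. \textbf{Next} I would compute the integral term, which is the more delicate piece. For a fixed agent $i$ ranked $j$ (so $v_i = v_{(j)}$), as $x_i$ decreases from $v_{(j)}$ to $0$ the agent's rank worsens: $f_i(x_i, v_{-i})$ equals $\pi_j$ on $(v_{(j+1)}, v_{(j)})$, equals $\pi_{j+1}$ on $(v_{(j+2)}, v_{(j+1)})$, and so on down to $\pi_n$ on $(0, v_{(n)})$. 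This is exactly the computation already carried out in Step 1 of the proof of Lemma~\ref{lem:contin}, giving
\[
\int_0^{v_{(j)}} f_i(x_i, v_{-i})\,dx_i = \sum_{h \geq j} \pi_h (v_{(h)} - v_{(h+1)}),
\]
where in a 0-generic profile the cardinalities satisfy $L_h = h$, so $\pi_{L_h} = \pi_h$.

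\textbf{Then} I would sum the integral term over all $n$ agents (equivalently over all ranks $j$) and reorganize by powers of $v_{(k)}$. Substituting both computed expressions into the formula for $R^f$ yields a double sum; the main bookkeeping obstacle is interchanging the order of summation cleanly and collecting the coefficient of each $v_{(k)}$. I would verify that the welfare term $\sum_j \pi_j v_{(j)}$ cancels against the like terms produced by the integral sum, leaving a telescoped expression. Concretely, after swapping the order of summation the coefficient of $v_{(k)}$ reorganizes so that the surviving terms pair $v_{(j+1)}$ with the difference $(\pi_j - \pi_{j+1})$ weighted by the number $j$ of agents whose rank is at least $j$, which is precisely the count of agents whose integral picks up the jump between $\pi_j$ and $\pi_{j+1}$ at the threshold $v_{(j+1)}$.

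\textbf{The hard part will be} the careful index manipulation in combining the two sums and confirming the telescoping, together with checking the boundary conventions: the convention $v_{(k)} = 0$ when $\mathbf{v}[k] = \emptyset$ ensures the formula degenerates correctly when fewer than $n$ agents have positive value, and the term for $j = n$ drops out since $v_{(n+1)} = 0$. I would finish by confirming the identity holds in the edge cases (for instance the GL rule, where only $\pi_1 - \pi_2$ survives) as a sanity check, and note that continuity from Lemma~\ref{lem:contin} lets the 0-generic formula extend by limits to all profiles if needed later.
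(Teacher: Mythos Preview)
Your proposal is correct and follows essentially the same approach as the paper: compute the integral $\int_0^{v_{(j)}} f_i(x_i,v_{-i})\,dx_i = \sum_{h\ge j}\pi_h(v_{(h)}-v_{(h+1)})$ from the step structure of a ranking rule, sum over agents, and swap the order of summation. The only organizational difference is that the paper first simplifies the per-agent payment $R^f_i(\mathbf{v}) = \pi_j v_{(j)} - \sum_{h\ge j}\pi_h(v_{(h)}-v_{(h+1)}) = \sum_{h\ge j+1} v_{(h)}(\pi_{h-1}-\pi_h)$ before summing over $j$, which makes the final swap a one-liner; you instead keep the welfare and integral aggregates separate and combine at the end, which requires a bit more index juggling but arrives at the same place.
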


\begin{proof}
Choose a 0-generic valuation profile $\mathbf{v}$. Consider agent $i \in N$
with $v_i > 0$. Since $\mathbf{v}$ is a 0-generic valuation profile, $\{i\}
= \mathbf{v}[j]$ for some $j$. If $j=n$, then $v_if_i(\mathbf{v}) -
\int_0^{v_i}f_i(x_i,v_{-i})dx_i = 0$. So, consider $j < n$. As a result 
\begin{align*}
v_if_i(\mathbf{v}) - \int_0^{v_i}f_i(x_i,v_{-i})dx_i &= \pi_jv_{(j)} -
\int_0^{v_{(j)}}f_i(x_i,v_{-i})dx_i \\
&= \pi_jv_{(j)} - \sum_{h = j}^n \pi_h (v_{(h)} - v_{(h+1)}) \qquad~\text{%
(Note: $v_{(n+1)} \equiv 0$.)} \\
&= \sum_{h = j+1}^n v_{(h)} (\pi_{h-1} - \pi_h).
\end{align*}
This implies that 
\begin{align*}
R^f(\mathbf{v}) &= \sum_{j=1}^{n-1} \sum_{h=j+1}^n v_{(h)}(\pi_{h-1} - \pi_h)
\\
&= \sum_{j=1}^{n-1} j v_{(j+1)} (\pi_j - \pi_{j+1}).
\end{align*}
\end{proof}

Using Lemma \ref{lem:rcomp}, we will now give a proof of Proposition \ref%
{prop:rank}. \newline

\noindent \textsc{Proof of Proposition \ref{prop:rank}.} \newline

\begin{proof}
Let $f$ be a ranking allocation rule with allocation probabilities $%
(\pi_1,\ldots,\pi_n)$. Note that $f$ is monotone in the sense of Myerson. By
Theorem \ref{theo:char}, we know that $f$ is satisfactorily implementable if
and only if for every $\mathbf{v}$ with $v_1 \ge v_2 \ge \ldots \ge v_n > 0$%
, we have 
\begin{align*}
\sum_{T \subseteq N}(-1)^{|T|}R^f(0_T,v_{-T}) = \sum_{k=0}^n \sum_{T
\subseteq N: |T|=n-k}(-1)^{n-k} R^f(0_T,v_{-T}) = 0.
\end{align*}
Since $R^f$ is continuous (Lemma \ref{lem:contin}), it is enough to show the
above equality for 0-generic valuation profiles. In other words, continuity
of $R^f$ implies that $f$ is satisfactorily implementable if and only if for
every $\mathbf{v}$ with $v_1 > v_2 > \ldots > v_n > 0$, we have 
\begin{align*}
\sum_{T \subseteq N}(-1)^{|T|}R^f(0_T,v_{-T}) = \sum_{k=0}^n \sum_{T
\subseteq N: |T|=k}(-1)^k R^f(0_T,v_{-T}) = 0.
\end{align*}
Note that for every $T \subseteq N$, the profile $(0_T,v_{-T})$ is a
0-generic valuation profile. We can divide this sum into two parts. 
\begin{align*}
\sum_{T \subseteq N}(-1)^{|T|}R^f(0_T,v_{-T}) = \sum_{T \subseteq N: n \in
T}(-1)^{|T|}R^f(0_T,v_{-T}) + \sum_{T \subseteq N: n \notin
T}(-1)^{|T|}R^f(0_T,v_{-T})
\end{align*}
Hence, we can write the residual balancedness condition as 
\begin{align*}
\sum_{T \subseteq N}(-1)^{|T|}R^f(0_T,v_{-T}) = \sum_{T \subseteq N: n
\notin T} (-1)^{|T|}\Big[ R^f(0_T,v_{-T}) - R^f(0_{T \cup \{n\}},v_{-(T \cup
\{n\})})\Big] = 0.
\end{align*}

Now, fix a $T \subseteq N$ with $n \notin T$ and $|T|=n-k$. Since $\mathbf{v}
$ is a 0-generic valuation profile, the rank of agent $n$ in $(0_T,v_{-T})$
is $k$. Without loss of generality, we denote $(0_T,v_{-T}) \equiv \mathbf{v}%
^{\prime }$. Note that $v^{\prime }_{(k)}=v_n$. Using Lemma \ref{lem:rcomp}, 
\begin{equation*}
R^f(0_T,v_{-T}) = \sum_{j=1}^{k-1} jv^{\prime }_{(j+1)}(\pi_j - \pi_{j+1})
\end{equation*}
and 
\begin{equation*}
R^f(0_{(T \cup \{n\})},v_{-(T \cup \{n\})}) = \sum_{j=1}^{k-2} jv^{\prime
}_{(j+1)}(\pi_j - \pi_{j+1}).
\end{equation*}
Hence, we can write 
\begin{equation*}
R^f(0_T,v_{-T}) - R^f(0_{T \cup \{n\}},v_{-(T \cup \{n\})}) = (k-1)v^{\prime
}_{(k)} (\pi_{k-1} - \pi_k) = (k-1)v_n (\pi_{k-1} - \pi_k),
\end{equation*}
where the last equality follows because $v^{\prime }_{(k)}=v_n$. Note that
the RHS only depends on the size of $T$ but not on which elements are
present in $T$. As a result, we can write the residual balancedness
condition as 
\begin{align*}
0 = \sum_{T \subseteq N}(-1)^{|T|}R^f(0_T,v_{-T}) &= \sum_{T \subseteq N: n
\notin T} (-1)^{|T|}\Big[ R^f(0_T,v_{-T}) - R^f(0_{T \cup \{n\}},v_{-(T \cup
\{n\})})\Big] \\
&= \sum_{k=1}^{n} \sum_{T \subseteq N: n \notin T, |T|=n-k}(-1)^{n-k}
(k-1)v_n (\pi_{k-1} - \pi_k) \\
&= \sum_{k=2}^{n} (-1)^{n-k} C(n-1,k-1) (k-1)(\pi_{k-1} - \pi_k)v_n.
\end{align*}
The last inequality follows because we can form a subset of size $n-k$ from $%
n-1$ elements in $C(n-1,k-1)$ ways. Now, we simplify this expression to get
our desired result. Since $v_n > 0$, residual balancedness is equivalent to: 
\begin{align*}
0 &= \sum_{k=2}^{n} (-1)^{n-k} C(n-1,k-1) (k-1)(\pi_{k-1} - \pi_k) \\
&= \sum_{k=2}^{n} (-1)^{n-k} C(n-1,k-1) (k-1)\pi_{k-1} - \sum_{k=2}^{n}
(-1)^{n-k} C(n-1,k-1) (k-1)\pi_{k} \\
&= -\sum_{\ell=1}^{n-1}(-1)^{n-\ell}C(n-1,\ell)\ell \pi_{\ell} -
\sum_{\ell=1}^{n} (-1)^{n-\ell} C(n-1,\ell-1) (\ell-1)\pi_{\ell} \\
&= -\sum_{\ell=1}^{n-1}(-1)^{n-\ell}\pi_{\ell} \big[\ell C(n-1,\ell) +
(\ell-1)C(n-1,\ell-1) \big] - (-1)^0 (n-1)C(n-1,n-1)\pi_n \\
&= -\sum_{\ell=1}^{n-1}(-1)^{n-\ell}(n-1)C(n-1,\ell-1)\pi_{\ell} - (-1)^0
(n-1)C(n-1,n-1)\pi_n \\
& \text{(Here, we used the fact that $\ell C(n-1,\ell) +
(\ell-1)C(n-1,\ell-1)=(n-1)C(n-1,\ell-1)$.)} \\
&= -\sum_{\ell=1}^n (-1)^{n-\ell}(n-1)C(n-1,\ell-1)\pi_{\ell}
\end{align*}
Since $n > 1$, we get that residual balancedness is equivalent to 
\begin{equation*}
0=\sum_{\ell=1}^n (-1)^{n-\ell}C(n-1,\ell-1)\pi_{\ell}.
\end{equation*}
This can be equivalently written as 
\begin{equation*}
0=\sum_{\ell=1}^n (-1)^{\ell}C(n-1,\ell-1)\pi_{\ell},
\end{equation*}
which is the desired claim.
\end{proof}

\subsection*{Proofs of Theorem \protect\ref{theo:nrgl}}

In this section, we give a proof of Theorem \ref{theo:nrgl}. We start by
characterizing the two-step ranking allocation rules that can be
satisfactorily implemented.

\begin{prop}
\label{prop:2rank} A two-step ranking allocation rule is satisfactorily
implementable if and only if $2 \le \ell \le n-1$, $\ell$ is even, and 
\begin{equation*}
\pi_1 = \frac{C(n-2,\ell-1)+1}{C(n-2,\ell-1)+\ell}.
\end{equation*}
\end{prop}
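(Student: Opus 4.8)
The plan is to reduce everything to the single scalar equation for satisfactory implementability of ranking rules already proved in Proposition \ref{prop:rank}, namely
\begin{equation*}
\sum_{k=1}^n (-1)^k C(n-1,k-1)\pi_k = 0,
\end{equation*}
and then specialize it to the two-step form $(\pi_1,\underbrace{\pi_2,\ldots,\pi_2}_{\ell-1},0,\ldots,0)$. First I would substitute this profile into the sum: only the first $\ell$ coefficients survive, so the condition collapses to $-\pi_1 + \pi_2\, S = 0$, where $S := \sum_{k=2}^{\ell}(-1)^k C(n-1,k-1)$. The content of the proposition is therefore to evaluate $S$ explicitly and then combine the resulting relation $\pi_1 = \pi_2 S$ with the normalization $\pi_1 + (\ell-1)\pi_2 = 1$ built into the definition of a two-step rule.

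The next step is a routine evaluation of $S$. Shifting the index by $j=k-1$ and invoking the standard alternating-binomial identity $\sum_{j=0}^{m}(-1)^j C(N,j) = (-1)^m C(N-1,m)$ with $N=n-1$ and $m=\ell-1$, I would obtain
\begin{equation*}
S = (-1)^{\ell} C(n-2,\ell-1) + 1 =: A.
\end{equation*}
Writing $\pi_2 = (1-\pi_1)/(\ell-1)$ from the normalization and plugging into $\pi_1 = \pi_2 A$ yields a single linear equation in $\pi_1$, whose solution is
\begin{equation*}
\pi_1 = \frac{A}{\ell-1+A}.
\end{equation*}
When $\ell$ is even, $A = C(n-2,\ell-1)+1$, and this is exactly $\frac{C(n-2,\ell-1)+1}{C(n-2,\ell-1)+\ell}$, the claimed value. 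So the ``if'' direction amounts to checking that, for even $\ell$ with $2 \le \ell \le n-1$, this $\pi_1$ really does define an admissible two-step rule.

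The genuine work, and the step I expect to be the main obstacle, is verifying the admissibility constraints $\pi_1 > \pi_2 > 0$ and using them to force the parity and range conditions. These two inequalities are equivalent to $\tfrac{1}{\ell} < \pi_1 < 1$ (since $\pi_2>0 \Leftrightarrow \pi_1<1$ and $\pi_1>\pi_2 \Leftrightarrow \ell\pi_1>1$). For even $\ell$ with $\ell\le n-1$ I would check $\frac{C(n-2,\ell-1)+1}{C(n-2,\ell-1)+\ell} \in (\tfrac1\ell,1)$ directly, which reduces to $(\ell-1)C(n-2,\ell-1)>0$, true because $C(n-2,\ell-1)\ge 1$ exactly when $\ell \le n-1$; this also shows why $\ell=n$ is excluded (it would force $\pi_1=\pi_2$). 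The harder ``only if'' direction is to rule out odd $\ell$: here $A = 1 - C(n-2,\ell-1) \le 0$, and I would run a short sign analysis on the denominator $\ell-1+A = \ell - C(n-2,\ell-1)$. If $\ell-1+A>0$ then $\pi_1\le 0$; if $\ell-1+A<0$ then computing $\pi_1-1 = -(\ell-1)/(\ell-1+A) > 0$ gives $\pi_1>1$; and the boundary case $\ell-1+A=0$ makes the linear equation inconsistent. In every subcase $\pi_1 \notin (\tfrac1\ell,1)$, so no admissible two-step rule exists for odd $\ell$. Assembling these observations gives both directions of the equivalence and pins down $\pi_1$ uniquely for each even $\ell$.
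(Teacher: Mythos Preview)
Your proposal is correct and follows essentially the same route as the paper: reduce to the single equation of Proposition~\ref{prop:rank}, evaluate the alternating partial sum $\sum_{k=2}^{\ell}(-1)^k C(n-1,k-1)$ via the identity $\sum_{j=0}^{m}(-1)^j C(N,j)=(-1)^m C(N-1,m)$, and combine with $\pi_1+(\ell-1)\pi_2=1$. Your treatment of the odd-$\ell$ case is in fact more explicit than the paper's, which simply asserts that the equation forces $\ell$ even; your sign analysis on $A$ and $\ell-1+A$ spells out exactly why no admissible $\pi_1\in(1/\ell,1)$ survives.
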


\begin{proof}
In this and subsequent proofs, we use the following combinatorial fact.

\begin{fact}
\label{fact:f1} For any $r \in \{0,\ldots,n-1\}$, 
\begin{equation*}
\sum_{j=0}^r(-1)^j C(n,j) = (-1)^r C(n-1,r)
\end{equation*}
and 
\begin{equation*}
\sum_{j=0}^n (-1)^j C(n,j) = 0.
\end{equation*}
\end{fact}

By Proposition \ref{prop:rank}, we know that for any two-step ranking
allocation rule defined by $(\pi_1,\ell)$, satisfactorily implementability
is equivalent to 
\begin{align}  \label{eq:ir00}
-\pi_1 + \sum_{k=2}^{\ell}(-1)^k C(n-1,k-1) \pi_2 &= 0.
\end{align}
This immediately implies that $\ell \ne 1$. Further, if $\ell=n$, then we
must have $\pi_1= \sum_{k=2}^{n}(-1)^k C(n-1,k-1) \pi_2 = \pi_2$. But, by
definition of a two-step allocation rule $\pi_1 > \pi_2$. So, we have $1 <
\ell < n$.

Now, using Fact \ref{fact:f1}, 
\begin{align*}
\sum_{k=2}^{\ell}(-1)^k C(n-1,k-1) &= -\sum_{k=1}^{\ell-1}(-1)^k C(n-1,k) \\
&=1 - \Big[\sum_{k=0}^{\ell-1}(-1)^k C(n-1,k) \Big] \\
&= 1-(-1)^{\ell-1} C(n-2,\ell-1) \\
&= 1+ (-1)^{\ell} C(n-2,\ell-1).
\end{align*}
Using this and the fact that $\pi_2=\frac{1}{\ell-1} (1-\pi_1)$, we simplify
Equation \ref{eq:ir00} as 
\begin{align*}
-\pi_1 + \frac{1}{\ell-1} (1-\pi_1) \Big( 1+ (-1)^{\ell} C(n-2,\ell-1) \Big) %
&= 0.
\end{align*}
For this to hold, we must have $\ell$ even and 
\begin{equation*}
\pi_1= \frac{C(n-2,\ell-1)+1}{C(n-2,\ell-1)+\ell}.
\end{equation*}
\end{proof}

We now provide a proof of Theorem \ref{theo:nrgl}. \newline

\noindent \textsc{Proof of Theorem \ref{theo:nrgl}.} \newline

\begin{proof}
We do the proof in several steps. \newline

\noindent \textsc{Step 1 - The Primal Problem.} In this step, we formulate
the problem of finding an r-optimal allocation rule as a linear program.

Pick $\epsilon \in \mathbb{R}^n$ sufficiently close to the zero vector. Note
that $\epsilon$ may be the $n$-dimensional zero vector or a vector with
negative components. We formulate a linear program (in terms of $\epsilon$)
as follows.

\begin{align*}
\max_{(\pi_1,\ldots,\pi_n)} \pi_1 + \sum_{j=1}^n \epsilon_j \pi_j & ~~~ \\
\text{s.t.}~~~~& ~~~~~~~~~~~~\mathbf{(LP-RANK)} \\
\pi_{i+1} - \pi_i &\le 0~~~\qquad~\forall~i \in \{1,\ldots,n-1\} \\
\sum_{i=1}^n (-1)^i C(n-1,i-1) \pi_i &= 0 \\
\sum_{i=1}^n \pi_i &= 1 \\
\pi_i &\ge 0~~~\qquad~\forall~i \in \{1,\ldots,n\}.
\end{align*}

By Proposition \ref{prop:rank}, a feasible solution to the linear program 
\textbf{(LP-RANK)} is a satisfactorily implementable ranking allocation
rule. Note that we have imposed $\sum_{i=1}^n \pi_i=1$ instead of weak
inequality. Since we are interested in finding an r-optimal allocation rule,
by Lemma \ref{lem:add1}, this is without loss of generality. Also, if $%
\epsilon$ is the zero vector, then the optimal solution of this linear
program will give us an r-optimal allocation rule. We will find an optimal
solution of \textbf{(LP-RANK)} for all $\epsilon$ sufficiently close to the
zero vector. This will ensure that such an optimal solution is the unique
r-optimal allocation rule. \newline

\noindent \textsc{Step 2 - The Dual Problem} We first consider the dual of 
\textbf{(LP-RANK)} and construct a dual feasible solution. For formulating
the dual, we associate a variable $\theta_i$ for each of the constraint in
the first set of constraints corresponding to $i \in \{1,\ldots,n-1\}$. We
also associate variables $y$ and $z$ for the second and third constraints
respectively.

This leads us to the dual of the linear program \textbf{(LP-RANK)}. 
\begin{align*}
\min_{(y,z,(\theta_1,\ldots,\theta_{n-1}))} z & ~~~~ \\
\text{s.t.}~~~~~~&~~~~~~\mathbf{(DP-RANK)} \\
-\theta_1 - y + z &\ge 1+\epsilon_1 \\
\theta_{i-1} - \theta_i + (-1)^i C(n-1,i-1)y + z &\ge \epsilon_i
\qquad~\forall~i \in \{2,\ldots,n-1\} \\
\theta_{n-1} + (-1)^{n} y + z &\ge \epsilon_n \\
\theta_i &\ge 0~\qquad~\forall~i \in \{1,\ldots,n-1\}.
\end{align*}

We construct a dual feasible solution as follows. Set $\theta_1=0$ and we
will choose $y$ and $z$ such that $z-y=1+\epsilon_1$. This will imply that
the first constraint is automatically satisfied. The rest of the constraints
are satisfied by successively computing $\theta_i$ for $i \in
\{2,\ldots,n-1\}$. First, we set 
\begin{equation*}
\theta_2=\theta_1 + (-1)^2 C(n-1,1)y + z - \epsilon_2 = (-1)^2 C(n-1,1)y + z
- \epsilon_2 .
\end{equation*}
Then, 
\begin{equation*}
\theta_3=\theta_2 + (-1)^3 C(n-1,2)y + z - \epsilon_3 = \Big( (-1)^2
C(n-1,1)+(-1)^3 C(n-1,2) \Big)y + 2z - \epsilon_2 - \epsilon_3.
\end{equation*}
Continuing in this manner, we have for all $i \in \{2,\ldots,n-1\}$, 
\begin{align*}
\theta_i &= \Big( \sum_{j=1}^{i-1} (-1)^{j+1} C(n-1,j) \Big) y + (i-1)z -
\sum_{j=2}^i \epsilon_j \\
&= (i-1)z - \sum_{j=2}^i \epsilon_j - \Big( \sum_{j=1}^{i-1}(-1)^j C(n-1,j) %
\Big) y \\
&= (i-1)z - \sum_{j=2}^i \epsilon_j - \Big( (-1)^{i-1}C(n-2,i-1) - 1 \Big) y
\qquad~\text{(Using Fact \ref{fact:f1})} \\
&= (i-1)z - \sum_{j=2}^i \epsilon_j - \Big( (-1)^{i-1}C(n-2,i-1) - 1 \Big) %
(z-1) \\
&+ \epsilon_1 \Big( (-1)^{i-1}C(n-2,i-1) - 1 \Big) \qquad~\text{(Using $y=z
- 1 - \epsilon_1$)} \\
&= (1+\epsilon_1) \Big((-1)^{i-1}C(n-2,i-1) - 1 \Big) - z\Big( %
(-1)^{i-1}C(n-2,i-1) - i \Big) - \sum_{j=2}^i \epsilon_j.
\end{align*}
This choice of $\theta_i$ ensures that the second set of inequalities in 
\textbf{DP-RANK} are satisfied. However, we need to make sure that (a) $%
\theta_i$s are non-negative and (b) the last inequality is satisfied. These
are ensured by choosing $y$ and $z$ appropriately.

For every $i \in \{2,\ldots,n-1\}$, let 
\begin{equation*}
H(n,i) := (-1)^{i-1}C(n-2,i-1).
\end{equation*}

First, for non-negativity of $\theta_i$, we will choose $z$ appropriately.
Note that $(1+\epsilon_1) > 0$ since $\epsilon_1$ is sufficiently close to
zero. Further, $\theta_i \ge 0$ if and only if 
\begin{align}  \label{eq:thno}
(1+\epsilon_1) \Big(H(n,i) - 1 \Big) - z \Big(H(n,i) - i \Big) -
\sum_{j=2}^i \epsilon_j &\ge 0.
\end{align}
We consider two cases. \newline

\noindent \textsc{Case a.} If $i$ is even, we have $H(n,i)=-C(n-2,i-1) < 0$.
Simplifying, we get 
\begin{align}
z &\ge (1+\epsilon_1) \frac{\Big(C(n-2,i-1)+1 \Big)}{\Big(C(n-2,i-1) + i %
\Big)} +\frac{1}{\Big(C(n-2,i-1) + i \Big)} \sum_{j=2}^i \epsilon_j  \notag
\\
&= (1+\epsilon_1) \Bigg( 1 - \frac{(i-1)}{\Big(C(n-2,i-1) + i \Big)} \Bigg) %
+ \frac{1}{\Big(C(n-2,i-1) + i \Big)} \sum_{j=2}^i \epsilon_j.
\label{eq:zmin}
\end{align}
Note that if $i=2$, we need 
\begin{equation*}
z \ge (1+\epsilon_1) (1-\frac{1}{n}) + \frac{1}{n} \epsilon_2
\end{equation*}
Now, choose $\ell$ as follows: 
\begin{align}  \label{eq:ell}
\ell \in \arg \min_{n-1 \ge i \ge 2, i~\text{even}} \frac{(i-1)}{\Big(%
C(n-2,i-1) + i \Big)}.
\end{align}
Observe that as $\epsilon$ is sufficiently close to the zero vector, the
second term on the RHS of Inequality \ref{eq:zmin} is very small (close to
zero) for all $i$. Hence, this choice of $\ell$ maximizes the RHS of
Inequality \ref{eq:zmin} if (a) $\epsilon$ is the zero vector or (b) there
is a unique $\ell$ that minimizes the expression in (\ref{eq:ell}) - if
there are more than one $\ell$ which minimizes the expression in (\ref%
{eq:ell}), then the RHS of Inequality (\ref{eq:zmin}) is minimized by
looking at the second term. By Corollary \ref{cor:comp}, if $n \ne 8$, then
there is a unique $\ell$ that minimizes the expression in (\ref{eq:ell}).
For $n = 8$, there are two possible values of $\ell$ that minimize this the
expression in (\ref{eq:ell}). As a result, which choice of $\ell$ maximizes
the RHS of Inequality \ref{eq:zmin} will depend on the value of $\epsilon$ -
if $\epsilon$ is the zero vector, then either choice will work.

This implies that for $\epsilon$ sufficiently close to the zero vector and $%
n \ne 8$, Inequality \ref{eq:zmin} can be satisfied by choosing $z=z^*$,
where 
\begin{equation*}
z^*:= (1+\epsilon_1) \Bigg( 1 - \frac{(\ell-1)}{\Big(C(n-2,\ell-1) + \ell %
\Big)} \Bigg) + \frac{1}{\Big(C(n-2,\ell-1) + \ell \Big)} \sum_{j=2}^i
\epsilon_j.
\end{equation*}
For $n=8$, choice of $z=z^*$, where $z^*$ is defined by choosing any $\ell$
that minimizes the expression in (\ref{eq:ell}), satisfies Inequality \ref%
{eq:zmin} if $\epsilon$ is the zero vector.

As argued earlier, $z^* \ge (1+\epsilon_1) (1-\frac{1}{n}) + \frac{1}{n}
\epsilon_2$. \newline

\noindent\textsc{Case b.} If $i$ is odd, then $H(n,i)=C(n-2,i-1)$. If $i=n-1$%
, then Inequality \ref{eq:thno} reduces to $z (n-2) - \sum_{j=2}^{n-1}
\epsilon_j \ge 0$. Since $n \ge 3$ and $\epsilon$ is sufficiently close to
the zero vector, by choosing $z=z^*$, it is satisfied. Hence, we assume $i <
n-1$. In that case $H(n,i) \ge i$. If $H(n,i) - i=0$, then the desired
Inequality (\ref{eq:thno}) is satisfied for any choice of $z$ since $%
\epsilon $ is sufficiently close to the zero vector. Assume that $H(n,i) > i$%
. Then, Inequality \ref{eq:thno} holds if 
\begin{align*}
z &\le (1+\epsilon_1) \frac{\Big(C(n-2,i-1) - 1\Big)}{\Big(C(n-2,i-1)-i %
\Big) } - \frac{1}{\Big(C(n-2,i-1)-i \Big)} \sum_{j=2}^i\epsilon_j \\
&= (1+\epsilon_1) \Bigg( 1+ \frac{(i-1)}{\Big(C(n-2,i-1)-i\Big)} \Bigg) - 
\frac{1}{\Big(C(n-2,i-1)-i \Big)} \sum_{j=2}^i\epsilon_j.
\end{align*}
Since $\epsilon$ is arbitrarily close to the zero vector, by setting $z=z^*$%
, this inequality is trivially satisfied. \newline

Hence, we choose $z=z^*$, where 
\begin{align}  \label{eq:expz}
z^* &= (1+\epsilon_1) \Bigg( 1 - \frac{(\ell-1)}{\Big(C(n-2,\ell-1) + \ell %
\Big)} \Bigg) + \frac{1}{\Big(C(n-2,\ell-1) + \ell \Big)} \sum_{j=2}^i
\epsilon_j.
\end{align}
Hence, we satisfy the non-negativity constraints by this choice of $z$. Let $%
y^*=z^*-1 - \epsilon_1$. Finally, we show that the last inequality in 
\textbf{DP-RANK} is satisfied. To see this, if $n$ is odd, then the
inequality reduces to 
\begin{equation*}
\theta_{n-1} - y^* + z^* = \theta_{n-1} + 1 + \epsilon_1 \ge \epsilon_n,
\end{equation*}
where the inequality follows since we have chosen $\theta_{n-1} \ge 0$ and $%
\epsilon$ is arbitrarily close to the zero vector.

If $n$ is even, we note that $\theta_{n-1}=z(n-2) -
\sum_{j=2}^{n-1}\epsilon_j$ by definition. Then the inequality reduces to 
\begin{align*}
\theta_{n-1} + y^* + z^* &= z^*(n-2) + 2z^* -1 - \sum_{j=1}^{n-1}\epsilon_j
\\
&= z^*n - 1 - \sum_{j=1}^{n-1}\epsilon_j \\
&\ge n (1+\epsilon_1) (1-\frac{1}{n}) + \epsilon_2 - 1 -
\sum_{j=1}^{n-1}\epsilon_j \\
&= n -2 + \epsilon_1(n-1) + \epsilon_2 - \sum_{j=1}^{n-1}\epsilon_j \\
&\ge \epsilon_n,
\end{align*}
where we used the fact that $z^* \ge (1+\epsilon_1) (1-\frac{1}{n}) + \frac{1%
}{n} \epsilon_2$, $n \ge 3$, and $\epsilon$ is sufficiently close to the
zero vector in the above inequalities.

This completes the proof that there is a feasible solution of \textbf{%
(DP-RANK)} with $z^*$ defined by Equation \ref{eq:expz}. \newline

\noindent \textsc{Step 3 - Optimality.} In this step, we construct a
feasible solution of \textbf{(LP-RANK)} by constructing the probabilities of
a two-step ranking allocation rule as follows: 
\begin{align*}
\pi_1^* &= 1 - \frac{\ell-1}{\Big(C(n-2,\ell-1) + \ell \Big)}. \\
\pi_i^* &= \frac{1}{\Big(C(n-2,\ell-1) + \ell \Big)} \qquad~\forall~i \in
\{2,\ldots,\ell\} \\
\pi_i^* &= 0 \qquad~\forall~i \in \{\ell+1,\ldots,n\}.
\end{align*}
By construction, $\sum_{j \in N}\pi^*_j=1$ and $\pi^*_1 \ge \pi^*_i$ for all 
$i \in \{2,\ldots,\ell\}$. By Proposition \ref{prop:2rank}, $%
(\pi_1^*,\ldots,\pi_n^*)$ is a feasible solution of \textbf{(LP-RANK)}.
Further, we see that the objective function value of \textbf{(LP-RANK)} with
this feasible solution is 
\begin{equation*}
(1+\epsilon_1)\Bigg( 1 - \frac{\ell-1}{\Big(C(n-2,\ell-1) + \ell \Big)} %
\Bigg) + \sum_{j=2}^{\ell} \epsilon_j \frac{1}{\Big(C(n-2,\ell-1) + \ell %
\Big)} = z^*,
\end{equation*}
which is the objective function value of \textbf{(DP-RANK)} for the dual
feasible solution we found in Step 2. Hence, by the strong duality theorem
of linear programming, $(\pi_1^*,\ldots,\pi_n^*)$ is an optimal solution of 
\textbf{(LP-RANK)}. For all $n \ge 3$, this is an optimal solution when $%
\epsilon$ is the zero vector. Hence, it describes an r-optimal allocation
rule. For $n \ne 8$, this is an optimal solution for all $\epsilon$
arbitrarily close to the zero vector, and hence, it is the unique optimal
solution when $\epsilon$ is \emph{equal} to the zero vector - this follows
from a result by \citet{Mang79}, who showed that an optimal solution of a
linear program is unique if and only if it remains the optimal solution for
sufficiently small perturbation of the objective function.
\end{proof}

\subsection*{Proof of Theorem \protect\ref{theo:ir}}

In this section, we provide a proof of individual rationality of a class of
two-step ranking mechanisms. First, we remind the following elementary fact
from \citet{Myerson81}.

\begin{fact}
\label{fact:ir} A mechanism $(f,\mathbf{p})$ is ex-post individually rational if and
only if for every $i \in N$ and for every $v_{-i}$, we have $p_i(0,v_{-i})
\le 0$.
\end{fact}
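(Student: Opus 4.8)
The plan is to derive the equivalence directly from the revenue-equivalence formula in Lemma \ref{lem:my}, which is available because the mechanisms under consideration are DSIC (indeed, throughout we work with satisfactory mechanisms). Writing the net utility of agent $i$ at $(v_i,v_{-i})$ as $\mathcal{U}_i^M(v_i,v_{-i}) = v_i f_i(v_i,v_{-i}) - p_i(v_i,v_{-i})$ and substituting the payment expression from Lemma \ref{lem:my}, the two occurrences of $v_i f_i(v_i,v_{-i})$ cancel, leaving the clean identity
\[
\mathcal{U}_i^M(v_i,v_{-i}) = -p_i(0,v_{-i}) + \int_0^{v_i} f_i(x_i,v_{-i})\,dx_i.
\]
Since ex-post individual rationality is exactly the requirement that $\mathcal{U}_i^M(v_i,v_{-i}) \ge 0$ for every $i$ and every $\mathbf{v}$, the whole argument reduces to reading off the sign of the right-hand side of this identity.

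For the necessity direction, I would specialize the identity to $v_i = 0$: the integral vanishes and we are left with $\mathcal{U}_i^M(0,v_{-i}) = -p_i(0,v_{-i})$. Ex-post individual rationality applied at this profile forces $-p_i(0,v_{-i}) \ge 0$, i.e.\ $p_i(0,v_{-i}) \le 0$ for every $i$ and every $v_{-i}$, as claimed.

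For the sufficiency direction, I would assume $p_i(0,v_{-i}) \le 0$ for all $i$ and $v_{-i}$ and check that both summands on the right-hand side of the identity are nonnegative at an arbitrary profile. The first term $-p_i(0,v_{-i})$ is nonnegative by hypothesis, and the integrand $f_i(x_i,v_{-i})$ is nonnegative because $f_i$ is an allocation probability, so the integral is nonnegative as well. Hence $\mathcal{U}_i^M(v_i,v_{-i}) \ge 0$ at every $\mathbf{v}$, which is precisely ex-post individual rationality.

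There is no substantive obstacle in this argument; it is a direct consequence of revenue equivalence. The only points needing a moment of care are that Lemma \ref{lem:my} is invoked legitimately (the mechanism is DSIC), and that nonnegativity of the integral rests on $f_i \ge 0$, which is built into the definition of an allocation rule mapping into $[0,1]^n$.
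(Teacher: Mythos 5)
Your proof is correct and is exactly the standard argument: the paper states this fact without proof (citing \citet{Myerson81}), and the revenue-equivalence identity $\mathcal{U}_i^M(v_i,v_{-i}) = -p_i(0,v_{-i}) + \int_0^{v_i} f_i(x_i,v_{-i})\,dx_i$ you derive is precisely what underlies it. You also correctly flag the one genuine subtlety — the statement as written omits the DSIC hypothesis, without which the equivalence fails, but it holds here because the fact is only ever applied to satisfactory (hence DSIC) mechanisms.
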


Note that the above fact is a \emph{necessary and sufficient} condition for
IR. We now present two useful lemmas that will help us prove Theorem \ref%
{theo:ir}.

\begin{lemma}
\label{lem:ir0} Suppose $f$ is a satisfactorily implementable two-step
ranking allocation rule defined by $(\pi_1,\ell)$. Then, for every 0-generic
valuation profile $\mathbf{v}$, we have 
\begin{equation*}
R^f(\mathbf{v})= (\pi_1-\pi_2) v_{(2)} + \ell \pi_2 v_{(\ell+1)},
\end{equation*}
where $\pi_2=\frac{1}{\ell-1} (1-\pi_1)$.
\end{lemma}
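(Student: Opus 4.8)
The plan is to reduce the claim to a single substitution into Lemma \ref{lem:rcomp}. That lemma already supplies, for \emph{any} ranking allocation rule and any 0-generic profile $\mathbf{v}$, the closed form
\begin{equation*}
R^f(\mathbf{v}) = \sum_{j=1}^{n-1} j\, v_{(j+1)} (\pi_j - \pi_{j+1}),
\end{equation*}
so the only work left is to evaluate the telescoping differences $\pi_j - \pi_{j+1}$ for the particular two-step profile $(\pi_1,\underbrace{\pi_2,\ldots,\pi_2}_{\ell-1},0,\ldots,0)$.

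First I would record the three blocks of the profile: the singleton $\pi_1$, then the constant block $\pi_2=\pi_3=\cdots=\pi_\ell$, then the zero block $\pi_{\ell+1}=\cdots=\pi_n=0$. Consequently $\pi_j-\pi_{j+1}=0$ for every $j\in\{2,\ldots,\ell-1\}$ (inside the constant block) and for every $j\in\{\ell+1,\ldots,n-1\}$ (inside the zero block). Only two differences survive: $j=1$ contributes $\pi_1-\pi_2$, and $j=\ell$ contributes $\pi_\ell-\pi_{\ell+1}=\pi_2-0=\pi_2$. Substituting these into the formula above leaves exactly
\begin{equation*}
R^f(\mathbf{v}) = 1\cdot v_{(2)}(\pi_1-\pi_2) + \ell\cdot v_{(\ell+1)}\,\pi_2 = (\pi_1-\pi_2)v_{(2)} + \ell\pi_2 v_{(\ell+1)},
\end{equation*}
which is the claimed identity; the stated relation $\pi_2=\frac{1}{\ell-1}(1-\pi_1)$ is merely the defining normalization $\pi_1+(\ell-1)\pi_2=1$ of a two-step rule.

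There is essentially no obstacle here — the argument is a mechanical specialization of Lemma \ref{lem:rcomp} — so the only points needing a sentence of care are bookkeeping. One must check that the surviving index $j=\ell$ actually lies inside the summation range $\{1,\ldots,n-1\}$, i.e.\ that $\ell\le n-1$; this is exactly where the hypothesis of satisfactory implementability is used, since by Proposition \ref{prop:2rank} such a two-step rule satisfies $2\le\ell\le n-1$, guaranteeing the $j=\ell$ term is legitimately captured. Edge cases where some $\mathbf{v}[k]=\emptyset$ need no separate treatment, since the convention $v_{(k)}=0$ already built into Lemma \ref{lem:rcomp} handles them automatically.
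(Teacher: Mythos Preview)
Your proof is correct and follows exactly the approach the paper intends: the paper's own proof is a one-line appeal to Lemma~\ref{lem:rcomp}, and you have simply spelled out the substitution in full detail. Your added remark that satisfactory implementability (via Proposition~\ref{prop:2rank}) guarantees $\ell\le n-1$, so that the $j=\ell$ term is captured by the sum, is a nice point of care that the paper leaves implicit.
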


\begin{proof}
The proof of the formula for $R^f$ follows from the formula derived for any
satisfactorily implementable ranking allocation rule in Lemma \ref{lem:rcomp}%
.
\end{proof}

\noindent \textsc{Notation.} For any two positive integers $K, K^{\prime }$
with $K \ge K^{\prime }$, we denote the consecutive product of integers from 
$K^{\prime }$ to $K$ as 
\begin{equation*}
\psi(K^{\prime },K) = K^{\prime }\times (K^{\prime }+1) \times \cdots \times
K.
\end{equation*}

\begin{lemma}
\label{lem:ir1} Suppose $(f,\mathbf{p})$ is a satisfactory mechanism, where $%
f$ is a two-step ranking allocation rule defined by $(\pi_1,\ell)$. Then,
for every $\mathbf{v}$ with $|N^0_{\mathbf{v}}|=n-K$, $K \le \ell$, and $v_1
> \ldots > v_K > 0$, we have for every $i \in N^0_{\mathbf{v}}$, 
\begin{align*}
p_i(\mathbf{v}) &= -\frac{(\pi_1-\pi_2)}{\psi(n-K,n-2)} \Big[%
\sum_{j=2}^{K-1} (-1)^j (j-1)! \psi(n-K,n-j-1)v_j + (-1)^K (K-1)!v_K \Big],~%
\text{if $K \ge 2$},
\end{align*}
and $p_i(\mathbf{v})= 0$ if $K \in \{0,1\}$.
\end{lemma}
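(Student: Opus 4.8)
The plan is to evaluate the closed-form payment formula of Theorem \ref{theo:char} for a zero-valuation agent directly. For $i \in N^0_{\mathbf{v}}$ that formula reads $p_i(\mathbf{v}) = -\frac{1}{|N^0_{\mathbf{v}}|}\sum_{T \supseteq N^0_{\mathbf{v}}}\frac{(-1)^{|T \setminus N^0_{\mathbf{v}}|}}{C(|T|,|N^0_{\mathbf{v}}|)}R^f(0_T,v_{-T})$. Writing $K':=|N^0_{\mathbf{v}}|=n-K$ and indexing the supersets as $T = N^0_{\mathbf{v}} \cup S$ with $S$ ranging over subsets of the $K$ positively-valued agents $\{1,\ldots,K\}$, we have $|T\setminus N^0_{\mathbf{v}}|=|S|$ and $|T|=K'+|S|$, so the payment becomes $-\frac{1}{K'}\sum_{S \subseteq \{1,\ldots,K\}}\frac{(-1)^{|S|}}{C(K'+|S|,K')}R^f(0_{N^0_{\mathbf{v}}\cup S},v_{-(N^0_{\mathbf{v}}\cup S)})$. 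The key simplification is that since $K \le \ell$, every profile $(0_T,v_{-T})$ has at most $\ell$ positively-valued agents, hence its $(\ell+1)$-th highest value vanishes; Lemma \ref{lem:ir0} then collapses to $R^f(0_T,v_{-T}) = (\pi_1-\pi_2)\,w(S)$, where $w(S)$ is the second-highest surviving valuation (and $0$ if fewer than two agents survive).

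Next I would collect the coefficient of each $v_j$, $j \in \{2,\ldots,K\}$. Since $v_1 > \cdots > v_K$, one has $w(S)=v_j$ exactly when $j \notin S$, exactly one of the higher-valued agents $\{1,\ldots,j-1\}$ survives (so $j-2$ of them lie in $S$, and there are $j-1$ choices of survivor), while the lower-valued agents $\{j+1,\ldots,K\}$ are unconstrained. Splitting $|S| = (j-2)+t$ with $t = |S \cap \{j+1,\ldots,K\}|$, the coefficient of $v_j$ in the inner sum reduces to $(-1)^{j-2}(j-1)\sum_{t=0}^{K-j}(-1)^t C(K-j,t)\big/C(K'+j-2+t,\,K')$.

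Evaluating this alternating binomial sum is the crux, and I expect it to be the main obstacle. The clean route is the Beta-function identity $\frac{K'!\,(a+t)!}{(K'+a+t)!} = K'\int_0^1 x^{a+t}(1-x)^{K'-1}\,dx$ with $a=j-2$: pulling the integral out and applying the binomial theorem turns the sum into $K'\int_0^1 x^{j-2}(1-x)^{K'-1}(1-x)^{K-j}\,dx = K'\,B(j-1,\,n-j) = K'\frac{(j-2)!\,(n-j-1)!}{(n-2)!}$, where $B(p,q)=\int_0^1 x^{p-1}(1-x)^{q-1}\,dx$. Feeding this back, the factor $\tfrac{1}{K'}$ cancels and the coefficient of $v_j$ in $p_i(\mathbf{v})$ becomes $-(\pi_1-\pi_2)(-1)^j (j-1)!\,\frac{(n-j-1)!}{(n-2)!} = -(\pi_1-\pi_2)(-1)^j (j-1)!\big/\psi(n-j,n-2)$. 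This is precisely the stated formula once $\frac{1}{\psi(n-j,n-2)}$ is rewritten as $\frac{\psi(n-K,n-j-1)}{\psi(n-K,n-2)}$; the $j=K$ term is displayed separately only because $\psi(n-K,n-K-1)$ would be an empty product, for which this ratio is $\frac{1}{\psi(n-K,n-2)}$.

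Finally, the cases $K \in \{0,1\}$ follow immediately: with at most one positively-valued agent, no profile $(0_T,v_{-T})$ has two surviving positive valuations, so $w(S)=0$ throughout and $p_i(\mathbf{v})=0$; for $K=0$ one may alternatively invoke symmetry and budget-balance directly. Apart from the Beta-integral evaluation, the argument is bookkeeping of which index is the runner-up, so I would keep that middle step front and center.
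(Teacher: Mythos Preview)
Your argument is correct, and it proceeds along a genuinely different line from the paper's proof. The paper establishes the formula by induction on $K$: it handles $K\in\{0,1,2\}$ by hand, and for larger $K$ it writes budget-balance as $(n-K)p_i(\mathbf{v}) + \sum_{j=1}^K p_j(0,v_{-j}) + R^f(\mathbf{v}) = 0$, applies the inductive hypothesis to each marginal profile $(0,v_{-j})$, and then tracks how often each $v_k$ appears with a given rank across those marginals. The coefficient bookkeeping is elementary but somewhat long.

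You instead start from the closed-form payment of Theorem~\ref{theo:char} and evaluate the resulting alternating sum in one shot via the Beta-integral representation $1/C(K'+a+t,K') = K'\int_0^1 x^{a+t}(1-x)^{K'-1}\,dx$, which collapses the free index $t$ by the binomial theorem. This is shorter and avoids induction, at the cost of importing an analytic identity; it also makes transparent why the answer depends only on $n$ and $j$ (not on $K$) once the factor $\psi(n-K,n-2)$ is pulled out, since $K'+K=n$ merges the two exponents in the integrand. The paper's route, by contrast, stays purely combinatorial and reuses the same marginal-profile machinery that drives Lemma~\ref{lem:ir2}, so the two lemmas share a common template there. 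Both approaches ultimately rely on the same observation that $K\le\ell$ kills the $v_{(\ell+1)}$ term in Lemma~\ref{lem:ir0}.
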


\begin{proof}
Pick a satisfactory mechanism $(f,\mathbf{p})$, where $f$ is a two-step
ranking allocation rule defined by $(\pi_1,\ell)$. Suppose $\mathbf{v}$ is
such that $|N^0_{\mathbf{v}}|=n-K$, $K \le \ell$. If $K=0$, then by symmetry
and budget-balance, we get $p_i(\mathbf{v})=0$ for all $i \in N$. Else,
suppose $v_1 > \ldots > v_K > 0$. If $K=1$, then, by budget-balance and
symmetry we get $p_1(\mathbf{v}) + (n-1)p_i(\mathbf{v})=0$ for any $i \in
N^0_{\mathbf{v}}$. But $p_1(\mathbf{v})=p_1(0,v_{-1})+v_1 \pi_1 - v_1 \pi_1
= p_1(0,v_{-1})=0$, where we used revenue equivalence formula for the first
equality and $p_1(0,v_{-1})=0$ for the last equality. Hence, we get $p_1(%
\mathbf{v})=0$, and hence, $p_i(\mathbf{v})=0$ for all $i \ne 1$. Now,
suppose $K=2$. Then, budget-balance requires 
\begin{equation*}
p_1(\mathbf{v})+p_2(\mathbf{v})+\sum_{i \notin \{1,2\}}p_i(\mathbf{v})=0.
\end{equation*}
But using revenue equivalence and the fact that $p_1(0,v_{-1})=0$, we get
that 
\begin{equation*}
p_1(\mathbf{v})=p_1(0,v_{-1})+v_1\pi_1 - (v_1-v_2)\pi_1 - v_2
\pi_2=v_2(\pi_1-\pi_2).
\end{equation*}
Similarly, we get $p_2(\mathbf{v})=p_2(0,v_{-2}) + v_2 \pi_2 - v_2 \pi_2 = 0$%
. Hence, by choosing some $i \notin \{1,2\}$, we can simplify the
budget-balance equation as $v_2 (\pi_1-\pi_2) + (n-2)p_i(\mathbf{v})=0$.
This implies that 
\begin{equation*}
p_i(\mathbf{v}) = - \frac{(\pi_1-\pi_2)}{(n-2)}v_2,
\end{equation*}
which is the required expression.

Next, suppose $K > 2$ and use induction. Suppose the claim is true for all $%
k < K$. Then, by revenue equivalence and symmetry we get 
\begin{equation*}
\sum_{j \in N}p_j(\mathbf{v}) = \sum_{j \in N}p_j(0,v_{-j}) + R^f(\mathbf{v}%
) = (n-K)p_i(\mathbf{v})+ \sum_{j=1}^K p_j(0,v_{-j}) + R^f(\mathbf{v}),
\end{equation*}
where $i$ is some agent in $N^0_{\mathbf{v}}$. By budget-balance, the above
summation is zero, and $R^f(\mathbf{v})=(\pi_1-\pi_2)v_2$ since $K \le \ell$
(by Lemma \ref{lem:ir0}). Using this, we get 
\begin{align}  \label{eq:ire1}
0 &= (n-K)p_i(\mathbf{v})+ \sum_{j=1}^K p_j(0,v_{-j}) + (\pi_1-\pi_2)v_2.
\end{align}
Now, for every $j \in \{1,\ldots,K\}$, the profile $(0,v_{-j})$ has one more
zero-valued agent than the profile $\mathbf{v}$, and hence, we can apply our
induction hypothesis. We refer to $(0,v_{-j})$ for each $j \in
\{1,\ldots,K\} $ as a \textbf{marginal} profile having an additional
zero-valuation agent than $\mathbf{v}$, and denote this as $\mathbf{v}^j$
with the valuation of the $k$-th ranked agent in this valuation profile
denoted as $v^j_{(k)}$. Note that a marginal profile contains $(K-1)$
non-zero valuation agents. Thus, using our induction hypothesis, Equation %
\ref{eq:ire1} can be rewritten as 
\begin{align*}
& (n-K)p_i(\mathbf{v}) \\
&= \sum_{j=1}^{K} \frac{(\pi_1-\pi_2)}{\psi(n-K+1,n-2)} \Big[%
\sum_{k=2}^{K-2} (-1)^k (k-1)! \psi(n-K+1,n-k-1)v^j_{(k)} + (-1)^{K-1}
(K-2)!v^j_{(K-1)} \Big] \\
&- (\pi_1-\pi_2)v_2 \\
&= \frac{(\pi_1-\pi_2)}{\psi(n-K+1,n-2)} \sum_{j=1}^{K}\Big[ %
\sum_{k=2}^{K-2} (-1)^k (k-1)! \psi(n-K+1,n-k-1)v^j_{(k)} + (-1)^{K-1}
(K-2)!v^j_{(K-1)} \Big] \\
&- (\pi_1-\pi_2)v_2
\end{align*}

We write this equivalently as 
\begin{align}  \label{eq:iir}
\frac{\psi(n-K,n-2)}{\pi_1-\pi_2}p_i(\mathbf{v}) &= \sum_{j=1}^{K}\Big[ %
\sum_{k=2}^{K-2} (-1)^k (k-1)! \psi(n-K+1,n-k-1)v^j_{(k)} + (-1)^{K-1}
(K-2)!v^j_{(K-1)} \Big]  \notag \\
&- \psi(n-K+1,n-2)v_2.
\end{align}
Now, we remind that $\mathbf{v}$ is a valuation profile of the form $v_1 >
v_2 > \ldots > v_K > 0$ and $v_j=0$ for all $j > K$. We now simplify the RHS
of Equation \ref{eq:iir} in terms of $v_1,\ldots,v_K$. To do so, we
explicitly compute the coefficients of $v_k$ for each $k \in \{1,\ldots,K\}$
in the RHS of Equation \ref{eq:iir}. \newline

\noindent \textsc{Case 1.} Note that $v_1$ does not appear in the summation,
and hence, its coefficient is always zero. Next, $v_2 = v^j_{(2)}$ for all $%
j \ne \{1,2\}$. Hence, it has a rank $2$ in $(K-2)$ marginal profiles, and
in each such case, its coefficient in the first summation is 
\begin{equation*}
(-1)^2 (1)! \psi(n-K+1,n-3).
\end{equation*}
Adding this with $-\psi(n-K+1,n-2)v_2$, we get the coefficient of $v_2$ as 
\begin{align*}
(K-2) \psi(n-K+1,n-3) - \psi(n-K+1,n-2) = -\psi(n-K,n-3) = -(-1)^2 (1!)
\psi(n-K,n-3).
\end{align*}

\noindent \textsc{Case 2.} Now, consider $K > k > 2$. Note that $v_k =
v^j_{(k^{\prime })}$ where $k^{\prime }\in \{k,k-1\}$. In particular, $%
k^{\prime }=k$ if $j \in \{k+1,\ldots,K\}$ and $k^{\prime }=k-1$ if $j \in
\{1,\ldots,k-1\}$. Hence, it has rank $k$ in $(K-k)$ marginal profiles and
rank $(k-1)$ in $(k-1)$ marginal profiles. When it has rank $k$ in a
marginal profiles, its coefficient in the RHS of Equation \ref{eq:iir} is 
\begin{equation*}
(-1)^k (k-1)! \psi(n-K+1,n-k-1),
\end{equation*}
and when it has rank $(k-1)$, its coefficient is 
\begin{equation*}
(-1)^{k-1} (k-2)! \psi(n-K+1,n-k).
\end{equation*}
Hence, collecting the coefficient of $v_k$, we get 
\begin{align*}
& (-1)^k(K-k) (k-1)! \psi(n-K+1,n-k-1) + (-1)^{k-1} (k-1)(k-2)!
\psi(n-K+1,n-k) \\
&= (-1)^k (k-1)! \psi(n-K+1,n-k-1) \Big( (K-k) - (n-k) \Big) \\
&= - (-1)^k (k-1)! \psi(n-K,n-k-1).
\end{align*}

\noindent \textsc{Case 3.} Finally, $v_K = v^j{(k^{\prime })}$ where $%
k^{\prime }=K-1$ when $j \in \{1,\ldots,K-1\}$. Hence, $v_K$ has rank $(K-1)$
in $(K-1)$ marginal profiles. Whenever it has rank $(K-1)$ its coefficient
in the RHS of Equation \ref{eq:iir} is $(-1)^{K-1}(K-2)!$. Hence, the
coefficient of $v_K$ in the RHS of Equation \ref{eq:iir} is 
\begin{equation*}
-(-1)^K (K-1)(K-2)!=-(-1)^K (K-1)!
\end{equation*}

\noindent Aggregating the findings from all the three cases, we can rewrite
Equation \ref{eq:iir} as 
\begin{align}  \label{eq:ire22}
\frac{\psi(n-K,n-2)}{\pi_1-\pi_2}p_i(\mathbf{v}) &= \Big[%
\sum_{k=2}^{K-1}(-1)^k (k-1)! \psi(n-K,n-k-1)v_k + (-1)^K (K-1)! v_K \Big].
\end{align}
This simplifies to the desire expression: 
\begin{equation*}
p_i(\mathbf{v}) = -\frac{(\pi_1-\pi_2)}{\psi(n-K,n-2)}\Big[%
\sum_{k=2}^{K-1}(-1)^k (k-1)! \psi(n-K,n-k-1)v_k + (-1)^K (K-1)! v_K \Big]
\end{equation*}
\end{proof}

\begin{lemma}
\label{lem:ir2} Suppose $(f,\mathbf{p})$ is a satisfactory mechanism, where $%
f$ is a two-step ranking allocation rule defined by $(\pi_1,\ell)$. Then,
for every $\mathbf{v}$ with $|N^0_{\mathbf{v}}|=n-K$, $K \ge \ell+1$, and $%
v_1 > \ldots > v_K > 0$, we have for every $i \in N^0_{\mathbf{v}}$, 
\begin{align*}
p_i(\mathbf{v}) &= -\frac{(\pi_1-\pi_2)}{\psi(n-\ell,n-2)} \Big[%
\sum_{k=2}^{\ell-1} (-1)^k (k-1)! \psi(n-\ell,n-k-1)v_k + (-1)^{\ell} (\ell
- 1)!v_{\ell} \Big].
\end{align*}
\end{lemma}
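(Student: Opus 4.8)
The plan is to argue by induction on $K$, exploiting the fact that once $K\ge\ell$ the payment to a zero-valued agent \emph{stabilizes}: it depends only on $v_2,\ldots,v_\ell$ and no longer on $K$. Let $P$ denote the right-hand side of the statement, viewed as a function of $(v_2,\ldots,v_\ell)$. As in the proof of Lemma~\ref{lem:ir1}, I would begin from the identity that results from combining revenue equivalence (Lemma~\ref{lem:my}) with budget-balance and symmetry. Writing $i$ for a representative zero-valued agent, this reads
\begin{equation*}
0 = (n-K)\,p_i(\mathbf{v}) + \sum_{j=1}^{K} p_j(0,v_{-j}) + R^f(\mathbf{v}),
\end{equation*}
where each marginal profile $(0,v_{-j})$ has $K-1$ positively-valued agents. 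By Lemma~\ref{lem:ir0}, here $R^f(\mathbf{v}) = (\pi_1-\pi_2)v_2 + \ell\pi_2 v_{\ell+1}$, because $K\ge\ell+1$ forces $v_{\ell+1}>0$.

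The structural key is that the marginal profiles split cleanly at rank $\ell$. For $j>\ell$ the ranks $2,\ldots,\ell$ are untouched, so $p_j(0,v_{-j})=P$; there are exactly $K-\ell$ such profiles. For $j\le\ell$ the ranks above $j$ shift down by one, so in particular the rank-$\ell$ valuation of the marginal profile becomes $v_{\ell+1}$. In the base case $K=\ell+1$ the marginal profiles carry exactly $\ell$ positive valuations and I would invoke Lemma~\ref{lem:ir1}; in the inductive step $K\ge\ell+2$ I would invoke the induction hypothesis. In both cases each marginal payment is the \emph{same} closed form $P$ evaluated at that profile's ranked valuations — this shared formula is precisely the stabilization phenomenon. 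Substituting, the $K-\ell$ identical terms combine with the $(n-K)P$ contribution and every trace of $K$ cancels; since $n-K\ge1$ the equation pins down $p_i(\mathbf{v})$ uniquely, so it suffices to check that $P$ itself satisfies it, i.e. to establish the single $K$-free identity
\begin{equation*}
(n-\ell)\,P + \sum_{j=1}^{\ell} p_j(0,v_{-j}) + (\pi_1-\pi_2)v_2 + \ell\pi_2 v_{\ell+1} = 0.
\end{equation*}

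I would verify this by clearing the common factor $-\frac{\pi_1-\pi_2}{\psi(n-\ell,n-2)}$ and computing the coefficient of each $v_m$ separately, mirroring the three-case argument of Lemma~\ref{lem:ir1}. The coefficients of $v_2$, of $v_m$ for $3\le m\le\ell-1$, and of $v_\ell$ all collapse to zero by routine combinatorics: the rank-$m$ contributions across the $(n-\ell)P$ term and the shifted marginal profiles aggregate with multiplicity $(n-\ell)+(\ell-m)=n-m$, which telescopes against the adjacent rank-$(m-1)$ term via $\psi(n-\ell,n-m)=(n-m)\,\psi(n-\ell,n-m-1)$ (the top term $v_\ell$ instead vanishing through the sign identity $(-1)^\ell+(-1)^{\ell-1}=0$). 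The one genuinely new ingredient — and the main obstacle — is the coefficient of $v_{\ell+1}$. Since $v^j_{(\ell)}=v_{\ell+1}$ for \emph{every} $j\le\ell$, these profiles contribute $(-1)^\ell\,\ell!\,v_{\ell+1}$, which must cancel against the $\ell\pi_2 v_{\ell+1}$ term inherited from $R^f$. This cancellation is not algebraic: it holds exactly because $f$ is satisfactorily implementable, which by Proposition~\ref{prop:2rank} forces $\ell$ to be even and $\frac{\pi_2}{\pi_1-\pi_2}=\frac{1}{C(n-2,\ell-1)}=\frac{(\ell-1)!}{\psi(n-\ell,n-2)}$. With $\ell$ even, so that $(-1)^\ell=1$, this is precisely the relation that annihilates the $v_{\ell+1}$ coefficient, and the identity — hence the lemma — closes.
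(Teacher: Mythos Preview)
Your proposal is correct and follows essentially the same inductive strategy as the paper: both arguments start from the budget-balance/revenue-equivalence identity $0=(n-K)p_i(\mathbf v)+\sum_{j=1}^K p_j(0,v_{-j})+R^f(\mathbf v)$, invoke Lemma~\ref{lem:ir1} for the base case and the induction hypothesis otherwise, and then eliminate the $v_{\ell+1}$ term via the relation $\pi_1-\pi_2=C(n-2,\ell-1)\pi_2$ from Proposition~\ref{prop:2rank} together with $\ell$ even. The only organizational difference is that you separate off the $K-\ell$ marginal profiles with $j>\ell$ (each already equal to $P$) and cancel $K$ \emph{before} the coefficient computation, whereas the paper keeps the full sum $\sum_{j=1}^K$, computes each coefficient of $v_k$ with $K$ still present, and extracts the common factor $(n-K)$ at the end; this is a presentational variation, not a different proof.
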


\begin{proof}
We follow a similar line of proof as Lemma \ref{lem:ir1}. Consider a
valuation profile $\mathbf{v}$ with $|N^0_{\mathbf{v}}|=n-K$, $K \ge \ell+1$%
, $v_1 > \ldots > v_K > 0$ and $v_j =0$ for all $j > K$.

We now modify Equation \ref{eq:ire1} by using $R^f(\mathbf{v}%
)=(\pi_1-\pi_2)v_2 + \ell \pi_2 v_{\ell+1}$ (by Lemma \ref{lem:ir0}) as
follows: 
\begin{align}  \label{eq:ire11}
0 &= (n-K)p_i(\mathbf{v})+ \sum_{j=1}^K p_j(0,v_{-j}) + (\pi_1-\pi_2)v_2 +
\ell \pi_2 v_{\ell+1}.
\end{align}

Now, for every $j \in \{1,\ldots,K\}$, the profile $\mathbf{v}^j$ has one
more zero-valued agent than the profile $\mathbf{v}$, and hence, we can
apply our induction argument - the base case of $K=\ell$ is solved in Lemma %
\ref{lem:ir1} where we computed $p_i(\mathbf{v})$ with $K \le \ell$ agents
having non-zero valuations. Using induction hypothesis, we simplify Equation %
\ref{eq:ire11} as follows: 
\begin{align*}
-(n-K)p_i(\mathbf{v}) &= \sum_{j=1}^K -\frac{(\pi_1-\pi_2)}{\psi(n-\ell,n-2)}
\Big[\sum_{k=2}^{\ell-1} (-1)^k (k-1)! \psi(n-\ell,n-k-1)v^j_{(k)} +
(-1)^{\ell} (\ell - 1)!v^j_{(\ell)} \Big] \\
&+ (\pi_1-\pi_2)v_2 + \ell \pi_2 v_{\ell+1}.
\end{align*}
This can be rewritten as follows: 
\begin{align}  \label{eq:iir20}
\frac{(n-K)\psi(n-\ell,n-2)}{\pi_1-\pi_2}p_i(\mathbf{v}) &= \sum_{j=1}^K %
\Big[\sum_{k=2}^{\ell-1} (-1)^k (k-1)! \psi(n-\ell,n-k-1)v^j_{(k)} +
(-1)^{\ell} (\ell - 1)!v^j_{(\ell)} \Big]  \notag \\
&- \psi(n-\ell,n-2)v_2 - \frac{\ell \pi_2 \psi(n-\ell,n-2)}{\pi_1-\pi_2}
v_{\ell+1}.
\end{align}
By Proposition \ref{prop:2rank}, 
\begin{align}
\pi_1-\pi_2 &= 1- \frac{(\ell-1)}{C(n-2,\ell-1)+\ell} - \frac{1}{%
C(n-2,\ell-1)+\ell}  \notag \\
&= \frac{C(n-2,\ell-1)}{C(n-2,\ell-1)+\ell}  \notag \\
&= C(n-2,\ell-1)\pi_2  \notag \\
&= \frac{\psi(n-\ell,n-2)}{(\ell-1)!} \pi_2.  \label{eq:reduce}
\end{align}
Hence, Equation \ref{eq:iir20} can be rewritten as 
\begin{align}  \label{eq:iir2}
\frac{(n-K)\psi(n-\ell,n-2)}{\pi_1-\pi_2}p_i(\mathbf{v}) &= \sum_{j=1}^K %
\Big[\sum_{k=2}^{\ell-1} (-1)^k (k-1)! \psi(n-\ell,n-k-1)v^j_{(k)} +
(-1)^{\ell} (\ell - 1)!v^j_{(\ell)} \Big]  \notag \\
&- \psi(n-\ell,n-2)v_2 - \ell ! v_{\ell+1}
\end{align}

Like in Lemma \ref{lem:ir1}, we will rewrite the RHS of Equation \ref%
{eq:iir2} in terms of $v_1,\ldots,v_K$. For this, observe that for any $k$, $%
v_k$ will appear on the RHS of Equation \ref{eq:iir2} if there is some $j
\in \{1,\ldots,K\}$ and some $k^{\prime }\in \{2,\ldots,\ell\}$ such that $%
v^j_{(k^{\prime })}=v_k$. Hence, $v_1$ and $v_{\ell+2},\ldots,v_n$ do not
appear on the RHS of Equation \ref{eq:iir2}. We compute the coefficients of $%
v_k$ for $k \in \{2,\ldots,\ell+1\}$. We consider three cases. \newline

\noindent \textsc{Case 1.} For $v_2$, we note that $v_2 = v^j_{(2)}$ for all 
$j \ne \{1,2\}$. Hence, it has a rank $2$ in $(K-2)$ marginal profiles, and
in each such case, its coefficient in the first summation is 
\begin{equation*}
(-1)^2 (1)! \psi(n-\ell,n-3).
\end{equation*}
Adding this with $-\psi(n-\ell,n-2)$, we get the coefficient of $v_2$ in the
RHS of Equation \ref{eq:iir2} as 
\begin{align*}
&(K-2)\psi(n-\ell,n-3) - \psi(n-\ell,n-2) = -\psi(n-\ell,n-3)(n-K) \\
&= - (-1)^2 (1!)\psi(n-\ell,n-3)(n-K).
\end{align*}

\noindent \textsc{Case 2.} Now, consider $2 < k < \ell$. For $v_{k}$, note
that $v_{k} = v^j_{(k^{\prime })}$ where $k^{\prime }\in \{k,k-1\}$. In
particular, $k^{\prime }=k$ if $j \in \{k+1,\ldots,K\}$ and $k^{\prime }=k-1$
if $j \in \{1,\ldots,k-1\}$. Hence, it has rank $k$ in $(K-k)$ marginal
profiles and rank $(k-1)$ in $(k-1)$ marginal profiles. In the RHS of
Equation \ref{eq:iir2}, the coefficient of $v_{k}$ is $(-1)^{k-1} (k-2)!
\psi(n-\ell,n-k)$ if its rank is $k-1$ and the coefficient is $(-1)^k (k-1)!
\psi(n-\ell,n-k-1)$ if its rank is $k$. Adding them, we get the coefficient
of $v_{k}$ in the RHS of Equation \ref{eq:iir2} as 
\begin{align*}
& (-1)^k (K-k) (k-1)! \psi(n-\ell,n-k-1) + (-1)^{k-1} (k-1) (k-2)!
\psi(n-\ell,n-k) \\
&= (-1)^k (k-1)! \psi(n-\ell,n-k-1) \Big( (K-k) - (n-k) \Big) \\
&= - (-1)^k (n-K) (k-1)! \psi(n-\ell,n-k-1).
\end{align*}

\noindent \textsc{Case 3.} For $v_{\ell}$, note that $v_{\ell} =
v^j_{(k^{\prime })}$ where $k^{\prime }\in \{\ell,\ell-1\}$. In particular, $%
k^{\prime }=\ell$ if $j \in \{\ell+1,\ldots,K\}$ and $k^{\prime }=\ell-1$ if 
$j \in \{1,\ldots,\ell-1\}$. Hence, it has rank $\ell$ in $(K-\ell)$
marginal profiles and rank $(\ell-1)$ in $(\ell-1)$ marginal profiles. In
the RHS of Equation \ref{eq:iir2}, the coefficient of $v_{\ell}$ is $%
(-1)^{\ell-1}(\ell-2)! \psi(n-\ell,n-\ell)$ if its rank is $\ell-1$ and the
coefficient is $(-1)^{\ell} (\ell-1)!$ if its rank is $\ell$. Adding them,
we get the coefficient of $v_{\ell}$ in the RHS of Equation \ref{eq:iir2} as 
\begin{align*}
& (-1)^{\ell-1}(\ell-1) (\ell-2)! \psi(n-\ell,n-\ell) + (-1)^{\ell} (K-\ell)
(\ell-1)! \\
&= (-1)^{\ell} (\ell-1)! \big( (K-\ell) - (n-\ell) \big) \\
&= - (-1)^{\ell} (n-K) (\ell-1)!
\end{align*}

\noindent \textsc{Case 4.} Now, consider $k=\ell+1$. Note that $%
v_{\ell+1}=v^j_{(k^{\prime })}$ if $k^{\prime }=\ell$ and $j \in
\{1,\ldots,\ell\}$. Hence, it has a rank $\ell$ in $\ell$ marginal
economies, where its coefficient in the summation of the RHS of Equation \ref%
{eq:iir2} is 
\begin{equation*}
(-1)^{\ell} (\ell-1)!=(\ell-1)!,
\end{equation*}
since $\ell$ is even. Hence, the coefficient of $v_{\ell+1}$ in the RHS of
Equation \ref{eq:iir2} is $\ell (\ell-1)! - \ell!=0$. \newline

\noindent Aggregating the findings from all the four cases, we can rewrite
Equation \ref{eq:iir2} as 
\begin{align}  \label{eq:iir2}
\frac{(n-K)\psi(n-\ell,n-2)}{\pi_1-\pi_2}p_i(\mathbf{v}) &=
-\sum_{k=1}^{\ell-1}(-1)^k (n-K)(k-1)!\psi(n-\ell,n-k-1) -
(-1)^{\ell}(n-K)(\ell-1)!
\end{align}
This simplifies to the desired expression: 
\begin{equation*}
p_i(\mathbf{v}) = -\frac{(\pi_1-\pi_2)}{\psi(n-\ell,n-2)}\Big[%
\sum_{k=2}^{\ell-1}(-1)^k (k-1)! \psi(n-\ell,n-k-1)v_k + (-1)^{\ell}
(\ell-1)! v_{\ell}\Big]
\end{equation*}
\end{proof}

With the help of these two lemmas, we can now present the proof of Theorem %
\ref{theo:ir}. \newline

\noindent \textsc{Proof of Theorem \ref{theo:ir}.} \newline

\begin{proof}
Consider a two-step allocation rule $(\pi_1,\ell)$ such that $2\ell \le n+1$%
. Proposition \ref{prop:2rank} characterizes the two-step allocation rules
that are satisfactorily implementable. If $\mathbf{p}$ is such that $(f,%
\mathbf{p})$ is a satisfactory mechanism, then it is ex-post individually rational
(by Fact \ref{fact:ir}) if and only if for every $i \in N$ and for every $%
\mathbf{v}$, we have $p_i(0,v_{-i}) \le 0$.

Fix $i \in N$ and choose a profile $(0,v_{-i})$. By Lemma \ref{lem:contin}, $%
R^f$ is continuous in $\mathbf{v}$. Hence, by the expression of $%
p_i(0,v_{-i})$ in Theorem \ref{theo:char}, $p_i(0,v_{-i})$ is continuous in $%
v_{-i}$. Hence, we only consider $v_{-i}$ such that $(0,v_{-i})$ is $0$%
-generic. Thus, we can apply Lemma \ref{lem:ir1} and \ref{lem:ir2} to
compute $p_i(0,v_{-i})$ and show that it is non-positive.

Suppose $v_1 > v_2 > \ldots > v_K > 0$ and $v_j=0$ for all $j > K$. By
Lemmas \ref{lem:ir1} and \ref{lem:ir2}, $p_i(0,v_{-i}) \le 0$ if and only if
for every $K \le \ell$, the following summation is non-negative: 
\begin{align*}
\Big[\sum_{j=2}^{K-1} (-1)^j (j-1)! \psi(n-K,n-j-1)v_j + (-1)^K (K-1)!v_K %
\Big].
\end{align*}
Expanding this, we get 
\begin{align}  \label{e:tir}
1! \psi(n-K,n-3)v_2 - 2!\psi(n-K,n-4)v_3 + \ldots + (-1)^K
(K-1)!\psi(n-K,n-K-1)v_K,
\end{align}
where we abused notation to define $\psi(n-K,n-K-1) \equiv 1$. Note that if $%
K$ is even the last term of Expression \ref{e:tir} is positive. In that
case, it is sufficient to show that this summation is non-negative till $K-1$
(i.e., the last negative term in the expression). This idea is captured by
considering the summation till $\lfloor K \rfloor_o$ (the largest odd number
less than or equal to $K$). Hence, we need to show the following expression
is non-negative: 
\begin{align*}
& \sum_{j=2}^{\lfloor K \rfloor_o} (-1)^j (j-1)! \psi(n-K,n-j-1)v_j \\
&= \sum_{2 \le j \le \lfloor K \rfloor_o: j~\text{even}}\Big[ %
(j-1)!\psi(n-K,n-j-1)v_j - (j!) \psi(n-K,n-j-2)v_{j+1} \Big] \\
&= \sum_{2 \le j \le \lfloor K \rfloor_o: j~\text{even}}(j-1)!
\psi(n-K,n-j-2) \Big[ (n-j-1)v_j - jv_{j+1} \Big]. \\
&\ge \sum_{2 \le j \le \lfloor K \rfloor_o: j~\text{even}}(j-1)!
\psi(n-K,n-j-2) (n-2j-1)v_j.
\end{align*}
Note that we are consider a 2-step allocation rule $(\pi,\ell)$ such that $%
2\ell \le n+1$. Since $K \le \ell$, for every $2 \le j \le \lfloor K
\rfloor_o: j~\text{even}$, we have $j+1 \le \ell$. Hence, for every $2 \le j
\le \lfloor K \rfloor_o: j~\text{even}$, we have $2(j+1) \le n+1$ or $n-2j-1
\ge 0$. This implies that the above expression is non-negative, which
completes the proof.
\end{proof}

\subsection*{Proof of Proposition \protect\ref{prop:2steppay}}

\begin{proof}
Consider a valuation profile $\mathbf{v}$ with $v_1 > v_2 > \ldots > v_n > 0$%
. By Proposition \ref{prop:2rank}, 
\begin{align}
\pi_1-\pi_2 &= 1- \frac{(\ell-1)}{C(n-2,\ell-1)+\ell} - \frac{1}{%
C(n-2,\ell-1)+\ell}  \notag \\
&= \frac{C(n-2,\ell-1)}{C(n-2,\ell-1)+\ell}  \notag \\
&= C(n-2,\ell-1)\pi_2  \notag \\
&= \frac{\psi(n-\ell,n-2)}{(\ell-1)!} \pi_2.  \label{eq:reduce}
\end{align}
Then, the payments are computed using Lemma \ref{lem:ir2} as follows. 
\begin{align*}
p_1(\mathbf{v}) &= p_1(0,v_{-1}) + v_1 \pi_1 -
\int_0^{v_1}f_1(x_1,v_{-1})dx_1 \\
&= p_1(0,v_{-1}) + v_1 \pi_1 - (v_1-v_2)\pi_1 - (v_2 - v_{\ell+1})\pi_2 \\
&= p_1(0,v_{-1}) + v_2 (\pi_1 - \pi_2) + v_{\ell+1} \pi_2 \\
&= -\frac{\pi_2}{(\ell-1)!} \Big[ \sum_{k=2}^{\ell-1}(-1)^k (k-1)!
\psi(n-\ell,n-k-1)v_{k+1} \Big] - v_{\ell+1}\pi_2 + v_2 (\pi_1 - \pi_2) +
v_{\ell+1} \pi_2 \\
& \text{(The above simplification uses Lemma \ref{lem:ir2} along with
Equation \ref{eq:reduce} and the fact that $\ell$ is even.)} \\
&= -\frac{\pi_2}{(\ell-1)!} \Big[ \sum_{k=2}^{\ell-1}(-1)^k (k-1)!
\psi(n-\ell,n-k-1)v_{k+1} \Big] + \frac{\psi(n-\ell,n-2)}{(\ell-1)!} v_2
\pi_2 \\
&= -\frac{\pi_2}{(\ell-1)!} \Big[ \sum_{k=1}^{\ell-1}(-1)^k (k-1)!
\psi(n-\ell,n-k-1)v_{k+1} \Big] \\
\end{align*}

For every $i \in \{2,\ldots,\ell\}$,

\begin{align*}
p_i(\mathbf{v}) &= p_i(0,v_{-i}) + v_i \pi_2 -
\int_0^{v_i}f_i(x_i,v_{-i})dx_i \\
&= p_i(0,v_{-i}) + v_i \pi_2 - (v_i - v_{\ell+1})\pi_2 \\
&= p_i(0,v_{-i}) + v_{\ell+1}\pi_2 \\
&= -\frac{\pi_2}{(\ell-1)!} \Big[ \sum_{k=2}^{i-1}(-1)^k (k-1)!
\psi(n-\ell,n-k-1)v_{k} + \sum_{k=i}^{\ell-1}(-1)^k (k-1)!
\psi(n-\ell,n-k-1) v_{k+1} \Big] \\
&- v_{\ell+1}\pi_2 + v_{\ell+1}\pi_2 \\
& \text{(The above simplification uses Lemma \ref{lem:ir2} along with
Equation \ref{eq:reduce} and the fact that $\ell$ is even.)} \\
&= - \frac{\pi_2}{(\ell-1)!} \Big[ \sum_{k=2}^{i-1}(-1)^k (k-1)!
\psi(n-\ell,n-k-1)v_{k} + \sum_{k=i}^{\ell-1}(-1)^k (k-1)!
\psi(n-\ell,n-k-1) v_{k+1} \Big]
\end{align*}

For every $i > \ell$, we directly use Lemma \ref{lem:ir2} along with
Equation (\ref{eq:reduce}) to get 
\begin{align*}
p_i(\mathbf{v}) &= p_i(0,v_{-i}) = -\frac{\pi_2}{(\ell-1)!} \Big[ %
\sum_{k=2}^{\ell-1}(-1)^k (k-1)! \psi(n-\ell,n-k-1)v_{k} + (-1)^{\ell}
(\ell-1)!v_{\ell}\Big]
\end{align*}
\end{proof}

\subsection*{Proof of Theorem \protect\ref{theo:rgl} and Proposition \protect
\ref{prop:rpar}}

In this section, we give proofs of Theorem \ref{theo:rgl} and Proposition %
\ref{prop:rpar}. We first show that every r-Pareto optimal allocation rule
satisfies the fact that probabilities add up to $1$, i.e., the good is never
wasted.

\begin{lemma}
\label{lem:add1} If $f$ is an r-Pareto optimal or an r-optimal ranking
allocation rule with probabilities $(\pi_1,\ldots,\pi_n)$, then 
\begin{equation*}
\sum_{i \in N}\pi_i=1.
\end{equation*}
\end{lemma}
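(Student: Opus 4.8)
The plan is to argue by contradiction: assuming $\sum_{i\in N}\pi_i = s < 1$, I will exhibit another satisfactorily implementable ranking allocation rule that either has a strictly larger first coordinate (contradicting r-optimality) or Pareto-dominates $f$ (contradicting r-Pareto optimality). The key observation driving everything is that the implementability condition of Proposition \ref{prop:rank}, namely $\sum_{k=1}^n (-1)^k C(n-1,k-1)\pi_k = 0$, is a \emph{homogeneous} linear equation in $(\pi_1,\ldots,\pi_n)$, so it is preserved under positive scaling.

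First I would show $s > 0$. For an r-optimal rule this is immediate: the equal-sharing rule is satisfactorily implementable with top probability $1/n$, so r-optimality forces $\pi_1 \ge 1/n$, whence $s \ge \pi_1 > 0$. For an r-Pareto optimal rule, if instead $\pi \equiv 0$ then equal sharing generates weakly more welfare at every profile and strictly more whenever some $v_i > 0$, so $f$ would be dominated --- impossible. Hence $s > 0$ in both cases, and monotonicity then gives $\pi_1 > 0$.

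Now suppose $0 < s < 1$ and set $\pi' := \tfrac{1}{s}\pi$, i.e. $\pi'_i = \pi_i/s$. Scaling by $1/s > 0$ preserves the order $\pi'_1 \ge \cdots \ge \pi'_n \ge 0$ and the homogeneous constraint, so by Proposition \ref{prop:rank} the rule $\pi'$ is again satisfactorily implementable; moreover $\sum_i \pi'_i = 1$, which in particular yields $\pi'_i \in [0,1]$. Since $0 < s < 1$ and $\pi_1 > 0$, we get $\pi'_1 = \pi_1/s > \pi_1$, contradicting the defining inequality of r-optimality. For the r-Pareto case, at any profile $\mathbf{v}$ the welfare of a ranking rule is the $\pi$-weighted sum of (tie-averaged) valuations, so the welfare under $\pi'$ equals $\tfrac{1}{s}$ times the welfare under $\pi$; as $\tfrac{1}{s} > 1$ and welfare is nonnegative, $\pi'$ yields weakly more welfare everywhere and strictly more at any profile with $v_1 > 0$ (where welfare is at least $\pi_1 v_1 > 0$). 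Thus $\pi'$ dominates $f$, contradicting r-Pareto optimality.

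I do not anticipate a genuinely hard step; the only points requiring care are (i) ruling out the degenerate rule $\pi \equiv 0$, which is exactly what the preliminary $s > 0$ argument handles, and (ii) checking that the welfare comparison is valid at \emph{all} profiles, including those with ties, where the welfare equals $\sum_j v_{(j)}(\pi_{L_{j-1}+1}+\cdots+\pi_{L_j})$ and hence still scales by the factor $1/s$. Everything else reduces to the remark that a single homogeneous equality constraint, together with the order and nonnegativity constraints, is invariant under positive rescaling.
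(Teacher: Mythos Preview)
Your proof is correct, but it uses a different construction from the paper's. Both argue by contradiction and exhibit a satisfactorily implementable ranking rule $\pi'$ that dominates $\pi$. The paper sets $\pi'_i=\pi_i+\delta/n$ with $\delta=1-s$, exploiting the identity $\sum_{k=1}^n(-1)^kC(n-1,k-1)=0$ so that the implementability constraint of Proposition~\ref{prop:rank} is invariant under adding a constant to every coordinate; the welfare difference is then $(\delta/n)\sum_i v_i\ge 0$, strict whenever some valuation is positive. You instead rescale, $\pi'=\pi/s$, using that the constraint is homogeneous; the welfare then simply multiplies by $1/s>1$. The paper's translation avoids the need to first rule out $\pi\equiv 0$ (adding $\delta/n>0$ works even then), which is why your argument has the extra preliminary step; conversely, your welfare comparison is a one-line scaling observation. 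Both routes are equally valid and of comparable length.
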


\begin{proof}
Suppose $f$ is a ranking allocation rule with probabilities $%
(\pi_1,\ldots,\pi_n)$. Assume for contradiction $f$ is r-optimal but $%
\sum_{i=1}^n \pi_i < 1$. Let $\delta = 1- \sum_{i \in N}\pi_i > 0$. We
construct another ranking allocation rule $f^{\prime }$ with probabilities $%
\pi^{\prime }_i \equiv \pi_i + \frac{\delta}{n}$ for all $i \in N$. Note
that $\sum_{i \in N}\pi^{\prime }_i = 1$ and 
\begin{align*}
\sum_{k=1}^n (-1)^k C(n-1,k-1)\pi^{\prime }_k &= \sum_{k=1}^n (-1)^k
C(n-1,k-1)\pi_k + \frac{\delta}{n}\sum_{k=1}^n (-1)^k C(n-1,k-1) \\
&= \sum_{k=1}^n (-1)^k C(n-1,k-1)\pi_k \\
&= 0,
\end{align*}
where the first equality is from the definition of $(\pi^{\prime
}_1,\ldots,\pi^{\prime }_n)$, the second equality follows from the fact that 
$\sum_{k=1}^n (-1)^k C(n-1,k-1)=0$, and the third equality follows from
Proposition \ref{prop:rank} and the fact that $(\pi_1,\ldots,\pi_n)$ is a
satisfactorily implementable ranking allocation rule. Hence, by Proposition %
\ref{prop:rank}, $f^{\prime }$ is satisfactorily implementable. But this
contradicts the r-optimality of $f$.

Now, suppose $f$ is r-Pareto optimal. The above argument also implies that
at every valuation profile $\mathbf{v}$, we have 
\begin{equation*}
\sum_{i \in N}v_if^{\prime }_i(\mathbf{v}) \ge \sum_{i \in N}v_if_i(\mathbf{v%
}),
\end{equation*}
with strict inequality holding at almost everywhere. This contradicts the
fact that $f$ is r-Pareto optimal.
\end{proof}

This leads to a simplification of r-Pareto optimality in terms of
first-order stochastic-dominance.

\begin{defn}
A ranking allocation rule $f$ with probabilities $(\pi_1,\ldots,\pi_n)$ 
\textbf{first-order stochastic-dominates (FOSD)} a ranking allocation rule $%
f^{\prime }$ with probabilities $(\pi^{\prime }_1,\ldots,\pi^{\prime }_n)$
if for every $j \in N$, we have 
\begin{equation*}
\sum_{i \le j}\pi_i \ge \sum_{i \le j}\pi^{\prime }_i,
\end{equation*}
with strict inequality holding at least once. In this case, we write $f
\succ_{FOSD} f^{\prime }$.
\end{defn}

\begin{lemma}
\label{lem:fosd} Suppose $f$ is a ranking allocation rule with probabilities 
$(\pi_1,\ldots,\pi_n)$ such that it is satisfactorily implementable. Then, $%
f $ is r-Pareto optimal if and only if

\begin{enumerate}
\item $\sum_{i \in N}\pi_i = 1$ and

\item if there exists no ranking allocation rule $f^{\prime }$ with
probabilities $(\pi^{\prime }_1,\ldots,\pi^{\prime }_n)$ such that

\begin{itemize}
\item $\sum_{i \in N}\pi^{\prime }_i=1$,

\item $f^{\prime }$ is satisfactorily implementable, and

\item $f^{\prime }\succ_{FOSD} f$.
\end{itemize}
\end{enumerate}
\end{lemma}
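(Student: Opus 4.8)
The plan is to reduce the welfare-domination condition that defines r-Pareto optimality to the purely combinatorial FOSD condition on partial sums of the $\pi$'s, using summation by parts. First I would record the welfare formula for ranking rules: at any $0$-generic profile with $v_1 > v_2 > \cdots > v_n > 0$ the rank-$j$ agent receives probability $\pi_j$, so the total welfare equals $\sum_{j=1}^n \pi_j v_j$. Since total welfare is continuous in $\mathbf{v}$ (Step 2 in the proof of Lemma \ref{lem:contin}), it suffices to compare two ranking rules $f$ (probabilities $\pi$) and $f'$ (probabilities $\pi'$) on $0$-generic profiles. Writing $\Delta_j := \pi'_j - \pi_j$ and $S_k := \sum_{j\le k}\Delta_j$, the welfare gain of $f'$ over $f$ at such a profile is $\sum_{j=1}^n \Delta_j v_j$, and summation by parts (using $v_j = \sum_{k\ge j}(v_k - v_{k+1})$ with $v_{n+1}:=0$) rewrites this as $\sum_{k=1}^{n}(v_k - v_{k+1})\,S_k$.

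The key observation is that, for the rules we compare, Lemma \ref{lem:add1} forces $\sum_i \pi_i = \sum_i \pi'_i = 1$, whence $S_n = 0$ and the $k=n$ term drops out. The gaps $v_k - v_{k+1}$ for $k<n$ can be made to approximate any non-negative vector (in the limit, by continuity, choosing a decreasing sequence with the prescribed gaps). Hence $\sum_k (v_k-v_{k+1})S_k \ge 0$ at every profile if and only if $S_k \ge 0$ for all $k$, and this gain is strictly positive at some profile if and only if $S_k > 0$ for some $k<n$. This is exactly the assertion $f' \succ_{FOSD} f$. Thus, for ranking rules whose probabilities sum to $1$, welfare-domination of $f$ by $f'$ coincides with $f' \succ_{FOSD} f$.

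Granting this equivalence, the lemma follows. Condition (1) is immediate from Lemma \ref{lem:add1}. For the remaining equivalence I would argue both directions by contraposition. If some satisfactorily implementable $f'$ with $\sum_i \pi'_i = 1$ satisfies $f' \succ_{FOSD} f$, the equivalence yields a satisfactorily implementable rule that welfare-dominates $f$, so $f$ is not r-Pareto optimal, i.e.\ (2) fails. Conversely, suppose some satisfactorily implementable $f'$ welfare-dominates $f$. If $\sum_j \pi'_j < 1$ I first lift $f'$ exactly as in Lemma \ref{lem:add1}, replacing $\pi'_i$ by $\pi'_i + \tfrac{1}{n}\bigl(1-\sum_j \pi'_j\bigr)$; this preserves satisfactory implementability and only raises welfare (strictly wherever some $v_i>0$), so the lifted rule still welfare-dominates $f$ and now sums to $1$. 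Applying the equivalence to this rule gives $f' \succ_{FOSD} f$, contradicting (2).

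The main obstacle is the careful transfer of the ``strict at some profile'' clause across the equivalence: I must ensure that a strict partial-sum inequality $S_k>0$ is realized as a strictly positive welfare gain at a single $0$-generic profile, and conversely that a strict gain at any (possibly tied) profile forces some $S_k>0$, which is where continuity of welfare is used to move from an arbitrary profile to a nearby generic one. The lifting step likewise requires verifying that adding a constant to every $\pi'_i$ preserves the characterization of Proposition \ref{prop:rank} and does not destroy weak welfare-domination, both of which follow from the computation already carried out in Lemma \ref{lem:add1}.
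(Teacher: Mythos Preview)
Your proposal is correct and follows essentially the same route as the paper's proof: both directions proceed by contraposition, invoke Lemma~\ref{lem:add1} for condition~(1) and for normalizing a dominating rule to have $\sum_i\pi'_i=1$, and rest on the equivalence between welfare domination at all profiles and first-order stochastic dominance. Your version is more explicit where the paper is terse---you spell out the Abel summation $\sum_j \Delta_j v_j = \sum_k (v_k-v_{k+1})S_k$ and the transfer of the strict-inequality clause, whereas the paper simply asserts that welfare domination on generic profiles ``implies that $f'\succ_{FOSD} f$'' and that one may assume $\sum_i\pi'_i=1$ ``without loss of generality by Lemma~\ref{lem:add1}.''
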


\begin{proof}
Suppose a ranking rule $f$ with probabilities $(\pi_1,\ldots,\pi_n)$ is
r-Pareto optimal. By Lemma \ref{lem:add1}, we know that $\sum_{i \in N}\pi_i
= 1$. Now, assume for contradiction that there exists a ranking allocation
rule $f^{\prime }$ with probabilities $(\pi^{\prime }_1,\ldots,\pi^{\prime
}_n)$ such that $f$' is satisfactorily implementable, $\sum_{i \in
N}\pi^{\prime }_i=1$, and $f^{\prime }\succ_{FOSD} f$. Since $f^{\prime
}\succ_{FOSD} f$, for any profile of generic valuations $\mathbf{v}$ with $%
v_1 > v_2 > \ldots > v_n$, we have 
\begin{equation*}
\sum_{i \in N}v_i \pi^{\prime }_i \ge \sum_{i \in N}v_i \pi_i.
\end{equation*}
The strict inequality must hold for some generic valuation profile by the
definition of first-order stochastic dominance. Now, take any arbitrary
valuation profile $\mathbf{v}$. Note that the total welfare of a ranking
allocation rule is continuous in the valuations of the agents. Hence, it can
be written as a limit point of generic valuation profiles like above. This
implies that for every valuation profile $\mathbf{v}$, we have 
\begin{equation*}
\sum_{i \in N}v_i f^{\prime }_i(\mathbf{v}) \ge \sum_{i \in N}v_i f_i(%
\mathbf{v}),
\end{equation*}
with strict inequality holding for some $\mathbf{v}$. This implies that $f$
is not r-Pareto optimal, a contradiction.

Now, for the other direction suppose $f$ is a ranking allocation with
probabilities $(\pi_1,\ldots,\pi_n)$ satisfying the properties in the claim.
Assume for contradiction that $f$ is not Pareto optimal. Then, there exists
a satisfactorily implementable ranking allocation rule $f^{\prime }$ with
probabilities $(\pi^{\prime }_1,\ldots,\pi^{\prime }_n)$ such that for
valuation profiles $\mathbf{v}$, we have 
\begin{equation*}
\sum_{i \in N}v_i f^{\prime }_i(\mathbf{v}) \ge \sum_{i \in N}v_i f_i(%
\mathbf{v}),
\end{equation*}
with strict inequality holding for some $\mathbf{v}$. By Lemma \ref{lem:add1}%
, we can assume $\sum_{i \in N}\pi^{\prime }_i=1$ without loss of
generality. For generic valuation profiles $\mathbf{v}$ with $v_1 > \ldots >
v_n$, we have $\sum_{i \in N}v_i \pi^{\prime }_i \ge \sum_{i \in N}v_i
\pi_i. $ As in the previous paragraph, continuity of the total welfare of
agents in a ranking allocation rule implies that $f^{\prime }\succ_{FOSD} f$%
. This is a contradiction.
\end{proof}

We now provide a proof of Theorem \ref{theo:rgl}. \newline

\noindent \textsc{Proof of Theorem \ref{theo:rgl}.} \newline

\begin{proof}
We denote the GL allocation rule as $f^G$. Assume for contradiction that $%
f^G $ is not r-Pareto optimal. By Lemma \ref{lem:fosd}, there is another
ranking allocation rule $f$ such that $f$ is satisfactorily implementable
and $f \succ_{FOSD} f^G$. Suppose the allocation probabilities of $f$ are $%
(\pi_1,\ldots,\pi_n)$. We know that the allocation probabilities of $f^G$
are $(1-1/n, 1/n, 0, 0, \ldots,0)$. Since $f \succ_{FOSD} f^G$, $%
\pi_1+\pi_2=1$, and hence, $\pi_3=\ldots=\pi_n=0$. Since $f$ is
satisfactorily implementable, by Proposition \ref{prop:rank}, we get 
\begin{equation*}
\pi_1 - (n-1)\pi_2 = 0.
\end{equation*}
Using $\pi_1+\pi_2=1$ and simplifying, we get $\pi_1=1-1/n$. Hence, $f$ is
the Green-Laffont allocation rule, which is a contradiction.

The above proof along with Lemma \ref{lem:add1} also makes it clear that
among all ranking allocation rules which allocates probability to only $%
\pi_1 $ and $\pi_2$, the GL allocation rule is the unique r-Pareto optimal
allocation rule.
\end{proof}

We now provide a proof of Proposition \ref{prop:rpar}. \newline

\noindent Proof of Proposition \ref{prop:rpar}. \newline

\begin{proof}
Suppose $n \le 8$. Then, the GL allocation rule is an r-optimal allocation
rule by Corollary \ref{cor:comp}. Since $\pi_1+\pi_2=1$ in the GL allocation
rule, this implies that the GL allocation rule dominates every other
satisfactorily implementable ranking allocation rule an FOSD sense. By Lemma %
\ref{lem:fosd}, the GL allocation rule is the unique r-Pareto optimal
allocation rule.

Suppose $n > 8$. Then, Theorem \ref{theo:nrgl} implies that there is a
unique r-optimal allocation rule. Hence, no other satisfactorily
implementable ranking allocation rule can dominate this unique r-optimal
allocation rule in an FOSD sense. By Lemma \ref{lem:fosd}, this unique
r-optimal allocation rule is then r-Pareto optimal.

Finally choose an r-Pareto optimal allocation rule $(\pi_1,\ldots,\pi_n)$.
By definition of $\pi_1^*$, we have $\pi_1 \le \pi_1^*$. Further, if $\pi_1
< 1-1/n$, the GL allocation rule dominates this allocation rule in an FOSD
sense, and by Lemma \ref{lem:fosd}, it is not r-Pareto optimal. Hence, $%
\pi_1 \ge 1-1/n$.
\end{proof}


\end{document}